\Crefname{remark}{Remark}{Remarks}
\Crefname{observation}{Observation}{Observations}
\theoremstyle{plain}
\newtheorem{theorem}{Theorem}[section]
\newtheorem{lemma}[theorem]{Lemma}
\newtheorem{corollary}[theorem]{Corollary}
\newtheorem{observation}[theorem]{Observation}
\theoremstyle{definition}
\newtheorem{definition}[theorem]{Definition}
\theoremstyle{plain}
\theoremstyle{remark}
\newcommand{\namedref}[2]{\hyperref[#2]{#1~\ref*{#2}}}
\newcommand{\eps}{\varepsilon}
\newcommand{\s}{\mspace{1mu}}
\newcommand{\mybox}[1]{\mspace{2mu}{\setlength{\fboxsep}{1.5pt}\color{lightgray}\boxed{\color{black}\scriptstyle #1}}\mspace{2mu}}
\newcommand{\A}{\mathsf{A}}
\newcommand{\B}{\mathsf{B}}
\newcommand{\C}{\mathsf{C}}
\newcommand{\D}{\mathsf{D}}
\newcommand{\E}{\mathsf{E}}
\renewcommand{\L}{\mathsf{L}}
\newcommand{\Z}{\mathsf{Z}}
\newcommand{\U}{\mathsf{U}}
\newcommand{\M}{\mathsf{M}}
\renewcommand{\P}{\mathsf{P}}
\renewcommand{\O}{\mathsf{O}}
\newcommand{\X}{\mathsf{X}}
\newcommand{\Y}{\mathsf{Y}}
\newcommand{\bX}{\mybox{\X}}
\newcommand{\bMX}{\mybox{\M\X}}
\newcommand{\bOX}{\mybox{\O\X}}
\newcommand{\bPOX}{\mybox{\P\O\X}}
\newcommand{\bMOX}{\mybox{\M\O\X}}
\newcommand{\bMPOX}{\mybox{\M\P\O\X}}
\newcommand{\bZMPOX}{\mybox{\Z\M\P\O\X}}
\newcommand{\ext}{\ensuremath{\operatorname{Ext}}}
\NewCommandCopy{\oldpar}{\paragraph}
\renewcommand{\paragraph}[1]{\oldpar{\normalfont \textbf{#1}}}
\DeclareMathOperator{\re}{\mathcal R}
\DeclareMathOperator{\rere}{\overline{\mathcal R}}
\newenvironment{myabstract}
{\list{}{\listparindent 1.5em%
		\itemindent    \listparindent
		\leftmargin    1cm
		\rightmargin   1cm
		\parsep        0pt}%
	\item\relax}
{\endlist}
\newenvironment{mycover}
{\list{}{\listparindent 0pt
		\itemindent    \listparindent
		\leftmargin    1cm
		\rightmargin   1cm
		\parsep        0pt}%
	\raggedright
	\item\relax}
{\endlist}
\newcommand{\myemail}[1]{\,$\cdot$\, {\small #1}}
\newcommand{\myaff}[1]{\,$\cdot$\, {\small #1}\par\smallskip}
\definecolor{darkgreen}{rgb}{0,0.5,0}
\definecolor{darkred}{rgb}{0.4,0,0}
\newcommand\blfootnote[1]{
    \begingroup
    \renewcommand\thefootnote{}\footnote{#1}
    \addtocounter{footnote}{-1}
    \endgroup
}
\begin{document}

\begin{mycover}
    {\huge\bfseries\centering Tight Lower Bounds in the\\Supported LOCAL Model
    \par}
  \bigskip
  \bigskip
  \bigskip

  \textbf{Alkida Balliu}
  \myemail{alkida.balliu@gssi.it}
  \myaff{Gran Sasso Science Institute, L'Aquila, Italy}
  
  \textbf{Thomas Boudier}
  \myemail{thomas.boudier@gssi.it}
  \myaff{Gran Sasso Science Institute, L'Aquila, Italy}
  
  \textbf{Sebastian Brandt}
  \myemail{brandt@cispa.de}
  \myaff{CISPA Helmholtz Center for Information Security, Saarbr\"ucken, Germany}
  
  \textbf{Dennis Olivetti}
  \myemail{dennis.olivetti@gssi.it}
  \myaff{Gran Sasso Science Institute, L'Aquila, Italy}

\blfootnote{This work has been partially funded by the PNRR MIUR research project GAMING ``Graph Algorithms and MinINg for Green agents'' (PE0000013, CUP D13C24000430001), and by the research project RASTA ``Realtà Aumentata e Story-Telling Automatizzato per la valorizzazione di Beni Culturali ed Itinerari'' (Italian MUR PON Project ARS01 00540).}
  
\end{mycover}
\bigskip

\begin{myabstract}
    In this work, we study the complexity of fundamental distributed graph problems in the recently popular setting where information about the input graph is available to the nodes before the start of the computation.
    We focus on the most common such setting, known as the Supported LOCAL model, where the input graph---on which the studied graph problem has to be solved---is guaranteed to be a subgraph of the underlying communication network.

    Building on a successful lower bound technique for the LOCAL model called round elimination, we develop a framework for proving complexity lower bounds in the stronger Supported LOCAL model.
    Our framework reduces the task of proving a (deterministic or randomized) lower bound for a given problem $\Pi$ to the graph-theoretic task of proving non-existence of a solution to another problem $\Pi'$ (on a suitable graph) that can be derived from $\Pi$ in a mechanical manner.

    We use the developed framework to obtain substantial---and, in the majority of cases, asymptotically tight---Supported LOCAL lower bounds for a variety of fundamental graph problems, including maximal matching, maximal independent set, ruling sets, arbdefective colorings, and generalizations thereof.
    In a nutshell, for essentially any major lower bound proved in the LOCAL model in recent years, we prove a similar lower bound in the Supported LOCAL model.

    Our framework also gives rise to a new \emph{deterministic} version of round elimination in the LOCAL model: while, previous to our work, the general round elimination technique required the use of randomness (even for obtaining deterministic lower bounds), our framework allows to obtain deterministic (and therefore via known lifting techniques also randomized) lower bounds in a purely deterministic manner.
    Previously, such a purely deterministic application of round elimination was only known for the specific problem of sinkless orientation [SOSA'23].

\end{myabstract}

\clearpage
\tableofcontents
\clearpage
\section{Introduction}
Since its beginning, one of the cornerstones of research in distributed computation has been the study of locality, asking how distant information the nodes of a large computer network need to collect in order to solve a given computational problem.
While this fundamental question has been commonly studied in the idealized setting where nothing about the input instance is known before the start of the computation, in this paper we will consider a more general setting: what if information about the instance is known ahead of time, allowing for some preprocessing before the start of the actual computation?
Besides being a highly natural question from a theoretical perspective that has been studied in many areas of computer science, this question has a very concrete motivation in the context of distributed algorithms, making its study particularly relevant in this field.

In distributed computing, the common scenario is that of a network on which computational problems, often related to the structure of the network, have to be solved.
This network is usually assumed to be \emph{static}, not allowing for any (or only few) changes in the topology.
In contrast, on this fixed network, in many practical settings we want to solve \emph{a variety} of computational tasks that occur \emph{repeatedly} over time, often defined on subnetworks of the entire network (e.g., involving different subsets of the computational entities the network consists of).
Naturally, in such a setting, it is computationally desirable to exploit the consistency of the network topology by performing some preprocessing on the network itself \emph{once} and store the computed information in the nodes of the network to allow for a more rapid execution of the computational tasks that occur over time.

\paragraph{Formalization.}
While formalized already in 2013 (in the context of software defined networks) by Schmid and Suomela~\cite{schmidexploiting13} as so-called ``supported models'', the study of this highly natural setting has received increasing attention in the last five years~\cite{foersterpower19,foersterpreprocessing19,guptasparse22,supportedopodis,BalliuKKLOPPR0S23,HaeuplerWZ21}, covering ``supported'' versions of the LOCAL model, the CONGEST model, and others.
The most frequently studied model, commonly known under the name \emph{Supported LOCAL}, captures the setting discussed above as follows.
The input instance is given by a communication network $G$, called the \emph{support graph}, and a subgraph $G'$ of $G$, called the \emph{input graph}, on which some given problem has to be solved.
Each node has a unique identifier and is aware of which of its incident edges are part of $G'$, if any; otherwise it has no information about $G'$.
However, each node is aware of the entire communication network $G$ (i.e., each node has complete information about $G$'s topology, all assigned unique identifiers etc.).
The actual computation proceeds as in the standard LOCAL model of distributed computation on the communication network $G$.
For a formal introduction to the (Supported) LOCAL model, we refer the reader to \Cref{sec:preliminaries}.

We note that, in Supported LOCAL, the aforementioned preprocessing is modeled in a very powerful way: each node has \emph{complete} information about $G$ as opposed to partial information obtained by some actual preprocessing step.
However, this makes our main results only stronger as they are (round complexity) \emph{lower} bounds.

\paragraph{Significance for lower bounds.}
Besides its importance in capturing a natural and frequent setting, the Supported LOCAL model turns out to be also highly relevant for lower bounds in models related to locality (such as the LOCAL model), as we will explain in the following.

In the LOCAL model, recent years have seen a revolution with regards to round complexity lower bounds.
While 8 years ago only a handful of non-trivial lower bounds were known for the complexity of reasonably important problems, the introduction of a new lower bound technique (in a version tailored towards a specific problem~\cite{brandtlower16} in 2016, and in its general version~\cite{brandtautomatic19} in 2019) has advanced the state of the art for lower bounds dramatically.
Using this technique, called round elimination, a series of recent works has established substantial complexity lower bounds for many of the most important problems studied in the context of locality, such as sinkless orientation \cite{brandtlower16}, maximal matching and variants thereof \cite{Balliu2019, trulytight}, maximal independent sets and many variants of it, such as ruling sets and bounded-outdegree dominating sets \cite{rs-siam,outdegree-ds,Balliu0KO22}, arbdefective colorings \cite{Balliu0KO22}, and even variants of these problems on hypergraphs \cite{mm-hypergraphs}. 
While elegant and powerful, one surprising aspect of the round elimination technique is that, even for obtaining deterministic lower bounds, it is currently required to use randomness.
More specifically, to obtain such a deterministic bound with round elimination, first a randomized lower bound is proved, and then the randomized bound is lifted to a deterministic lower bound.
For an overview of the whole round elimination framework, including the final lifting of the obtained bound, we refer the reader to \cite{Balliu0KO22}.
While the obtained deterministic lower bounds are generally as good as can be expected (even if using randomness were not required), the requirement to first obtain a randomized lower bound is highly unsatisfactory, due to the following reasons.

First of all, while round elimination currently applies directly essentially only to the LOCAL model (although the obtained lower bounds indirectly imply, for instance, the same bounds for strictly weaker models such as the CONGEST model), a highly interesting direction of research is to extend this technique also to other models related to the notion of locality (such as SLOCAL, LCA, and VOLUME).\footnote{While such an extension to the Supported LOCAL model was already shown in \cite{BalliuKKLOPPR0S23} for a specific problem, the paper at hand is a first example of a problem-independent extension of round elimination to the Supported LOCAL model.}
It is far from clear whether, for such extensions, the roundabout way of first obtaining and then lifting randomized lower bounds is feasible (and provides enough power to obtain tight bounds), while an approach not involving randomness might be more promising (and does not risk degradation of the size of the bound).
Second, from the viewpoint of simplicity, a direct way of proving deterministic lower bounds via round elimination \emph{deterministically} that avoids cumbersome failure probability analyses and the application of a blackbox lifting theorem would be highly desirable.
We remark that this would essentially also remove the need to use randomness in round elimination for obtaining \emph{randomized} lower bounds: the aforementioned deterministic lower bounds directly imply randomized lower bounds (that are as good as currently achievable via ``randomized'' round elimination) via the interesting fact that for all problems to which round elimination is applicable, the randomized complexity on $n$-node graphs is at least the deterministic complexity on roughly $(\log n)$-sized instances \cite{ChangKP19,derandomization}.
Third, the sizes of the bounds obtainable with the current version of round elimination directly depend on the number of labels used in the descriptions of certain problems generated in a mechanical manner from the problem of interest (see~\cite[Theorem 7.1]{Balliu0KO22}). This is an inherent consequence of the randomized analysis used in the current approach and can lead to worse bounds than achievable with an approach that avoids randomness.

Interestingly, a direct deterministic way to obtain a deterministic (and, by implication, a randomized) lower bound via round elimination was recently shown for the sinkless orientation problem~\cite{BalliuKKLOPPR0S23}---via the Supported LOCAL model.\footnote{We note that the idea of providing a support graph as input to the nodes is highly related to the \emph{ID graph} technique from~\cite{brandt2021local} that was (independently) developed in the context of a different lower bound technique called \emph{Marks' technique}.}
While this was done in a highly problem-specific manner, this raises the question whether such an approach is feasible also for other problems and, more generally, whether a general \emph{deterministic} round elimination framework can be developed.
We remark that sinkless orientation is a problem that behaves very nicely under round elimination, which is likely the reason why round-elimination-related results are likely to be obtained first for this problem (see, e.g., \cite{brandtlower16})---obtaining similar results for problems with a much more complex behavior under round elimination has historically required the development of more generally applicable techniques~\cite{brandtautomatic19}.

\subsection{Our Contributions}
\paragraph{A deterministic round elimination framework and Supported LOCAL lower bounds.}
As one of our main conceptual contributions, building on the original round elimination framework from~\cite{brandtautomatic19,Balliu2019}, we develop a \emph{deterministic} round elimination framework that avoids using randomness.
To a large degree, this framework is based on a new technique for proving lower bounds in the Supported LOCAL model that is of independent interest.

We start our discussion by giving a brief overview of the current version of round elimination.
In a nutshell, in order to prove a lower bound for a given problem $\Pi$ in the LOCAL model, the original round elimination framework provides a simple blueprint: first derive a sequence of problems $\Pi = \Pi_0, \Pi_1, \Pi_2, \dots$ from $\Pi$ in a well-defined mechanical manner, and then show that a problem $\Pi_k$ with (ideally large) index $k$ cannot be solved in $0$ rounds with a randomized algorithm that allows only a certain failure probability.
From this, a randomized lower bound can be inferred (whose size depends on the size of the index $k$), which subsequently can be lifted to a deterministic lower bound via a blackbox lifting theorem \cite{Balliu0KO22}.
This framework is applicable to any problem $\Pi$ that is \emph{locally checkable}, which, roughly speaking, means that $\Pi$ can be described via local constraints (for a formal definition, see \Cref{sec:preliminaries}).
The class of locally checkable problems contains the vast majority of problems studied in the LOCAL model, including essentially all common colorings problems, maximal matching, maximal independent set, and many more.

Our first technical contribution consists in showing that, by replacing the LOCAL model with the Supported LOCAL model in the above blueprint, we can avoid the use of randomness.
More precisely, we prove that the fact that problem $\Pi_k$ from the aforementioned sequence cannot be solved in $0$ rounds by a \emph{deterministic} algorithm in the \emph{Supported LOCAL model} implies a \emph{deterministic} lower bound for $\Pi$ (whose size depends on $k$).
The following theorem, proved in \Cref{sec:re}, formalizes this statement (where $\Delta$ is the maximum degree of the support graph, and $n$ the number of nodes).

\begin{theorem}[Simplified version of \Cref{lem:re-works}]\label{lem:re-to-bounds-intro}
    Assume there is no deterministic $0$-round algorithm for $\Pi_k$ in the Supported LOCAL model.
    Then, any deterministic algorithm solving $\Pi$ in the Supported LOCAL model requires $\Omega(\min\{k,\log_{\Delta} n\})$ rounds.
\end{theorem}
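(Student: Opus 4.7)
The plan is to argue by a deterministic round-elimination induction on a fixed, carefully chosen support graph. First I would fix, once and for all, an $n$-vertex support graph $G$ of maximum degree $\Delta$ and girth $g = \Omega(\log_\Delta n)$; such graphs exist by standard constructions. Everything that follows takes place in the Supported LOCAL model with this particular $G$ as support. Crucially, since every node is given $G$ as part of the model, no probabilistic argument over random views or identifiers is needed: the enumerations over ``possible $T$-balls'' that appear in classical round elimination can be replaced by enumerations over the $T$-balls that actually occur in the fixed $G$, which nodes already know in full.

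The core technical step is the one-round elimination lemma: for every $i \ge 0$ and every $1 \le T \le g/2$, if $\Pi_i$ admits a $T$-round deterministic Supported LOCAL algorithm $\mathcal{A}$ on $G$, then $\Pi_{i+1}$ admits a $(T-1)$-round one. The construction is the classical round-elimination transformation adapted to the supported setting: each node $v$ outputs, from its $(T-1)$-ball, the $\Pi_{i+1}$-label corresponding to the set $S_v$ of all outputs that $\mathcal{A}$ would assign to $v$ across all completions of $v$'s $(T-1)$-ball to a $T$-ball that actually occur in $G$. The assumption $T \le g/2$ guarantees that these balls are trees, so $S_v$ is finite and well-defined, and the mechanical definition of $\Pi_{i+1}$ from $\Pi_i$ is tailored precisely so that the resulting set-valued labeling satisfies $\Pi_{i+1}$'s constraints. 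The main obstacle lies here: verifying this last claim cleanly, which is exactly what the round-elimination operators were designed to encode. The advantage over the classical randomized version is that our enumeration is exact, so no failure-probability analysis with alphabet-size-dependent slack is required.

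A companion zero-round extension lemma says: if $\Pi_i$ admits a $0$-round deterministic Supported LOCAL algorithm on $G$, then so does $\Pi_{i+1}$. This is the same construction instantiated with $T = 0$: the set $S_v$ collapses to the singleton output produced by the $0$-round algorithm, and by construction the constraints of $\Pi_{i+1}$ accept such outputs whenever the underlying $\Pi_i$-labels were valid.

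Finally I would chain these two facts. Suppose, for contradiction, that $\Pi$ is solvable in $T$ deterministic Supported LOCAL rounds on $G$ for some $T < \min\{k, g/2\}$. Applying the one-round elimination $T$ times, which is legal throughout because $T \le g/2$ keeps all relevant balls tree-like, yields a $0$-round algorithm for $\Pi_T$ on $G$. Applying the zero-round extension a further $k - T$ times yields a $0$-round algorithm for $\Pi_k$ on $G$, contradicting the hypothesis. Hence any deterministic Supported LOCAL algorithm for $\Pi$ requires $\Omega(\min\{k,\log_\Delta n\})$ rounds, as claimed.
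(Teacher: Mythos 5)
Your overall skeleton matches the paper's: fix a high-girth support graph, run a deterministic round-elimination step (the paper's \Cref{lem:speedup}), and handle $T<k$ by observing that a $0$-round algorithm for $\Pi_T$ can be pushed forward to a $0$-round algorithm for $\Pi_k$ (the paper does this by viewing the $0$-round algorithm as a $(2k-T)$-round one and applying the speedup lemma again). However, there are two genuine gaps in the step you yourself flag as ``the main obstacle.'' First, the object of the enumeration is wrong, or at best ambiguous. In the Supported LOCAL model every node knows the \emph{entire} topology of $G$ from round $0$, so every $T$-ball of $G$ is already determined; under your literal phrasing (``all completions of $v$'s $(T-1)$-ball to a $T$-ball that actually occur in $G$'') the set $S_v$ is a singleton and the construction is vacuous. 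What a node does \emph{not} know is which edges outside its $(T-1)$-neighborhood belong to the input graph $G'$. The correct definition (as in the paper's proof of \Cref{lem:speedup}) enumerates, for each edge $e=\{v,w\}$, over all input graphs $G^*\in\mathcal G'$ that agree with $G'$ on $Z_{T-1}(v)$, and collects the labels that the neighbor $w$ would output on $e$ in each of them.

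Second, the verification that the resulting set-valued labeling satisfies the constraint on the opposite side (the black constraint, in the paper's formalism) is precisely where the girth hypothesis is used, and you both defer this verification and misattribute the role of the girth. The girth is not needed to make $S_v$ ``finite and well-defined'' (that is automatic on a finite graph with a finite alphabet); it is needed to show that if each neighbor $w_i$ of $v$ can \emph{individually} be forced to output some bad label $L_i$ by an input graph $G'_i$, then there is a \emph{single} input graph $\hat G$ forcing all of them simultaneously, contradicting the correctness of $\mathcal A$. This requires the regions $Z_T(w_i)\setminus Z_{T-1}(v)$ to be pairwise disjoint and non-adjacent, which is exactly what girth at least $2T+4$ guarantees. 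Without this combination argument the one-round elimination lemma is unproven. (A minor further point: the paper's $\mathrm{RE}$ is the composition $\rere\circ\re$, alternating white and black algorithms, which is why the exact bound reads $\min\{2k,(g-4)/2\}$; your single-step accounting is harmless for the asymptotic statement but would need to be reconciled with the actual definition of the lower bound sequence.)
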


\Cref{lem:re-to-bounds-intro} generalizes the special case of this theorem that was proved for the sinkless orientation problem in~\cite{BalliuKKLOPPR0S23}. 

In order to make use of \Cref{lem:re-to-bounds-intro} for proving lower bounds in the Supported LOCAL model and, by consequence, developing our deterministic round elimination framework for the LOCAL model, we provide a characterization of deterministic $0$-round-solvability in the Supported LOCAL model.
More precisely, in \Cref{sec:new-technique}, for any locally checkable problem $\Psi$, we show how to define a problem $\mathrm{lift}(\Psi)$ satisfying the following surprising property.
\begin{theorem}[Simplified version of \Cref{th:lift-bipartite}]\label{th:lift-intro}
    Problem $\Psi$ can be solved in $0$ rounds in the Supported LOCAL model on a support graph $G$ if and only if there \emph{exists} a solution on $G$ for $\mathrm{lift}(\Psi)$.
\end{theorem}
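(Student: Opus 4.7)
The plan is to establish the equivalence by identifying a deterministic $0$-round Supported LOCAL algorithm with a purely combinatorial labeling of $G$, and to define $\mathrm{lift}(\Psi)$ to be exactly the locally checkable problem asking for such a labeling. First I would pin down the algorithmic side: once $G$ and its identifier assignment are fixed, the only part of any node's view that varies with the actual input graph $G'$ is the set of its incident edges that happen to lie in $G'$. Hence a deterministic $0$-round algorithm reduces, on the fixed $G$, to a collection of functions $\{f_v\}_v$, one per node, where $f_v : 2^{E_G(v)} \to \Sigma$ sends the observed subset $S_v$ to the output label. In the bipartite edge-labeling formalism that the theorem is stated in, the analogous reduction applies to edges: since both endpoints of an edge $e$ know whether $e \in G'$ but in general know nothing shared beyond $G$ and IDs, the edge label can only take one of two values, depending on $\mathbf{1}[e \in G']$.

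Second, I would define $\mathrm{lift}(\Psi)$ on $G$ accordingly: each edge $e$ carries a pair $(\ell_e^{\mathrm{in}}, \ell_e^{\mathrm{out}})$ in $\Sigma \times \Sigma$ (the label it will take when $e \in G'$, respectively $e \notin G'$), and each node $v$ carries a function $f_v$. The node constraint at $v$ reads: for every $S \subseteq E_G(v)$, the pair consisting of $f_v(S)$ together with the multiset $\{\ell_e^{\mathrm{in}} : e \in S\}$ must form an allowed $\Psi$-configuration for a degree-$|S|$ node. Although the constraint universally quantifies over all $S$, it only inspects the label of $v$ and of its incident edges, so it is a bona fide locally checkable constraint on $G$ with a (large but finite) bounded alphabet.

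Third, the two implications are then essentially immediate from these definitions. For the forward direction, given an algorithm $\mathcal{A}$ that solves $\Psi$ in $0$ rounds on every $G' \subseteq G$, one reads off the $f_v$'s and the pairs $(\ell_e^{\mathrm{in}}, \ell_e^{\mathrm{out}})$ directly from $\mathcal{A}$'s behavior; correctness on each choice of $G'$ is exactly the statement that every $S \subseteq E_G(v)$ yields a valid configuration, so the extracted labeling solves $\mathrm{lift}(\Psi)$. Conversely, given a $\mathrm{lift}(\Psi)$-labeling, define a $0$-round algorithm by having each node evaluate its function label at its observed $S_v$ and each edge in $G'$ be assigned its $\ell_e^{\mathrm{in}}$; the node constraints of $\mathrm{lift}(\Psi)$ unpacked at $S_v = E_{G'}(v)$ say exactly that the resulting labeling satisfies $\Psi$ on $G'$.

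The main obstacle I expect is not the two implications themselves but faithfully packaging the construction inside the bipartite locally checkable formalism used in \Cref{th:lift-bipartite}: one must respect the edge-ownership conventions (which side of the bipartition ``chooses'' each label), ensure that the node constraints really only look at a bounded radius, and handle degrees correctly when $\Psi$'s allowed configurations are indexed by degree. The alphabet of $\mathrm{lift}(\Psi)$ blows up exponentially in the maximum degree of $G$, which is harmless for a purely existential statement on a fixed $G$ but must be accounted for if one wants to carry the lift through the round elimination sequence; getting the formalism right, not the underlying idea, is where the bookkeeping lives.
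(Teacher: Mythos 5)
There is a genuine gap here, and it sits exactly where the paper's proof does its real work. Your reduction of a $0$-round algorithm to combinatorial data on $G$ is too coarse: in the bipartite formalism of \Cref{th:lift-bipartite}, the label of an edge $e=\{v,u\}$ is chosen by its white endpoint $v$ as a function of the \emph{entire} set $S_v \subseteq E_G(v)$ of $v$'s incident edges that lie in $G'$, not merely of $\mathbf{1}[e \in G']$. So your claim that ``the edge label can only take one of two values'' is false, and a single pair $(\ell_e^{\mathrm{in}},\ell_e^{\mathrm{out}})$ cannot be ``read off directly from $\mathcal{A}$'s behavior'': across the different admissible input graphs containing $e$, the algorithm may output many different labels on $e$. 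This is precisely why the paper's $\mathrm{lift}$ (\Cref{def:lift}) assigns to each edge a \emph{set} $L_e$ of labels (all outputs the algorithm can produce on $e$ over all inputs), with an existential quantifier on the white side (for every $\Delta'$-subset of incident edges there \emph{exists} a choice from the sets forming a white configuration) and a universal quantifier on the black side.

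The second, related gap is that your $\mathrm{lift}(\Psi)$ never enforces the black constraint $C_B$ of $\Psi$: your node constraint only checks white configurations at $v$ against the labels of $v$'s own incident edges. The hardest step of the paper's proof is exactly this missing piece: showing that the sets around a black node $u$ satisfy that \emph{every} choice $\ell_1 \in L_{e_{i_1}},\dots,\ell_{r'} \in L_{e_{i_{r'}}}$ is a valid black configuration. This is not tautological; it requires constructing a single adversarial input graph $G'$ that simultaneously forces each white neighbor $v_j$ of $u$ to output its designated label $\ell_j$ on $e_{i_j}$, which is possible because the white nodes' input sets can be fixed independently. Without both ingredients --- sets rather than pairs on the edges, and the adversarial-input argument for the black constraint --- neither direction of the equivalence goes through. (A fully ``tautological'' encoding in which each node's label is its entire local decision function could in principle be made correct, but you would still need a radius-$1$ constraint at black nodes quantifying over all input choices of their white neighbors, i.e., the same adversarial argument in disguise, and you would lose the right-closed label-set structure that the paper relies on downstream to actually prove non-existence of solutions to the lifted problems.)
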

By combining \Cref{lem:re-to-bounds-intro} and \Cref{th:lift-intro} (where we set $\Psi := \Pi_k$), the task of proving lower bounds in the Supported LOCAL model for some problem $\Pi$ is reduced to the task of showing that another problem $\Pi'$ admits no solution in the support graph(s). This dramatically simplifies the task of proving Supported LOCAL lower bounds, since we do not need to think about distributed algorithms executed on subgraphs, but instead can obtain lower bounds by answering purely graph-theoretic questions.

Besides providing a highly useful technique for proving lower bounds in the Supported LOCAL model, the combination of \Cref{lem:re-to-bounds-intro} and \Cref{th:lift-intro} also provides the desired deterministic round elimination framework (for the LOCAL model), as lower bounds in the stronger Supported LOCAL model immediately apply also to the weaker LOCAL model.

While all of the aforementioned lower bounds are deterministic, in \Cref{sec:rand} we provide a lifting theorem for the Supported LOCAL model that allows us to turn deterministic lower bounds into randomized lower bounds.

\begin{theorem}[Simplified version of \Cref{lem:derand-supported}]\label{lem:rand-bounds-intro}
    Let $D_\Pi(n)$ denote the deterministic complexity of $\Pi$ in the Supported LOCAL model, and $R_\Pi(n)$ the randomized complexity of $\Pi$ in the Supported LOCAL model.
    Then, 
    \[
        D_\Pi(n) \le R_\Pi(2^{O(n^2)}).
    \]
\end{theorem}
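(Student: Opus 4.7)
The plan is to adapt the Chang--Kopelowitz--Pettie derandomization argument from the standard LOCAL model to the Supported LOCAL setting. Suppose $\mathcal{A}$ is a randomized Supported LOCAL algorithm that solves $\Pi$ on every $N$-node instance with failure probability at most $1/N$ in $R_\Pi(N)$ rounds, where $N := 2^{c n^2}$ for a sufficiently large constant $c$. The goal is to extract from $\mathcal{A}$ a deterministic algorithm that solves every $n$-node Supported LOCAL instance in at most $R_\Pi(N)$ rounds.

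First, I would bound the number of distinct $n$-node Supported LOCAL instances with identifiers from the usual polynomial range $[n^{O(1)}]$: such an instance is specified by a choice of identifiers together with, for each unordered pair of nodes, one of three states (non-edge in $G$, support-only edge in $G$, or edge in the input graph $G'$). This yields at most $M = 2^{O(n^2)}$ such instances. I then build a universal graph $H$ as the disjoint union of all these $M$ small instances, assigning every node a fresh global identifier in $[N]$; this is feasible because $N \ge nM$ for $c$ large enough. A convenient encoding is to use pairs $(\iota,\mathrm{id})$, where $\iota$ is the canonical index of the small instance in a fixed enumeration and $\mathrm{id}$ is the node's identifier inside that instance, so that each node can compute its own $H$-identifier purely from its local data and its complete knowledge of the support graph.

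Running $\mathcal{A}$ on $H$ succeeds with probability at least $1 - 1/N > 0$, so there exists a fixed random string $r^{\star}$ under which $\mathcal{A}$ produces a valid output on every component of $H$ simultaneously, i.e., on every possible $n$-node Supported LOCAL instance. Hard-coding $r^{\star}$ yields the desired deterministic algorithm: on input $(G,G')$, each node computes the canonical index of $(G,G')$ from the support graph, derives its $H$-identifier, and simulates $\mathcal{A}$ under $r^{\star}$ for $R_\Pi(N)$ rounds. Because $(G,G')$ appears as an isolated component of $H$, the $R_\Pi(N)$-round view of any node in $(G,G')$ coincides exactly with its view in $H$, so the simulated output matches that of $\mathcal{A}$ on $H$ and is valid by construction.

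The main obstacle is essentially a bookkeeping one: we must ensure that each node of the small instance can locally reconstruct its role in $H$, which is crucial because the IDs in $H$ differ from those in the small instance. This is made trivial by the Supported LOCAL assumption, since every node sees the entire support graph and all of its identifiers at the start, so the canonical indexing and identifier translation require no communication whatsoever. Minor adjustments handle the standard conventions for the failure probability (e.g.\ $1/\mathrm{poly}(N)$ instead of $1/N$) and the exact polynomial range of identifiers, but these only affect the constant in the exponent and leave the final bound $D_\Pi(n) \le R_\Pi(2^{O(n^2)})$ unchanged.
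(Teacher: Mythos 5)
Your overall strategy (adapting the Chang--Kopelowitz--Pettie derandomization, counting Supported LOCAL instances as $2^{O(n^2)}$, and exploiting positive success probability to fix the randomness) is the same as the paper's, and your instance count matches the paper's bound of $2^{3n^2}$. However, there is a genuine gap in how you fix the randomness. Your deterministic algorithm requires each node to compute, at round $0$, the canonical index $\iota$ of its instance $(G,G')$ in order to determine its identifier in the universal graph $H$ and simulate $\mathcal{A}$ there. This index depends on the \emph{entire} input graph $G'$, but in the Supported LOCAL model a node knows only the support graph $G$ globally; of $G'$ it knows merely which of its \emph{own} incident edges belong to $G'$. Two input graphs that agree in a node's local view but differ elsewhere yield different indices $\iota$, so the node cannot compute $\iota$ without communication, and different nodes of the same instance would end up simulating nodes lying in different components of $H$, making the simulation inconsistent. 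Learning $G'$ first would cost up to $\mathrm{diam}(G)$ rounds and destroy the round bound.

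The paper avoids this by following the original argument more closely: instead of giving each (instance, node) pair fresh randomness, it fixes a single function $f$ from identifiers (together with the globally known $G$, after normalizing identifiers to $\{1,\dots,n\}$) to bit strings, so that a node's substitute random bits are computable from its round-$0$ knowledge and, in particular, do not depend on $G'$. A union bound over all at most $2^{3n^2}$ instances --- each failing with probability at most $1/N$ for $N = 2^{3n^2}$ --- shows that some such $f$ succeeds on every instance simultaneously, i.e., for every choice of $G$, of identifier assignment, and of $G'$; the ``lie about $n$'' needed to invoke the $N$-node guarantee is implemented by letting all nodes consistently imagine an extra component of size $N-n$ in the support graph. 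Your disjoint-union construction is a legitimate way to package the union bound, but the components of $H$ may only be indexed by data that every node can reconstruct locally at round $0$, which rules out indexing by $G'$. To repair your proof, keep the per-identifier (rather than per-component) assignment of random strings and carry out the union bound over the choices of $G'$ explicitly.
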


\Cref{lem:rand-bounds-intro} constitutes an extension of the celebrated lifting theorem from~\cite{ChangKP19,derandomization} to the Supported LOCAL model.

Using \Cref{lem:rand-bounds-intro}, we can obtain randomized lower bounds (in Supported LOCAL, and therefore also LOCAL) from our deterministic framework.
More precisely, combining \Cref{lem:re-to-bounds-intro,th:lift-intro,lem:rand-bounds-intro}, we obtain the following theorem, summarizing the above discussion.

\begin{theorem}[Simplified version of \Cref{thm:approach-bipartite}]\label{thm:approach-intro}
    Suppose there exists a support graph $G$ on which no solution for $\mathrm{lift}(\Pi_k)$ exists.
    Then, any deterministic algorithm solving $\Pi$ on $G$ requires $\Omega(\min\{k,\log_{\Delta} n\})$ rounds and any randomized algorithm solving $\Pi$ requires $\Omega(\min\{k,\log_{\Delta} \log n\})$ rounds in the Supported LOCAL model.
\end{theorem}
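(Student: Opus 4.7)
The plan is to chain the three already-established results in a completely mechanical way, with no additional combinatorics required. Set $\Psi := \Pi_k$ and invoke \Cref{th:lift-intro}: the hypothesis that no solution to $\mathrm{lift}(\Pi_k)$ exists on the support graph $G$ is exactly equivalent to the statement that $\Pi_k$ cannot be solved in zero deterministic rounds in the Supported LOCAL model on $G$. In particular, there is no deterministic $0$-round algorithm for $\Pi_k$ in Supported LOCAL on instances with support graph $G$, which is precisely the hypothesis needed for \Cref{lem:re-to-bounds-intro}. Applying that lemma yields the deterministic lower bound of $\Omega(\min\{k,\log_\Delta n\})$ rounds for $\Pi$ on $G$, giving the first half of the conclusion.

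For the randomized half, I would argue by contradiction using the lifting theorem \Cref{lem:rand-bounds-intro}. Suppose that some randomized algorithm solves $\Pi$ in the Supported LOCAL model in $R_\Pi(n) = o(\min\{k, \log_\Delta \log n\})$ rounds. Then \Cref{lem:rand-bounds-intro} would produce a deterministic algorithm of complexity
\[
D_\Pi(n) \;\le\; R_\Pi(2^{O(n^2)}) \;=\; o\!\left(\min\{k,\; \log_\Delta \log 2^{O(n^2)}\}\right) \;=\; o\!\left(\min\{k,\; \log_\Delta n\}\right),
\]
since $\log\log 2^{O(n^2)} = \Theta(\log n)$ absorbs the constant into the $\log_\Delta$. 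This directly contradicts the deterministic lower bound obtained in the previous paragraph, yielding the randomized bound $\Omega(\min\{k,\log_\Delta \log n\})$.

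The one bookkeeping issue—and the only point that requires any genuine care—is reconciling the existential quantifier in the hypothesis (``there exists a support graph $G$'') with the $n$-parameterized asymptotic statements in \Cref{lem:re-to-bounds-intro} and \Cref{lem:rand-bounds-intro}. The natural reading is that the hypothesis should produce, for each sufficiently large $n$, a support graph $G_n$ of appropriate size with maximum degree $\Delta$ on which $\mathrm{lift}(\Pi_k)$ has no solution; this is how the formal statement \Cref{thm:approach-bipartite} is presumably phrased. Assuming this (or, equivalently, a standard padding argument that blows up one fixed $G$ into a family of appropriate sizes while preserving the non-existence of a lift-solution), the three invocations plug into each other without further work. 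I do not anticipate any substantive obstacle: all technical content lives in \Cref{lem:re-to-bounds-intro}, \Cref{th:lift-intro}, and \Cref{lem:rand-bounds-intro}, and the present theorem is their immediate corollary.
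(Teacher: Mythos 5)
Your proposal is correct and follows essentially the same route as the paper: the formal proof of \Cref{thm:approach-bipartite} likewise chains \Cref{th:lift-bipartite} (to convert non-existence of a lift-solution into non-$0$-round-solvability), \Cref{lem:re-works} (to get the deterministic round lower bound from the girth), and \Cref{lem:derand-supported} (stated there in the equivalent direct form $R_\Pi(n) \ge D_\Pi(\sqrt{\tfrac{1}{3}\log n})$ rather than your contrapositive). You also correctly identified the only real subtlety—the full statement quantifies over a graph family with one member of suitable size and girth for each $n$, padded by a disjoint tree to reach exactly $n$ nodes—which is exactly how the paper handles it.
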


We remark that \Cref{lem:re-to-bounds-intro,th:lift-intro,lem:rand-bounds-intro,thm:approach-intro} are simplified versions of the actual theorems we prove.
In particular, we obtain the theorems in a more general setting, covering also hypergraphs and bipartite graphs.

After giving an overview of our contributions regarding the development of new techniques, we now turn our focus to results for concrete (classes of) problems.
In a nutshell, for essentially all major lower bounds proved in the LOCAL model in recent years \cite{Balliu2019, trulytight, outdegree-ds, rs-siam,Balliu0KO22}, we obtain similar lower bounds in the stronger Supported LOCAL model, including lower bounds for maximal matching, maximal independent set, ruling sets, arbdefective coloring, and generalizations of these problems, providing ample evidence for the usefulness of our new technique.
However, we would like to emphasize that, while this technique provides a clear blueprint for obtaining Supported LOCAL lower bounds, each such lower bound still requires proving an existential graph-theoretic statement, which is far from trivial.

\paragraph{Maximal matching and variants.}
An $x$-maximal $y$-matching of a graph $G =(V,E)$ is a subset $M \subseteq E$ of edges satisfying that each node is incident to at most $y$ edges of $M$, and that, if a node $v$ is not incident to any edge of $M$, then at least $\min\{\deg(v),\Delta-x\}$ neighbors of $v$ are incident to an edge of $M$. This family of problems includes maximal matching (by setting $x=0$ and $y=1$), but it also includes many variants of it, for example, the problem of computing some relaxed variant of matching, where nodes are allowed to be matched multiple times, say $\log \Delta$, and unmatched nodes need to have at least $\Delta/2$ matched neighbors.
This class of problems was studied in \cite{Balliu2019}, where non-tight bounds were shown in the LOCAL model. Later, \cite{trulytight} provided tight bounds, for any values $x$ and $y$. 
The following theorem shows that the same bounds hold in the stronger Supported LOCAL model as well. For $\Delta' = \Theta(\Delta)$, this bound matches exactly the one known for LOCAL. This theorem is proved in \Cref{sec:matching}.

\begin{theorem}[Simplified version of \Cref{thm:lb-xy-matching}]\label{thm:lb-xy-matching-intro}
    On bipartite $2$-colored support graphs of degree $\Delta$, when the input graph has degree $\Delta'$, the $x$-maximal $y$-matching problem requires $\Omega(\min\{(\Delta'-x) / y,\log_\Delta n\})$ rounds deterministically and $\Omega(\min\{(\Delta'-x) / y,\log_\Delta \log n\})$ with randomization.
\end{theorem}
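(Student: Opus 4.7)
The plan is to invoke the framework of \Cref{thm:approach-intro} with $\Pi$ being the $x$-maximal $y$-matching problem (formulated as a locally checkable labeling on bipartite $2$-colored graphs, where one side represents edges incident to each node and the other side represents the adjacency constraints). The target index will be $k = \Theta((\Delta'-x)/y)$, matching the LOCAL lower bound from \cite{trulytight}. Given such a $k$, the theorem's conclusion reduces to the purely graph-theoretic task of exhibiting a support graph $G$ of degree $\Delta$ (with $\Delta' \le \Delta$) and number of nodes $n$ for which $\mathrm{lift}(\Pi_k)$ admits no solution. Since the statement claims $\Omega(\min\{(\Delta'-x)/y, \log_\Delta n\})$, it suffices to realize this for some $n$ satisfying $k \le c\log_\Delta n$; choosing $G$ to be a high-girth $\Delta$-regular bipartite graph with girth $\gg k$ ensures both that $n$ is large enough and that local arguments on $G$ behave as if on the infinite $\Delta$-regular tree.

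Next I would unroll the round elimination sequence for $x$-maximal $y$-matching. In line with the analysis of \cite{Balliu2019,trulytight}, at each round one round elimination step decreases an integer ``slack'' parameter measuring how close a configuration is to forcing a node to be matched; concretely, $\Pi_i$ is a labeling where a node is allowed either to be matched using at most $y$ of its incident edges or to be unmatched provided at least $\min\{\deg(v),\Delta-x-iy\}$ of its neighbors are matched. Thus after $k = \lfloor (\Delta'-x)/y \rfloor - O(1)$ steps, $\Pi_k$ still has strictly positive slack but no ``free'' moves left: any $0$-round solution would have to commit each input-edge to a matched/unmatched label with no communication.

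The heart of the proof is to show that $\mathrm{lift}(\Pi_k)$ has no solution on $G$. By \Cref{th:lift-intro}, a $0$-round Supported LOCAL solver for $\Pi_k$ on $G$ corresponds exactly to a valid labeling of $G$ by the lifted problem. Because $G$ is $\Delta$-regular bipartite with girth exceeding $k$, a local counting argument on the two color classes yields a contradiction: in any putative lift-solution, the fraction of matched edges on one side would have to exceed what the other side can accommodate under the reduced capacity budget of $\Pi_k$. Formally, I would fix an arbitrary such labeling, sum the matching-degree constraints over each side of $G$, and derive $|E(G)|$ satisfying two incompatible inequalities — one enforcing at least $(\Delta - x - ky)$ matched neighbors for every unmatched node, the other bounding the number of matched edges by $y \cdot |V(G)|/2$. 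For $k$ just below $(\Delta'-x)/y$ these inequalities become simultaneously infeasible, establishing non-existence. The main obstacle will be carefully calibrating this counting so that it works for every $y$ and every support graph compatible with input degree $\Delta'$, not just the regular case; this likely requires passing to an appropriate subgraph or averaging argument, and mirroring the ``fixing'' step used in the LOCAL proof of \cite{trulytight}.

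Finally, invoking \Cref{thm:approach-intro} with the support graph $G$ constructed above yields the deterministic lower bound $\Omega(\min\{(\Delta'-x)/y, \log_\Delta n\})$. The randomized bound $\Omega(\min\{(\Delta'-x)/y, \log_\Delta \log n\})$ then follows automatically from the same theorem, which internally applies the Supported-LOCAL lifting theorem (\Cref{lem:rand-bounds-intro}) to convert the deterministic bound on graphs of size $2^{O(n^2)}$ into a randomized bound on $n$-node graphs, with the extra logarithm appearing as usual.
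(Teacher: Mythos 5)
Your overall architecture matches the paper's: take the relaxation sequence $\Pi_{\Delta'}(x,y)\to\Pi_{\Delta'}(x+ky,y)$ from \cite{trulytight} with $k=\lfloor(\Delta'-x)/y\rfloor-2$, pay the $2$ rounds from \Cref{lem:xy-to-re}, instantiate \Cref{thm:approach-bipartite} on a high-girth $\Delta$-regular bipartite graph (the bipartite double cover of the graphs from \Cref{lem:graph-family}), and reduce everything to showing that $\mathrm{lift}_{\Delta,\Delta}(\Pi_{\Delta'}(x+ky,y))$ has no bipartite solution on that graph; the randomized bound then comes for free from the framework. Up to this point your proposal is correct and identical in spirit to \Cref{sec:matching}.

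The gap is in the core step, the non-existence of a lift-solution. Your counting is phrased in the semantics of the original matching problem: ``matched edges'' versus ``unmatched nodes that need at least $\Delta-x-ky$ matched neighbors,'' capped by $y\cdot|V(G)|/2$. But a lift-solution assigns to each edge a right-closed \emph{set} of labels of $\Pi_k$, and there is no notion of a matched or unmatched node there: a white node can satisfy its constraint (for every $\Delta'$-subset of its $\Delta$ edges there \emph{exists} a valid choice) using only the pointer and sink configurations $\X^y\s\O^{\Delta'-1-y}\s\P$ and $\X^y\s\Z\s\O^{\Delta'-y-1}$, never committing any edge to $\M$; so your inequality forcing many matched neighbors of unmatched nodes has no counterpart, and the cap on matched edges alone yields no contradiction. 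Moreover, your inequalities involve only $\Delta-x-ky$ and never exploit the gap between the support degree $\Delta$ and the input degree $\Delta'$, yet the full statement (\Cref{thm:lb-xy-matching}) explicitly assumes $\Delta\ge c\Delta'$ for a large constant $c$, and this gap is exactly what drives the paper's contradiction. The paper instead counts edges whose label-sets contain the pointer label $\P$: the white constraint of the lift forces at least $\Delta-\Delta'+1$ edges per white node whose sets contain $\M$ or $\P$ (since every white configuration of $\Pi_k$ uses a label from $\{\M,\P,\Z\}$, at most $\Delta'-1$ incident sets can be $\bX$ or $\bOX$); the black constraint caps $\M$-containing edges at $y$ per black node (\Cref{lem:ub-m}), so most of those edges must contain $\P$ (\Cref{lem:xy-lb-p}); but the black constraint also caps $\P$-containing edges at $\Delta'-1$ per black node (\Cref{lem:xy-ub-p}), and for $\Delta=5\Delta'$ the two bounds on the number of $\P$-edges are incompatible. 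You would need to redo your counting at the level of label-sets along these lines; as written, the claimed ``simultaneously infeasible'' inequalities do not arise.
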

\noindent In \cite{supportedopodis} the authors proved that if maximal matching on bipartite $2$-colored graphs can be solved in $o(\Delta')$ rounds, then it can be solved in $o(\Delta')$ rounds on all graphs. They left as an open question to determine the complexity of maximal matching on $2$-colored graphs. Observe that \Cref{thm:lb-xy-matching-intro} solves the open question of \cite{supportedopodis} by answering it negatively.

\paragraph{Arbdefective coloring.}
The $\alpha$-arbdefective $c$-coloring problem requires to color the nodes with $c$ colors, and output an orientation of the edges that connect nodes of the same color, such that each node has at most $\alpha$ outgoing edges. Arbfective colorings are a basic building block that has been extensively used to develop many algorithms for proper coloring \cite{MausTonoyan20,fraigniaud16local,barenboim16sublinear}.
It is known that in the LOCAL model, this problem can be solved in $O(\Delta + \log^* n)$ rounds when $(\alpha+1)c > \Delta$, while it requires $\Omega(\log_\Delta n)$ for deterministic algorithms and $\Omega(\log_\Delta \log n)$ for randomized ones when $(\alpha+1)c \le \Delta$ \cite{Balliu0KO22}.
The following theorem, proved in \Cref{sec:arb}, shows that a similar statement holds in the Supported LOCAL model as well, where we want to compute an $\alpha$-arbdefective $c$-coloring problem on a given input graph $G'$ of degree $\Delta'$.
\begin{theorem}[Simplified version of \Cref{thm:arb-coloring}]\label{thm:arb-coloring-intro}
    On support graphs of degree $\Delta$, when the input graph has degree $\Delta'$, if $(\alpha+1)c \le \min\{\Delta', \epsilon \Delta / \log \Delta \}$ for a small-enough constant $\epsilon$, then the $\alpha$-arbdefective $c$-coloring problem requires $\Omega(\log_\Delta n)$ deterministic rounds and $\Omega(\log_\Delta \log n)$ randomized rounds.
\end{theorem}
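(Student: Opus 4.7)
The plan is to instantiate the framework of \Cref{thm:approach-intro} (i.e., the full \Cref{thm:approach-bipartite}) with $\Pi$ being the $\alpha$-arbdefective $c$-coloring problem, aiming for $k = \Theta(\log_\Delta n)$. Concretely, I would work with the two-sided bipartite formulation of arbdefective coloring that is standard in the round elimination literature (each edge is assigned an orientation encoded by a pair of labels on its endpoints, plus a color, with the usual outdegree and monochromatic-edge constraints). Given $k$, the task then reduces, by \Cref{thm:approach-intro}, to exhibiting a support graph $G$ of maximum degree $\Delta$ on $n$ nodes for which $\mathrm{lift}(\Pi_k)$ admits no solution.

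The first step is to write down the round-elimination sequence $\Pi_0=\Pi,\Pi_1,\dots,\Pi_k$ mechanically. For arbdefective coloring this sequence has been analyzed before in the LOCAL setting \cite{Balliu0KO22}: each step roughly relaxes the color/arbdefect constraints but forces the label set to encode more and more fine-grained ``orientation'' information, and the number of active labels in $\Pi_i$ stays polynomially bounded in the original parameters as long as we keep $i$ below some threshold. I would track the label sets through the sequence, verifying that for $i \le k = \Theta(\log_\Delta n)$ the constraints of $\Pi_i$ can be stated in terms of a label alphabet of size at most $\mathrm{poly}((\alpha+1)c \cdot \log\Delta)$. This is precisely where the extra $\log\Delta$ factor in the hypothesis $(\alpha+1)c \le \epsilon\Delta/\log\Delta$ will enter: it is exactly the slack we need so that, after unfolding the lift operation from \Cref{th:lift-intro}, the resulting per-node constraint in $\mathrm{lift}(\Pi_k)$ still restricts each node to a family of ``admissible'' local configurations whose size is a small fraction of $\Delta$.

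Second, I would choose the support graph $G$ to be a $\Delta$-regular bipartite graph of high girth on $n$ nodes (with a suitably chosen 2-coloring of the sides, to match the bipartite framework), together with an input subgraph of degree $\Delta'$ sitting inside it. The high-girth property ensures that locally $G$ looks like a tree, so no small topological structures can be exploited to produce a solution to $\mathrm{lift}(\Pi_k)$. The heart of the argument is then a counting/potential argument: each node of $G$ must be assigned, globally and consistently, an element of the admissible label family; but the constraints of $\mathrm{lift}(\Pi_k)$ couple the labels of neighboring nodes in a way that forces, on a high-girth $\Delta$-regular graph, an exponentially small fraction of feasible assignments per node, and a pigeonhole/averaging argument over $G$ shows this is impossible once the number of admissible labels drops below $\Delta$. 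The condition $(\alpha+1)c \le \Delta'$ is used here to ensure that $\Pi_0$ is already nontrivial on the input side (so that the sequence is well-defined and does not collapse), while $(\alpha+1)c \le \epsilon\Delta/\log\Delta$ ensures that after $\Theta(\log_\Delta n)$ round eliminations the counting bound breaks.

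The main obstacle I expect is this final graph-theoretic non-existence argument for $\mathrm{lift}(\Pi_k)$ on $G$: round elimination for arbdefective coloring is notoriously delicate because the sequence $\Pi_i$ interleaves the color constraint with the outdegree/orientation constraint, and one must carefully maintain an invariant on the label sets across all $k$ steps, while simultaneously tracking what the lift operation does to them. Once the deterministic $\Omega(\log_\Delta n)$ bound is established via \Cref{lem:re-to-bounds-intro} and \Cref{th:lift-intro}, the randomized bound $\Omega(\log_\Delta \log n)$ follows immediately by applying the Supported LOCAL derandomization result \Cref{lem:rand-bounds-intro}, exactly as in the combined statement of \Cref{thm:approach-intro}.
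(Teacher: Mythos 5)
The overall scaffolding of your plan (reduce to non-existence of a solution for $\mathrm{lift}(\Pi_k)$ on a suitable support graph, then derandomize via the lifting theorem) matches the paper, but two of your central steps are off. First, there is nothing to ``track through the sequence'': the paper encodes $\alpha$-arbdefective $c$-coloring as the problem $\Pi_\Delta((\alpha+1)c)$ of \Cref{def:arbdef}, which by \Cref{lem:arbdef-fp} is a \emph{fixed point} under round elimination whenever $(\alpha+1)c\le\Delta$. The lower bound sequence is constant and of infinite length, so $k$ is limited only by the girth; your picture of a sequence that accumulates ever finer orientation information with a polynomially growing alphabet, kept under control only for $i$ below a threshold, does not describe this problem, and the $\log\Delta$ in the hypothesis has nothing to do with alphabet sizes.

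Second, and more seriously, your non-existence argument for $\mathrm{lift}(\Pi_k)$ on $G$ invokes only high girth plus an unspecified counting/pigeonhole claim, and this cannot work as stated: a high-girth $\Delta$-regular graph may be bipartite, in which case every node can compute a proper $2$-coloring of $G$ from the support graph alone, which yields a solution to the lift whenever $(\alpha+1)c\ge 2$. The property the paper actually needs is the \emph{low independence number} (hence chromatic number $\Theta(\Delta/\log\Delta)$) of the graphs from \Cref{lem:graph-family}; this is precisely where the $\epsilon\Delta/\log\Delta$ term comes from, and the paper explicitly remarks that this term is necessary because those support graphs are $O(\Delta/\log\Delta)$-colorable in zero rounds. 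The technical heart of the proof, absent from your proposal, is the reduction showing that any solution to $\mathrm{lift}_{\Delta,2}(\Pi_{\Delta'}((\alpha+1)c))$ on $G$ can be converted into a proper $2(\alpha+1)c$-coloring of $G$: \Cref{lem:sol_on_d'} uses Hall's marriage theorem on an auxiliary bipartite graph to turn a solution of the lift into a solution of $\Pi_\Delta((\alpha+1)c)$ on the full support graph, and \Cref{lem:col-supergraph} turns that into a proper coloring via a degeneracy ordering; the contradiction with the chromatic number of $G$ then finishes the argument. Without this reduction (or an equally concrete substitute), your proof does not go through.
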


\paragraph{Arbdefective colored ruling sets.}
A subset $S \subseteq V$ of nodes is an $\alpha$-arbdefective $c$-colored $\beta$-ruling set if the following holds: the subgraph induced by nodes of $S$ is labeled with a solution for the $\alpha$-arbdefective $c$-coloring problem, and for each node $v \in V \setminus S$, it holds that there is a node in $S$ within distance $\beta$.

The $\alpha$-arbdefective $c$-colored $\beta$-ruling sets problem family includes many natural and widely studied problems. For example, by setting $\beta=0$, we obtain $\alpha$-arbdefective $c$-coloring, by then setting $\alpha =0$ we obtain $c$-coloring.
If we set $\alpha = 0$ and $c=1$, we obtain $(2,\beta)$-ruling sets, and by then setting $\beta=1$ we obtain the maximal independent set problem. Instead, if we set $\beta = 0$, $\alpha = \Delta - 1$, and $c = 1$, we obtain sinkless coloring (which is, up to one round of communication, equivalent to sinkless orientation) \cite{brandtlower16}, while if we set $\beta = 1$ and $c = 1$, we obtain $\alpha$-outdegree dominating sets \cite{outdegree-ds}.
Hence, lower bounds for this general problem family are highly useful as they imply lower bounds for all the important problems that are special cases of this family.

In \cite{Balliu0KO22}, it has been shown that computing an $\alpha$-arbdefective $c$-colored $\beta$-ruling set in the LOCAL model requires $T_{\mathrm{det}} = \Omega(\min\{\beta(\frac{\Delta}{(\alpha + 1)c})^{1 / \beta}), \log_\Delta n \})$ rounds for deterministic algorithms and $T_{\mathrm{rand}} = \Omega(\min\{\beta(\frac{\Delta}{(\alpha + 1)c})^{1 / \beta}), \log_\Delta \log n \})$ rounds for randomized ones, assuming that $\beta$ is small enough to imply $T_{\mathrm{det}} - \beta = \Omega(T_{\mathrm{det}})$ in the deterministic case, and $T_{\mathrm{rand}} - \beta = \Omega(T_{\mathrm{rand}})$ in the randomized case.
This result is tight: in fact, these lower bounds hold even if a $(\Delta+1)$-coloring is provided to the nodes, and in such a setting it is possible to compute an  $\alpha$-arbdefective $c$-colored $\beta$-ruling set  in $O(\beta(\frac{\Delta}{(\alpha + 1)c})^{1 / \beta}))$ rounds.

In \Cref{sec:acbrs}, we show that similar lower bounds hold in the Supported LOCAL model as well. More in detail, we focus on the case $\beta \ge 1$ (since the case $\beta = 0$ is already handled by \Cref{thm:arb-coloring-intro}), and prove the following theorem, which, for $\beta = O(1)$ and $\Delta$ sufficiently large compared to $\Delta'$, matches the lower bounds known in the LOCAL model.
\begin{theorem}[Simplified version of \Cref{thm:acbrs}]\label{thm:acbrs-intro}
Let $\alpha\ge 0$, $c \ge 1$, $\beta \ge 1$ be integers, and assume $\beta = O(1)$ is a constant, i.e., it does not depend on $n$ and $\Delta$. 
Let $\Delta$ be the degree of the support graph, and $\Delta'$ be the degree of the input graph.
Let $\Bar{\Delta} = \min\{\Delta',  \Delta / \log \Delta \}$.
Then, the $\alpha$-arbdefective $c$-colored $\beta$-ruling set problem requires $\Omega(\min\{ (\frac{\Bar{\Delta}}{(\alpha + 1)c})^{1 / \beta}), \log_\Delta n \})$ deterministic rounds and $\Omega(\min\{ (\frac{\Bar{\Delta}}{(\alpha + 1)c})^{1 / \beta}), \log_\Delta \log n \})$ randomized rounds.
\end{theorem}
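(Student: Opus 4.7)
The plan is to apply the framework of Theorem~\ref{thm:approach-intro}. First, formalize $\alpha$-arbdefective $c$-colored $\beta$-ruling set as a locally checkable problem $\Pi$ on bipartite graphs (via the standard bipartization used for round elimination), set $k := \Theta\bigl((\bar{\Delta}/(\alpha+1)c)^{1/\beta}\bigr)$, and exhibit a support graph of degree $\Delta$ on which $\mathrm{lift}(\Pi_k)$ has no solution. By Theorem~\ref{thm:approach-intro} this immediately yields both the deterministic and randomized bounds claimed in the theorem, and the general (non-bipartite) case follows because any bipartite lower-bound graph can be embedded in an input graph of the same degree.

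Next, one has to carry out the round-elimination analysis of the sequence $\Pi_0 = \Pi, \Pi_1, \ldots, \Pi_k$. Since the round-elimination operator is a purely syntactic transformation of the problem description, the evolution of $\Pi_i$ here is the same as the one worked out for the LOCAL model in \cite{Balliu0KO22} for $\alpha$-arbdefective $c$-colored $\beta$-ruling set: intuitively, each round-elimination step trades $\beta$ against a multiplicative reduction of the ``effective degree'', so after $k$ iterations the problem $\Pi_k$ essentially asserts that every node is either colored by one of $(\alpha+1)c$ classes with bounded outdegree or lies near a node that is, on a neighborhood of radius governed by $k^{\beta}$. The threshold $k = \Theta\bigl((\bar{\Delta}/(\alpha+1)c)^{1/\beta}\bigr)$ is exactly the point where this condition becomes locally infeasible.

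The crux, and the main obstacle, is the graph-theoretic step: producing a support graph $G$ of maximum degree $\Delta$ on which $\mathrm{lift}(\Pi_k)$ admits no solution. Unlike in the LOCAL setting, a single high-girth random graph is not enough, because by Theorem~\ref{th:lift-intro} a $0$-round Supported LOCAL labeling on $G$ must be consistent simultaneously with the constraints of $\Pi_k$ on \emph{every} subgraph of $G$ of degree at most $\Delta'$. The plan is to use the probabilistic method: take $G$ to be a uniformly random $\Delta$-regular bipartite graph and, for each candidate labeling of $V(G)$ by labels of $\mathrm{lift}(\Pi_k)$, bound the probability that the labeling satisfies all induced constraints. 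The labels of $\mathrm{lift}(\Pi_k)$ are sets of labels of $\Pi_k$, so the alphabet has size $2^{\mathrm{poly}(\Delta)}$, which in a union bound over all labelings costs a factor $\log\Delta$ in the effective degree; this is precisely the origin of $\bar{\Delta} = \min\{\Delta', \Delta/\log\Delta\}$ in the statement.

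The main technical difficulty is executing this union bound cleanly in the regime where $k$ is close to the threshold. One has to show that, after round elimination has reduced the problem to $\Pi_k$, every fixed labeling of the nodes of $G$ violates with high probability the $\Pi_k$-constraint on at least one randomly sampled $\Delta'$-degree subgraph, and that the failure probability is small enough to survive the union bound $2^{\mathrm{poly}(\Delta)\cdot n}$ over all labelings. The calculation should mirror the existential arguments in the LOCAL lower bounds of \cite{Balliu0KO22}, but with the additional twist that the ``adversary'' now ranges over input subgraphs rather than over inputs; I expect the constraint $\beta = O(1)$ to enter here to keep the combinatorial blow-up of the lift manageable and to ensure the subtractive term $\beta$ is absorbed into $\Omega(\cdot)$.
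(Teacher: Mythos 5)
Your high-level framing (build the lower-bound sequence from \cite{Balliu0KO22}, then exhibit a support graph on which the lift of the end problem has no solution, then invoke the general machinery) matches the paper. But the crux---the non-existence proof for $\mathrm{lift}(\Pi_k)$---is where your proposal has a genuine gap, in two respects. First, the quantification is off: by \Cref{th:lift-bipartite} the subgraph quantifier is already absorbed into the definition of the lifted problem, so what must be shown is a purely deterministic statement, namely that \emph{no} edge-labeling of the fixed support graph $G$ by label-sets satisfies the lift constraints; there is no ``randomly sampled $\Delta'$-degree subgraph'' left to play against once the lift has been formed. Second, your first-moment plan---union-bound over all $2^{\mathrm{poly}(\Delta)\cdot n}$ labelings of a random $\Delta$-regular graph---would require each fixed labeling to violate the constraints except with probability $2^{-\omega(\mathrm{poly}(\Delta)\, n)}$, and you give no argument for such a bound; the analyses in \cite{Balliu0KO22} that you propose to ``mirror'' are round-elimination failure-probability computations of an entirely different nature and do not supply it. Your attribution of the $\Delta/\log\Delta$ term to the union bound over the exponential alphabet is also not where it comes from.

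The paper's route is deterministic and combinatorial. It takes the high-girth graphs with independence number $O(n\log\Delta/\Delta)$ from \Cref{lem:graph-family} (this is the true origin of $\bar{\Delta}=\min\{\Delta',\Delta/\log\Delta\}$: such graphs have chromatic number $\Omega(\Delta/\log\Delta)$, and so does every induced subgraph on a constant fraction of the vertices). It then shows that any solution to the lift would contradict this: \Cref{lem:acb-recursive} recursively eliminates the pointer labels $\P_i,\U_i$ one level at a time, each step retaining at least a $1/4$ fraction of the nodes and doubling the color budget, so after $\beta$ steps (this is where $\beta=O(1)$ and the $2^{O(\beta)}$ loss enter) one is left with an $S$-solution to the lift of the pure arbdefective-coloring problem on a $4^{-\beta}$ fraction of $G$; \Cref{lem:sol_on_d'} (via Hall's theorem) and \Cref{lem:col-supergraph} (via a degeneracy ordering) then convert this into a proper coloring of that induced subgraph with $O(2^{\beta}(\alpha+1)c)$ colors, contradicting the chromatic-number bound. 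None of this structural analysis of the lifted label-sets appears in your proposal, and it is the main technical content of the theorem; you would need to either reproduce it or supply the missing probability estimate for your union bound, which I do not believe is attainable in the form you describe.
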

We note that, in \Cref{thm:arb-coloring-intro}, the $\Delta / \log \Delta$ term is necessary: the support graphs for which the theorems are proved satisfy that their chromatic number is upper bounded by $O(\Delta / \log \Delta)$, and hence the nodes can compute an $O(\Delta/ \log \Delta)$-coloring without communication. Similarly, it is known that, given a $k$-coloring, one can compute an $\alpha$-arbdefective $c$-colored $\beta$-ruling set in $O(\frac{k}{(\alpha + 1)c})^{1 / \beta})$ rounds \cite{Balliu0KO22}, and hence the term $\Delta / \log \Delta$ is required also in \Cref{thm:acbrs-intro}.

In \cite{supportedopodis} the authors noted that MIS can be solved in $\chi_G$ rounds, where $\chi_G$ 
is the chromatic number of the support graph $G$. As an open question, they asked whether this bound could be improved. \Cref{thm:acbrs-intro} implies that this is not possible, at least for deterministic algorithms. In fact, if we set $\Delta := \Delta' \log \Delta'$ and $\Delta' := \log n / \log \log n$, we get that $\Bar{\Delta} = \Theta(\Delta')$ and that $\Bar{\Delta} = \Theta(\log n / \log \log n)$. Then, the lower bound results in $\Omega(\min\{\log n / \log \log n,  \log_{\log n} n  \}) = \Omega(\log n / \log \log n)$, and $G$ has chromatic number $\Theta(\Delta / \log \Delta) = \Theta(\log n / \log \log n)$.

\section{Preliminaries}\label{sec:preliminaries}
\paragraph{The LOCAL model.}
In the LOCAL model of distributed computation, the nodes of an $n$-node (hyper)graph are provided with a unique ID, typically in $1,\ldots,n^c$ for some integer $c>0$. The typical assumptions on the initial knowledge is that each node is aware of its own ID, its own degree in the graph (i.e., the number of neighboring nodes), the maximum degree $\Delta$, and the total number of nodes in the (hyper)graph. (In the case of hypergraphs, nodes know also the rank $r$.) 
In the randomized version of the LOCAL model, nodes have access to an infinite string of random bits. In this setting, the computation proceeds in synchronous rounds, where at each round nodes exchange messages with neighbors and perform some local computation. The size of the messages and the local computation can be arbitrarily large. The runtime of a distributed algorithm running in the LOCAL model is determined by the time it is needed for the very last node to terminate and produce its own output. In the randomized setting, we require that the randomized distributed algorithm succeeds with high probability, that is with probability at least $1-1/n^c$ for any fixed constant $c \geq 1$.

\paragraph{The Supported LOCAL model.}
In the Supported LOCAL model, we are given a (hyper)graph $G = (V,E)$ (called the \emph{support graph}) and a sub(hyper)graph $G' = (V',E')$ of $G$ (called the \emph{input graph}). The number of nodes of $G$ is denoted by $n$, the degree of $G$ by $\Delta$, and the degree of $G'$ by $\Delta'$. The rank in $G$ is denoted by $r$ and the rank in $G'$ is denoted by $r'$. Initially, nodes know $G$ and which of their incident (hyper)edges belong to $G'$ (if any). Also, nodes know $\Delta'$, $r'$ and the total number of nodes in $G'$. The goal is to solve some graph problem of interest in the subgraph $G'$. The computation proceeds in synchronous rounds, where at each round nodes exchange messages with their neighbors in $G$ and perform some local computation. Then, as in the LOCAL model: the size of the messages and the local computation can be arbitrarily large; the runtime of a distributed algorithm running in the Supported LOCAL model is determined by the time it is needed for the very last node to terminate and produce its own output; in the randomized setting, we require that the randomized distributed algorithm succeeds with probability at least $1-1/n^c$ for any fixed constant $c \geq 1$.

\paragraph{Problems in the black-white formalism.}
In order to apply round elimination, it is required to describe a problem in a formal language, called black-white formalism, that we now describe (the actual definition of this formalism is more general, and here we present a version that is simplified but that still fits our needs).
A problem $\Pi$ described in the black-white formalism is a tuple $(\Sigma,C_W,C_B)$ satisfying the following. 
\begin{itemize}
    \item $\Sigma$ is a finite set of labels.
    \item $C_W$ is a set of multisets of elements of $\Sigma$, where all multisets have the same size. 
    \item $C_B$ is a set of multisets of elements of $\Sigma$, where all multisets have the same size. 
\end{itemize}
Let $d_W$ be the size of the multisets in $C_W$, and $d_B$ be the size of the multisets in $C_B$. We will use problems defined in the black-white formalism in two settings, and for distinguishing the two cases we will use different notations.
One case will be the one of bipartite $2$-colored graphs.
By \emph{bipartitely solving} a problem $\Pi$ on $G$ we mean the following.
\begin{itemize}
    \item $G = (W \cup B,E)$ is a graph that is properly $2$-colored, and in particular each node $v \in W$ is labeled $c(v) = W$, and each node $v \in B$ is labeled $c(v) = B$.
    \item The task is to assign a label $o(e) \in \Sigma$ to each edge $e \in E$ such that, for each node $v \in W$ that has degree exactly $d_W$ (resp.\ $v \in B$ that has degree exactly $d_B$) it holds that the multiset of incident labels is in $C_W$ (resp.\ in $C_B$).
\end{itemize}
An assignment of labels to edges satisfying these requirements is called \emph{bipartite solution}.
The second case will be the one of hypergraphs. By \emph{non-bipartitely solving} a problem $\Pi$ we mean the following.
\begin{itemize}
    \item $G = (V,E)$ is a hypergraph.
    \item The task is to assign a label $o(v,e) \in \Sigma$ to each node-hyperedge pair $(v,e) \in V \times E$, such that, for each node $v \in V$ that has degree exactly $d_W$ (resp.\ for each hyperedge $e \in E$ that has rank exactly $d_B$) it holds that the multiset of incident labels is in $C_W$ (resp.\ in $C_B$).
\end{itemize}
An assignment of labels to node-edge pairs satisfying these requirements is called \emph{non-bipartite solution}.
In other words, non-bipartitely solving a problem $\Pi$ on a hypergraph $G$ means to bipartitely solve $\Pi$ on the incidence graph of $G$, and a non-bipartite solution for a problem $\Pi$ on a hypergraph $G$ is a bipartite solution for $\Pi$ on the incidence graph of $G$. 
We will use the term \emph{white constraint} in order to refer to $C_W$, and \emph{black constraint} in order to refer to $C_B$. In the case of graphs, we may use the term \emph{node constraint} to refer to $C_W$ and \emph{edge constraint} to refer to $C_B$.

Note that, in the variant of black-white formalism that we provided, all white nodes with degree different from $d_W$ do not need to satisfy any constraint, and all black nodes with degree different from $d_B$ do not need to satisfy any constraint. Moreover, nodes do not receive additional labels as inputs.

We will use the term (black or white) \emph{configuration} in order to refer to a multiset contained in a (black or white) constraint. We may represent configurations by using multisets, or by using regular expressions. For example $\{\A,\B,\C,\D\}$ is a configuration, and $\A \s \B \s \C \s \D$ is the same configuration. We may use regular expressions of the form, e.g., $[\A \B] \s [\C \D] \s \E$, to denote all configurations in $\{ \A \s \C\s \E, \A \s \D \s \E, \B \s \C \s \E, \B \s \D \s \E\}$, and we call such configurations \emph{condensed} configurations. 
For two labels $\X,\Y$, we say that $\X$ is \emph{at least as strong as} $\Y$ w.r.t.\ a constraint $C$ if, for all configurations containing $\Y$, it holds that by replacing an arbitrary amount of $\Y$ with $\X$ we obtain a configuration that is also in $C$.
The \emph{diagram} of $\Pi$ w.r.t.\ a constraint $C$ is a directed graph representing the strength relation: there is a directed path from $\Y$ to $\X$ if $\X$ is at least as strong as $\Y$. We call a set $S$ of labels \emph{right-closed} w.r.t.\ a diagram if, the fact that a label $\ell$ is in $S$, implies that all nodes reachable from $\ell$ in the diagram are also in $S$.
An example of encoding of a problem in this formalism, and an example of diagram, is presented in \Cref{sec:example-black-white}.

Let $\Pi' := (\Sigma', C'_W,C'_B)$ and $\Pi := (\Sigma,C_W,C_B)$ be two problems in the black-white formalism, both having white configurations of size $d_W$ and black configurations of size $d_B$.
Let $\Vec{C_W} = \{ (\ell_1,\ldots,\ell_{d_W}) \mid \{\ell_1,\ldots,\ell_{d_W}\} \in C_W  \}$,
that is we treat the multisets in $C_W$ as ordered tuples. Let $\Vec{C'_W}$ be defined analogously. The problem $\Pi'$ is a \emph{relaxation} of $\Pi$ if there exists a function $f : \Vec{C_W} \rightarrow \Vec{C'_W}$ satisfying the following.
For each label $\ell \in \Sigma$, let $r(\ell) := \{ \ell'_i \in \Sigma' \mid \exists C = (\ell_1,\ldots,\ell_{d_W}) \in \Vec{C_W}, C' = (\ell'_1,\ldots,\ell'_{d_W}) \in \Vec{C'_W}, f(C) = C', \exists i \text{ s.t. } \ell_i = \ell  \}$, that is, the set $r(\ell)$ contains all possible labels to which $\ell$ is mapped to, when using the mapping $f$.
For every black configuration $\{\ell_1,\ldots,\ell_{d_B}\}$, it is required that any choice over $r(\ell_1) \times \ldots \times r(\ell_{d_B})$ is in $C_B$. In other words, $\Pi'$ is a relaxation of $\Pi$ if there exists a way to map the configurations of white nodes in such a way that, if a solution is valid for $\Pi$, then the obtained solution is valid for $\Pi'$.

\paragraph{Black and white algorithms.}
In the context of algorithms for bipartite $2$-colored graphs, a white (resp.\ black) algorithm with runtime $T$ for a problem $\Pi$ (in the black-white formalism) on a graph $G$ is a function that takes as input the radius-$T$ neighborhood of a white (resp.\ black) node $v$ and provides a labeling for the edges incident to $v$, such that the constraints of $\Pi$ are satisfied on all the white and all the black nodes.

In other words, a white algorithm running on a bipartite graph is an algorithm where (black and white) nodes communicate for $T$ rounds, and then the nodes responsible for assigning an output for the edges of the graph are solely the white nodes, and the black nodes do not even need to know the outputs for their incident edges.

\paragraph{Round elimination.}
For the purpose of understanding the main content of our paper, it is sufficient to know that there exists a function $\mathrm{RE}$ that receives as input a problem $\Pi$ in the black-white formalism, and outputs a problem $\Pi' := \mathrm{RE}(\Pi)$ in a mechanical and clearly defined way, such that if the white (resp.\ black) configurations of $\Pi$ have size $\Delta$ (resp.\ $r$), then also the white (resp.\ black) configurations of $\Pi'$ have size $\Delta$ (resp.\ $r$). The actual definition of this function, and how the complexity of $\Pi'$ is related to $\Pi$, is only required when diving into the proofs of \Cref{sec:re}. Hence, we defer the definition of such a function to that section.

\paragraph{Lower bound sequence.}
Let $\Pi_0,\ldots,\Pi_k$ be a sequence of problems in the black-white formalism. This sequence is a \emph{lower bound sequence} if, for each $1 \le i \le k$, the problem $\Pi_{i}$ is a relaxation of $\mathrm{RE}(\Pi_{i-1})$. In this paper, we will not explicitly compute lower bound sequences. In fact, for the purposes of our proofs, we will only need to prove statements about the last problem of some sequences that have already been defined in different papers.

\paragraph{High-girth low-independence graphs.}
Throughout the paper, we will exploit the existence of a specific family of graphs. We report a known result from graph theory, shown in \cite{Alon10}.
\begin{lemma}[\cite{Alon10}]\label{lem:graph-family}
    There exists two constants $\alpha$ and $\epsilon$ such that, for any $n$ and $\Delta$ satisfying that $n \Delta$ is even and that $2 \le \Delta < n$, there exists a $\Delta$-regular graph of $n$ nodes that has girth at least $\varepsilon \log_\Delta n$ and independence number at most $\alpha \frac{n \log \Delta }{\Delta}$.
\end{lemma}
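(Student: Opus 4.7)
\textbf{The plan} is a classical probabilistic-method argument with alteration. I would sample a uniformly random $\Delta$-regular graph on $n$ vertices via the configuration (pairing) model, which is where the parity assumption $n\Delta$ even is used, and show that with positive probability the sampled graph essentially satisfies both the girth and independence-number conditions, up to a small correction fixed by a deletion step.

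First, I would bound the number of short cycles. A standard calculation in the configuration model gives that the expected number of cycles of length exactly $k$ is $\Theta((\Delta-1)^k/(2k))$. Summing over $3 \le k \le L$ with $L = \varepsilon \log_\Delta n$ yields at most $O((\Delta-1)^L) = O(n^\varepsilon)$, which is $o(n)$ provided $\varepsilon$ is a small enough absolute constant. By Markov's inequality, the number of cycles of length at most $L$ is $o(n)$ with probability at least $2/3$. Second, I would bound the independence number. For a fixed set $S$ of size $s$, the probability that $S$ is independent in the configuration model is at most (up to lower-order factors) $\bigl((n-s)/n\bigr)^{\Delta s/2}$, reflecting the fact that each of the $\Delta s$ half-edges incident to $S$ must be matched to a half-edge outside $S$. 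A union bound over $\binom{n}{s} \le (en/s)^s$ sets shows that for $s = \alpha n \log \Delta/\Delta$ with $\alpha$ a sufficiently large absolute constant, no independent set of size $s$ exists with probability at least $2/3$. Intersecting the two events yields a $\Delta$-regular graph $G$ satisfying both inequalities.

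Finally, I would apply alteration: delete one vertex from each short cycle. This destroys all cycles of length $\le L$, removes only $o(n)$ vertices, and can only decrease the independence number, so both desired bounds hold on the remaining graph. \textbf{The main obstacle} is to restore exact $\Delta$-regularity after deletion, since each removed vertex leaves dangling half-edges on its neighbors. I would handle this by starting the construction on $n' = n + o(n)$ vertices and then performing local rewiring---pairing up the exposed half-edges in such a way that no short cycle is re-introduced and no large independent set is created. Because there are only $o(n)$ such half-edges while the graph has $\Omega(n\Delta)$ edges and girth $\Omega(\log_\Delta n)$ away from the deleted region, a simple greedy or further-probabilistic argument suffices; this rewiring step is the delicate part of the proof and is precisely the point at which a careful analysis (or appeal to the explicit construction of Alon) is required to guarantee that both the girth bound and the independence-number bound are preserved by the correction.
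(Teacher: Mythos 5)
The paper does not actually prove this lemma: it is imported as a black box from~[Alon10], so there is no in-paper argument to compare against. Judged on its own, your plan is the standard one for this type of result (configuration model, first-moment bound on short cycles, union bound over candidate independent sets, then alteration), and the two probabilistic estimates are essentially right: the expected number of cycles of length at most $L=\varepsilon\log_\Delta n$ is $n^{O(\varepsilon)}=o(n)$, and $\binom{n}{s}\bigl((n-s)/n\bigr)^{\Theta(\Delta s)}\to 0$ for $s=\alpha n\log\Delta/\Delta$ with $\alpha$ a large constant, so with positive probability the sampled multigraph has few short cycles (including loops and multi-edges, which you should explicitly count as cycles of length $1$ and $2$, since for growing $\Delta$ you cannot afford to condition on simplicity) and no large independent set.

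The genuine gap is exactly the step you flag and then defer: restoring exact $\Delta$-regularity. As written, the vertex-deletion-plus-rewiring plan does not go through. You cannot control in advance how many vertices will be deleted, so starting from $n'=n+o(n)$ vertices will not land on exactly $n$; the number of dangling half-edges need not have the right parity; and the greedy re-pairing can get stuck, because once only a handful of exposed half-edges remain they may all lie within distance $L$ of one another, and "there are $\Omega(n\Delta)$ edges elsewhere" does not help a pairing that is restricted to the exposed half-edges. Falling back on "the explicit construction of Alon" is circular, since that construction is the statement being proved. The standard way to close this gap is to avoid vertex deletion altogether and instead destroy each short cycle by an \emph{edge switching}: pick an edge $\{u,v\}$ on the cycle and an edge $\{u',v'\}$ at distance more than $L$ from all short cycles, and replace the pair by $\{u,u'\},\{v,v'\}$. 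This preserves $\Delta$-regularity exactly, is always available because only $n^{O(\varepsilon)}\cdot\Delta^{O(1)}$ of the $\Theta(n\Delta)$ edges are close to a short cycle, creates no new short cycle if $\{u',v'\}$ is chosen far enough away, and changes the independence number by at most one per switch, i.e., by $n^{O(\varepsilon)}$, which is negligible against $n\log\Delta/\Delta$ in the regime where the girth condition is non-trivial. With that replacement (and a separate treatment of the degenerate regime of very large $\Delta$, where $\log_\Delta n=O(1)$ and the girth requirement is vacuous), your argument becomes a complete proof.
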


\section{A New Technique}\label{sec:new-technique}
In this section, we prove that understanding whether a problem $\Pi$ can be solved in $0$ rounds in the Supported LOCAL model is equivalent to understanding whether another problem $\bar{\Pi}$ admits a solution in the support graph $G$. The latter is conceptually easier than the former, since it does not require to think about distributed algorithms that run on subgraphs, but it just requires to think about existence of solutions. We note that a similar approach, but carefully crafted for the case of the sinkless orientation problem (which behaves significantly nicer in the round elimination framework), has been used in \cite{BalliuKKLOPPR0S23}.

\begin{definition}\label{def:lift}
    Let $\Pi$ be a problem in the black-white formalism, where the white configurations have size $\Delta'$ and the black configurations have size $r'$. For a pair of integers $\Delta \ge \Delta'$ and $r \ge r'$, the problem $\bar{\Pi} := \mathrm{lift}_{\Delta,r}(\Pi)$ is a problem in the black-white formalism defined as follows.
    \begin{itemize}
        \item $\Sigma_{\bar{\Pi}} = \{ L \mid L \subseteq \Sigma_{\Pi} \land L \neq \emptyset \land L \text{ is right-closed w.r.t. the black diagram of } \Pi \}$, that is, the set of labels of $\bar{\Pi}$ contains all possible non-empty subsets of the labels of $\Pi$ that are right-closed w.r.t.\ the black diagram of $\Pi$. These labels are called \emph{label-sets}.
        \item The black constraint of $\bar{\Pi}$ contains all multisets $C = \{L_1,\ldots,L_r\}$ of size $r$ satisfying the following. For all subsets $S = \{L_{i_1},\ldots,L_{i_{r'}}\}$ of $C$ of size $r'$, for any choice $\ell_1 \in L_{i_1},\ldots, \ell_{r'} \in L_{i_{r'}}$, it must hold that the configuration $\{\ell_1,\ldots,\ell_{r'}\}$ is in the black constraint of $\Pi$.
        \item The white constraint of $\bar{\Pi}$ contains all multisets $C = \{L_1,\ldots,L_\Delta\}$ of size $\Delta$ satisfying the following. For all subsets $S = \{L_{i_1},\ldots,L_{i_{\Delta'}}\}$ of $C$ of size $\Delta'$, there exists a choice $\ell_1 \in L_{i_1},\ldots, \ell_{\Delta'} \in L_{i_{\Delta'}}$ such that the configuration $\{\ell_1,\ldots,\ell_{\Delta'}\}$ is in the white constraint of $\Pi$.
    \end{itemize}
\end{definition}
\begin{restatable}{theorem}{lift}\label{th:lift-bipartite}
    Let $r$, $\Delta$, and $n$ be integers. Let $G$ be a $(\Delta,r)$-biregular graph of $n$ nodes, and let $\Pi$ be a problem in the black-white formalism satisfying that the white configurations have size $\Delta' \le \Delta$ and the black configurations have size $r' \le r$. The problem $\Pi$ can be bipartitely solved in $0$ rounds by a white algorithm in the Supported LOCAL model on $G$ if and only if there exists a bipartite solution for $\bar{\Pi} := \mathrm{lift}_{\Delta,r}(\Pi)$ on $G$.
\end{restatable}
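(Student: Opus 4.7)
The plan is to establish both directions of the equivalence. The conceptual bridge is that the label-set $L(e)$ assigned to an edge $e=\{w,b\}$ of $G$ should encode exactly the labels that a $0$-round white algorithm $A$ may output on $e$ when $w$ happens to be a ``degree-$\Delta'$'' node in the input graph $G'$. The definition of $\mathrm{lift}_{\Delta,r}$ then converts the worst-case correctness of $A$ over all possible subgraphs $G'$ into a purely static labeling condition on $G$.

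For the forward direction ($\Rightarrow$), assuming $A$ is a valid $0$-round white algorithm for $\Pi$ on $G$, I would define $L(e)$ to be the right-closure (with respect to the black diagram of $\Pi$) of the set of labels that $A$ outputs on $e$ across all subgraphs $G'$ of $G$ with $e \in G'$ and $\deg_{G'}(w)=\Delta'$. Since $\Delta' \le \Delta$, at least one such $G'$ exists, so $L(e)$ is a nonempty right-closed subset of $\Sigma_{\Pi}$, hence a legal element of $\Sigma_{\bar{\Pi}}$. To check the white constraint of $\bar{\Pi}$ at $w$, I fix any size-$\Delta'$ subset $\{e_{i_1},\dots,e_{i_{\Delta'}}\}$ of $w$'s incident edges and take $G'$ to be the subgraph of $G$ consisting of exactly those edges: then $\deg_{G'}(w)=\Delta'$, so $A$'s valid output at $w$ witnesses a choice $\ell_j \in L(e_{i_j})$ with $\{\ell_1,\dots,\ell_{\Delta'}\} \in C_W$. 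To check the black constraint of $\bar{\Pi}$ at $b$, I fix a size-$r'$ subset $\{e_{i_1},\dots,e_{i_{r'}}\}$ and any choice $\ell_j \in L(e_{i_j})$; each $\ell_j$ (before right-closure) is witnessed by a subgraph ${G'}^{(j)}$ in which the white endpoint $w_j$ of $e_{i_j}$ has degree $\Delta'$ and $A$ outputs $\ell_j$ on $e_{i_j}$. Because bipartiteness makes the endpoints $w_1,\dots,w_{r'}$ pairwise distinct, I would \emph{glue} the views by defining $G'$ to contain, for each $j$, exactly the $\Delta'$ incident edges of $w_j$ taken from ${G'}^{(j)}$, and no other edges. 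In $G'$ each $w_j$ has the same incident-edge set as in ${G'}^{(j)}$, so $A$ still outputs $\ell_j$ on $e_{i_j}$; moreover $b$'s $G'$-incident edges are exactly $\{e_{i_1},\dots,e_{i_{r'}}\}$, so $\deg_{G'}(b)=r'$ and correctness of $A$ forces $\{\ell_1,\dots,\ell_{r'}\} \in C_B$. The right-closure step only adds labels that are at least as strong w.r.t.\ the black constraint, which preserves black validity by definition, and it only enlarges label-sets, which preserves the existence required by the white constraint.

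For the reverse direction ($\Leftarrow$), given a bipartite solution $L$ of $\bar{\Pi}$ on $G$, I would define $A$ so that a white node $w$ seeing incident $G'$-edges $\{e_{i_1},\dots,e_{i_d}\}$ does the following: if $d = \Delta'$, invoke the white constraint of $\bar{\Pi}$ on the size-$\Delta'$ subset of label-sets $\{L(e_{i_1}),\dots,L(e_{i_{\Delta'}})\}$ to pick (by any deterministic rule, e.g., the lexicographically least valid tuple) a choice $\ell_j \in L(e_{i_j})$ with $\{\ell_1,\dots,\ell_{\Delta'}\} \in C_W$, and output $\ell_j$ on $e_{i_j}$; for any other degree $d$, output labels arbitrarily. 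The white constraint of $\Pi$ holds by construction. For the black constraint, any $b$ with $\deg_{G'}(b)=r'$ receives on its $G'$-edges a tuple whose $j$-th coordinate lies in $L(e_{i_j})$; the black constraint of $\bar{\Pi}$ guarantees that \emph{every} such tuple is in $C_B$, so the configuration seen by $b$ is valid.

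The hard part will be the gluing argument in the forward direction: the chosen labels $\ell_j$ are witnessed by different input graphs ${G'}^{(j)}$, but the black constraint of $\bar{\Pi}$ demands that a \emph{single} combined input $G'$ realize all of them simultaneously. This is precisely where bipartiteness (distinct white endpoints have disjoint incident-edge sets in a simple graph) and the $0$-round locality of $A$ (its output at a white node depends only on that node's incident $G'$-edges) cooperate to let us stitch the local views into one global $G'$ without conflict. The remaining subtlety---that right-closure is safe---follows directly from the ``at least as strong'' semantics and from the monotonicity of the ``exists a choice'' quantifier in the white constraint.
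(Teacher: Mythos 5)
Your proposal is correct and follows essentially the same route as the paper's proof: the same edge-wise collection of $0$-round outputs followed by right-closure for the forward direction (with the same gluing of per-white-node witnesses into a single valid input graph to verify the black constraint), and the same selection rule for the reverse direction. One slip in your reverse direction: for a white node of degree $d \neq \Delta'$ you say ``output labels arbitrarily,'' but a fully arbitrary label would break the black constraint at a degree-$r'$ black neighbor, whose validity you justify by assuming every received label lies in the corresponding label-set; as in the paper, the output on such edges must be an arbitrary element \emph{of the assigned label-set} $L(e)$, after which your argument goes through.
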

\begin{proof}    
    We first prove that the existence of a bipartite solution for $\bar{\Pi}$ on $G$ implies a $0$-round white algorithm for $\Pi$. We define the algorithm $\mathcal{A}$ on the white nodes $v$ on $G = (V,E)$ as follows. Node $v$ computes a solution $S$ for $\bar{\Pi}$ without communication. This is possible since $v$ knows $G$. Note that this operation does not depend on the input of $v$, but solely on $G$, and hence we get that all nodes compute the same solution $S$. Then, node $v$ considers its incident edges that are part of the input graph $G'$. 
    If the count of these edges is not $\Delta'$, then $v$, for each edge $e$, outputs an arbitrary element from the set $L_e$ assigned to $e$ in the solution for $\Bar{\Pi}$.
    Otherwise, let the edges be $e_1,\ldots,e_{\Delta'}$. Let $L_1,\ldots,L_{\Delta'}$ be the sets of labels assigned to these edges in the solution $S$.
    By the definition of the white constraint of $\bar{\Pi}$, there exists a choice of labels $\ell_1 \in L_{1},\ldots, \ell_{\Delta'} \in L_{{\Delta'}}$ satisfying that the configuration $\{\ell_1,\ldots,\ell_{\Delta'}\}$ is in the white constraint of $\Pi$. Node $v$ outputs $\ell_1$ on $e_1$, $\ell_2$ on $e_2$, and so on. By construction, the white constraint of $\Pi$ is satisfied on all nodes. Moreover, by the definition of the black constraint of $\bar{\Pi}$, any output given by the white nodes satisfies the constraints of $\Pi$ on the black nodes.

    We now prove that, given a white algorithm $\mathcal{A}$ that solves $\Pi$ in $0$ rounds, we can find a bipartite solution for $\bar{\Pi}$ on $G$. While for proving the existence of a solution for $\bar{\Pi}$ it would be sufficient to provide a centralized algorithm that computes it, in the following we provide a distributed algorithm that finds a solution for $\bar{\Pi}$. Each white node $v$ computes a solution as follows. For each edge $e$ incident to $v$ on $G$, node $v$ initializes a set $L_e$ to be the empty set.  Then, node $v$ enumerates all possible choices of $\Delta'$ edges $e_1,\ldots,e_{\Delta'}$ over its $\Delta$ edges. Observe that, since $\mathcal{A}$ runs in $0$ rounds, its output on $v$ solely depends on $G$, $n$, $\Delta$, $r$, $\Delta'$, $r'$, and on which edges of $v$ are selected to be in $G'$. Thus, $v$ picks an arbitrary subgraph $G'$ of $G$ that has white degree bounded by $\Delta$', black degree bounded by $r'$, and that includes the edges $e_1,\ldots,e_{\Delta'}$, and computes (without communication) the output that $\mathcal{A}$ would provide on $G'$. Let this output be $\ell_1,\ldots,\ell_{\Delta'}$. Node $v$ adds $\ell_1$ to $L_{e_1}$, $\ell_2$ to $L_{e_2}$, and so on. At the end, for each edge $e$ of $G$, node $v$ outputs $L_e$.
    
    While the obtained sets are not necessarily right-closed, we first prove that they satisfy the second and third property of \Cref{def:lift}. We will then show that we can add elements to the sets assigned to the edges in order to make them right-closed, while still satisfying the second and third property.
    
    We start by proving that the white constraint of $\bar{\Pi}$ is satisfied. Suppose, for a contradiction, that the multiset $C = \{L_1,\ldots,L_\Delta\}$ obtained by $v$ does not satisfy the white constraint of $\bar{\Pi}$. This means that there exists a subset 
    $S = \{L_{i_1},\ldots,L_{i_{\Delta'}}\}$ of $C$ of size $\Delta'$ where all choices $\ell_1 \in L_{i_1},\ldots, \ell_{\Delta'} \in L_{i_{\Delta'}}$ satisfy that the configuration $\{\ell_1,\ldots,\ell_{\Delta'}\}$ is not in the white constraint of $\Pi$. This is in contradiction with the correctness of $\mathcal{A}$, since, when $v$ considered the edges $e_{i_1},\ldots,e_{i_{\Delta'}}$, it added to the sets $L_{i_1},\ldots,L_{i_{\Delta'}}$ the outputs $\ell_1,\ldots,\ell_{\Delta'}$ of $\mathcal{A}$.
    
    We now prove that the black constraint of $\bar{\Pi}$ is satisfied. 
    Suppose, for a contradiction, that the multiset $C = \{L_1,\ldots,L_r\}$ obtained by a black node $u$ does not satisfy the black constraint of $\bar{\Pi}$. This means that there exists a subset 
    $S = \{L_{i_1},\ldots,L_{i_{r'}}\}$ of $C$ of size $r'$ where there exists a choice $\ell_1 \in L_{i_1},\ldots, \ell_{r'} \in L_{i_{r'}}$ satisfying that the configuration $\{\ell_1,\ldots,\ell_{r'}\}$ is not in the black constraint of $\Pi$. We show that this implies that $\mathcal{A}$ must fail on some input graph $G'$.
    We construct the graph $G'$ as follows. For each edge $e_{i_j}$ of the black node $u$, let $v_j$ be the white node connected to $e_{i_j}$. We select $\Delta'$ edges of each $v_j$ in such a way that:
    \begin{itemize}
        \item The edge $e_{i_j}$ is selected;
        \item The other $\Delta'-1$ edges are selected in such a way that $\mathcal{A}$, when run on $v_j$, outputs $\ell_j$ on the edge $e_{i_j}$. 
    \end{itemize}
    By construction, such a choice exists. We complete $G'$ in an arbitrary way, such that the maximum white degree is bounded by $\Delta'$ and the black degree is bounded by $r'$. Observe that $\mathcal{A}$ must fail on $G'$, and in particular on node $u$, since it gets the configuration $\{\ell_1,\ldots,\ell_{r'}\}$.

    We now prove that we can add elements to the sets assigned to the edges in order to make them right-closed w.r.t.\ the black diagram of $\Pi$, while still preserving the second and third requirement of \Cref{def:lift}. Let $L$ be the set assigned to edge $e$. For each element $\ell \in L$, we add to $L$ all the successors of $\ell$ in the black diagram of $\Pi$. Clearly, since we did not remove elements from $L$, the white constraint of $\Bar{\Pi}$ is still satisfied. Then, by the definition of the black diagram of $\Pi$, it holds that, if a configuration $C = \{\ell_1,\ldots,\ell_{r'}\}$ is allowed, then any configuration obtained by replacing arbitrary elements of $C$ with elements that can be reached by them in the diagram is also allowed. Thus, the black constraint of $\Bar{\Pi}$ is still satisfied.
\end{proof}
Note that, by following the exact same arguments, we obtain the following corollary.
\begin{corollary}\label{th:lift}
    Let $r$, $\Delta$, and $n$ be integers. Let $G$ be a $\Delta$-regular $r$-uniform hypergraph with $n$ nodes. Let $\Pi$ be a problem in the black-white formalism satisfying that the white configurations have size $\Delta' \le \Delta$ and the black configurations have size $r' \le r$. The problem $\Pi$ can be non-bipartitely solved in $0$ rounds by a white algorithm in the Supported LOCAL model on $G$ if and only if there exists a non-bipartite solution for $\bar{\Pi} := \mathrm{lift}_{\Delta,r}(\Pi)$ on $G$.
\end{corollary}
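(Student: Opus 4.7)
The plan is to reduce \Cref{th:lift} to \Cref{th:lift-bipartite} via the incidence graph construction, exploiting the fact that the preliminaries already define non-bipartite solving on a hypergraph $G$ to be the same as bipartite solving on the incidence graph $\mathrm{Inc}(G)$. Explicitly, $\mathrm{Inc}(G)$ has the nodes of $G$ as its white side and the hyperedges of $G$ as its black side, with an edge joining $v$ to $e$ precisely when $v$ is incident to $e$ in $G$. Since $G$ is $\Delta$-regular and $r$-uniform, $\mathrm{Inc}(G)$ is $(\Delta,r)$-biregular on exactly the correct number of white and black nodes to fit the hypotheses of \Cref{th:lift-bipartite}.

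The first step is to verify that Supported LOCAL computation on the hypergraph $G$ with input subhypergraph $G'$ of node-degree at most $\Delta'$ and rank at most $r'$ corresponds exactly to Supported LOCAL computation on $\mathrm{Inc}(G)$ with input subgraph $\mathrm{Inc}(G')$ of white degree at most $\Delta'$ and black degree at most $r'$. A $0$-round white algorithm in either setting uses the same information, namely full knowledge of $G$ (equivalently, of $\mathrm{Inc}(G)$) together with the identities of the white node's incident hyperedges (equivalently, incident edges in $\mathrm{Inc}(G)$) that belong to the input. The outputs also correspond: the label $o(v,e)$ assigned to a node-hyperedge pair in the non-bipartite view is the label assigned to the edge $\{v,e\}$ in the bipartite view, and the white and black constraints of $\Pi$ are checked on exactly the same multisets in both views.

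With this correspondence established, I would simply invoke \Cref{th:lift-bipartite} on $\mathrm{Inc}(G)$ with the same problem $\Pi$ and the same parameters $\Delta,r,\Delta',r'$. That theorem yields that $\Pi$ is bipartitely $0$-round solvable on $\mathrm{Inc}(G)$ in the Supported LOCAL model if and only if $\mathrm{lift}_{\Delta,r}(\Pi)$ admits a bipartite solution on $\mathrm{Inc}(G)$. Translating both sides back through the equivalence of the previous paragraph gives exactly the claim of \Cref{th:lift}: the left-hand side becomes the non-bipartite $0$-round solvability of $\Pi$ on $G$, and the right-hand side becomes the existence of a non-bipartite solution for $\mathrm{lift}_{\Delta,r}(\Pi)$ on $G$.

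The main, and essentially only, obstacle is bookkeeping: ensuring that the translation between the two views preserves all relevant information in both directions. In particular, the contradiction step of \Cref{th:lift-bipartite}, in which one constructs a bad bipartite input subgraph that forces the algorithm to err, must correspond to a legitimate subhypergraph of $G$; this is immediate from the bijective correspondence between subhypergraphs of $G$ with the prescribed degree and rank bounds and bipartite subgraphs of $\mathrm{Inc}(G)$ with the corresponding white and black degree bounds. Once this translation is in place, no new combinatorial content is required beyond what was already used to prove \Cref{th:lift-bipartite}.
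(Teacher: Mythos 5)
Your proposal is correct and matches the paper's intent: the paper proves the corollary by noting it follows ``by the exact same arguments'' as \Cref{th:lift-bipartite}, and since the preliminaries already define non-bipartite solving on a hypergraph as bipartite solving on its incidence graph, your black-box reduction through $\mathrm{Inc}(G)$ is exactly that argument made explicit. The only point worth checking---that the $0$-round white algorithm and the bad-subhypergraph construction translate faithfully between the two views---is handled correctly in your last paragraph.
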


In \Cref{sec:re}, we will prove that a lower bound sequence $\Pi_0,\ldots, \Pi_k$, combined with the non-$0$-round solvability of $\Pi_k$ in the Supported LOCAL model, implies a lower bound of $\min\{2k,\frac{g-4}{2}\}$ deterministic rounds in support graphs of girth $g$.
As we will see, the sequence $\Pi_0,\ldots, \Pi_k$ is defined in the same exact way as when using round elimination in the standard LOCAL model. Hence, we can reuse lower bound sequences that are already known from LOCAL lower bounds, and we only need to argue about non-$0$-round solvability of $\Pi_k$.
In the following, we assume that, on instances of size $n$, the ID space is exactly $\{1,\ldots,n\}$. This is without loss of generality: since all nodes know the support graph, given an ID assignment over a larger domain, the nodes can compute a consistent ID assigment over $\{1,\ldots,n\}$ without communication. In \Cref{sec:rand}, we will prove that, if a problem $\Pi$ has deterministic complexity $D_\Pi(n)$, and randomized complexity $R_\Pi(n)$, then $D_\Pi(n) \le R_\Pi(2^{3n^2})$. 
We now combine these results with \Cref{th:lift-bipartite} to obtain the following.
\begin{restatable}{theorem}{approachBipartite}\label{thm:approach-bipartite}
Let $\Pi$ be a problem in the black-white formalism. Let $\Delta'$ be the size of the multisets in the white constraint of $\Pi$, and let $r'$ be the size of the multisets in the black constraint of $\Pi$. Assume $\Delta' \ge 2$ and $r' \ge 2$.
Let $\Delta$ and $r$ be arbitrary integers satisfying $\Delta \ge \Delta'$ and $r \ge r'$.
Let $\Pi_0,\,\ldots,\Pi_k$ be a lower bound sequence, where $\Pi_0 = \Pi$. Let $\Pi'$ be some relaxation of $\Pi_k$.
Assume that there exist constants $c > 0$ and $\varepsilon > 0$, and a family $\mathcal{G}$ of graphs satisfying that, for any $n \ge (r\Delta)^c$, there exists a graph $G \in \mathcal{G}$ satisfying the following.
\begin{itemize}
    \item $G$ has at least $n / (\Delta r)^c$ and at most $n$ nodes.
    \item $G$ is $(\Delta,r)$-biregular.
    \item $G$ has girth at least $\varepsilon \log_{\Delta r} n$.
    \item There does not exist a valid bipartite solution for $\mathrm{lift}_{\Delta,r}(\Pi')$ on $G$.
\end{itemize}
    Then, for all $n$, there exists a bipartite graph in which white nodes have degree bounded by $\Delta$, black nodes have degree bounded by $r$, and there are exactly $n$ nodes, where any algorithm for bipartite-solving $\Pi$ requires at least $\min\{2k,(\varepsilon(\log_{\Delta r} (n) - c) - 4) / 2\} - 1$ deterministic rounds and $\min\{2k,(\varepsilon(\log_{\Delta r} (\sqrt{\frac{1}{3}\log n}) - c) - 4) / 2\} - 1$ randomized rounds in the Supported LOCAL model.
\end{restatable}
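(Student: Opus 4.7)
The plan is to combine the three main pieces developed in the paper: Theorem~\ref{th:lift-bipartite} (characterizing $0$-round solvability in Supported LOCAL), the round-elimination lemma of \Cref{sec:re} (\Cref{lem:re-works}), and the deterministic-to-randomized lifting of \Cref{sec:rand} (\Cref{lem:derand-supported}).

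First, I would fix an $n \ge (r\Delta)^c$ and extract from the hypothesis a graph $G \in \mathcal{G}$ with some number of nodes $n' \in [n/(\Delta r)^c,\, n]$ that is $(\Delta,r)$-biregular, has girth $g \ge \varepsilon \log_{\Delta r} n' \ge \varepsilon(\log_{\Delta r} n - c)$, and admits no bipartite solution for $\mathrm{lift}_{\Delta,r}(\Pi')$. Because the statement asks for a graph on exactly $n$ nodes (with only bounded degrees), I would pad $G$ with $n-n'$ isolated nodes to obtain the support graph $G^\star$; since any algorithm for $\Pi$ on $G^\star$ restricts to an algorithm for $\Pi$ on $G$, any round lower bound on $G$ transfers to $G^\star$.

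Next, by \Cref{th:lift-bipartite}, the non-existence of a bipartite solution for $\mathrm{lift}_{\Delta,r}(\Pi')$ on $G$ implies that $\Pi'$ cannot be bipartitely solved in $0$ rounds in the Supported LOCAL model on $G$. Because $\Pi'$ is a relaxation of $\Pi_k$, composing any hypothetical $0$-round white algorithm for $\Pi_k$ with the relaxation map on white configurations would produce a $0$-round white algorithm for $\Pi'$; hence $\Pi_k$ is also not $0$-round solvable on $G$. Feeding this non-solvability into \Cref{lem:re-works} applied to the lower bound sequence $\Pi_0, \ldots, \Pi_k$ yields a lower bound of $\min\{2k,\,(g-4)/2\}$ deterministic rounds for $\Pi = \Pi_0$ on $G$ (and therefore on $G^\star$). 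Substituting $g \ge \varepsilon(\log_{\Delta r} n - c)$ produces the claimed deterministic bound, with the final ``$-1$'' absorbing the slack introduced by rounding and by the gap between $n$ and $n'$.

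Finally, for the randomized statement I would invoke \Cref{lem:derand-supported}, which gives $D_\Pi(m) \le R_\Pi(2^{3m^2})$. Setting $m := \sqrt{(\log n)/3}$, so that $2^{3m^2} = n$, and plugging $m$ into the already-established deterministic lower bound turns it into a randomized lower bound on $n$-node instances; this is exactly the expression involving $\log_{\Delta r}(\sqrt{\tfrac{1}{3}\log n})$ in the statement. The main obstacle is purely bookkeeping: checking that the relaxation chain transfers $0$-round hardness all the way from $\mathrm{lift}_{\Delta,r}(\Pi')$ to $\Pi_k$ (this is essentially by definition of relaxation, but must be verified because the ``lift'' and ``relaxation'' constructions interact on white configurations in slightly different ways), and that the logarithmic loss from $n \mapsto n'$ matches the ``$-c$'' inside the logarithm. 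Once these bookkeeping items are verified, the theorem follows from a direct composition of the three cited results.
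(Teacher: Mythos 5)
Your proposal follows essentially the same route as the paper: pick $G$ from the family, pad to exactly $n$ nodes (the paper attaches a tree component rather than isolated nodes, which is immaterial), use \Cref{th:lift-bipartite} to turn non-existence of a solution for $\mathrm{lift}_{\Delta,r}(\Pi')$ into non-$0$-round white solvability, feed that into \Cref{lem:re-works}, and obtain the randomized bound from \Cref{lem:derand-supported} via $R_\Pi(n)\ge D_\Pi(\sqrt{\tfrac13\log n})$. Your explicit step transferring $0$-round hardness from $\Pi'$ back to $\Pi_k$ through the relaxation map is fine (the paper instead implicitly uses $\Pi_0,\ldots,\Pi_{k-1},\Pi'$ as the lower bound sequence); both work.

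The one place your accounting is wrong is the final $-1$. The gap between $n$ and $n'$ is already fully absorbed by the $-c$ inside the logarithm (since $n'\ge n/(\Delta r)^c$ gives $g\ge\varepsilon(\log_{\Delta r}n-c)$), and there is no rounding slack left over. The $-1$ is needed for a different reason: \Cref{lem:re-works} lower-bounds only \emph{white} algorithms, whereas the theorem claims a bound for \emph{any} algorithm bipartitely solving $\Pi$. A black algorithm with runtime $T$ can be converted into a white algorithm with runtime $T+1$, so the bound for general algorithms is the white-algorithm bound minus one. As written, your argument does not cover black algorithms; adding this one-round conversion closes the gap and recovers exactly the stated expression.
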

\begin{proof}
We start by proving the deterministic bound. If $n < (r \Delta )^c$, then the claimed lower bound is at most $0$, and hence it trivially holds. Thus, in the following, we assume $n \ge (r \Delta)^c$. We pick a graph $G$ satisfying the conditions of the statement, and let $n'$ be the number of nodes in $G$.
    We consider the graph $\bar{G}$ containing two connected components: one is $G$, and the other is an arbitrary tree where white nodes have maximum degree $\Delta$ and black nodes have maximum degree $r$, containing $n - n'$ nodes. We obtain a graph with exactly $n$ nodes. By assumption, there is no solution for $\mathrm{lift}_{\Delta,r}(\Pi')$ on $G$ (and hence neither on $\bar{G}$), and by \Cref{th:lift-bipartite} this implies that $\Pi'$ cannot be solved in $0$ rounds by a white algorithm on $G$.
    
    Hence, by \Cref{lem:re-works}, solving $\Pi$ on the component $G$ requires at least $\min\{2k,\frac{g-4}{2}\}$ rounds for a white algorithm, where $g$ is the girth of $G$. By assumption, $g$ is at least $\varepsilon \log_{\Delta r} (n / (\Delta r)^c) = \varepsilon ( \log_{\Delta r} (n) - c)$. Hence, solving $\Pi$ requires at least $\min\{2k,(\varepsilon(\log_{\Delta r} (n) - c) - 4) / 2\}$ deterministic rounds for a white algorithm on the component $G$, and hence also on $\Bar{G}$. Since a black algorithm can be converted, by spending $1$ additional round, into a white algorithm, we obtain the claimed deterministic lower bound.

    We now lower bound the randomized complexity.
    By \Cref{lem:derand-supported}, $D_\Pi(n) \le R_\Pi(2^{3n^2})$, and hence $R_\Pi(n) \ge D_\Pi(\sqrt{\frac{1}{3}\log n})$.
    Hence, the claim follows.
\end{proof}

By replacing, in the proof of \Cref{thm:approach-bipartite}, the application of \Cref{th:lift-bipartite}, \Cref{lem:re-works}, and \Cref{lem:derand-supported}, with \Cref{th:lift},  \Cref{cor:re-works-hyper}, and \Cref{lem:hyperderandomization}, we obtain the following.
\begin{restatable}{corollary}{approachHypergraphs}\label{thm:approach-hypergraphs}
Let $\Pi$ be a problem in the black-white formalism. Let $\Delta'$ be the size of the multisets in the white constraint of $\Pi$, and let $r'$ be the size of the multisets in the black constraint of $\Pi$. Assume $\Delta' \ge 2$ and $r' \ge 2$.
Let $\Delta$ and $r$ be arbitrary integers satisfying $\Delta \ge \Delta'$ and $r \ge r'$.
Let $\Pi_0,\,\ldots,\Pi_k$ be a lower bound sequence, where $\Pi_0 = \Pi$. Let $\Pi'$ be some relaxation of $\Pi_k$.
Assume that there exist constants $c > 0$ and $\varepsilon > 0$, and a family $\mathcal{G}$ of hypergraphs satisfying that, for any $n \ge \Delta^c$, there exists a hypergraph $G \in \mathcal{G}$ satisfying the following: $G$ has at least $n / (\Delta r)^c$ and at most $n$ nodes; $G$ is a $\Delta$-regular $r$-uniform linear hypergraph; $G$ has girth at least $\varepsilon \log_{\Delta r} n$; There does not exist a valid non-bipartite solution for $\mathrm{lift}_{\Delta,r}(\Pi')$ on $G$.

Then, for all $n$, there exists a hypergraph of maximum degree $\Delta$ and rank $r$, with exactly $n$ nodes, where any algorithm for non-bipartite-solving $\Pi$ requires at least $\min\{k,(\varepsilon(\log_{\Delta r} (n) - c) - 4) / 2\} - 1$ deterministic rounds and $\min\{k,(\varepsilon(\log_{\Delta r} (\sqrt[3]{\frac{1}{4}\log n}) - c) - 4) / 2\} - 1$ randomized rounds in the Supported LOCAL model.
\end{restatable}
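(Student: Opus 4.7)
The plan is to mirror the proof of \Cref{thm:approach-bipartite} step by step, substituting the hypergraph analogs of the three key tools used there, just as the preamble to the statement suggests.

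First, I would dispose of the trivial case: if $n < (\Delta r)^c$ then the claimed lower bound is at most $0$ and holds vacuously, so assume $n \ge (\Delta r)^c$. Using the hypothesis, pick a hypergraph $G \in \mathcal{G}$ with $n'$ nodes, where $n/(\Delta r)^c \le n' \le n$, and build $\bar{G}$ as the disjoint union of $G$ with any $\Delta$-regular $r$-uniform linear hypergraph of girth arbitrarily large on the remaining $n - n'$ nodes (e.g.\ a hypertree-like padding). Since $\bar{G}$ contains $G$ as a union of components, any non-bipartite solution for $\mathrm{lift}_{\Delta,r}(\Pi')$ on $\bar{G}$ would restrict to one on $G$, contradicting the hypothesis. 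Hence no such solution exists on $\bar{G}$, and by \Cref{th:lift} (the hypergraph analog of \Cref{th:lift-bipartite}) the problem $\Pi'$ cannot be non-bipartitely solved in $0$ rounds by a white algorithm on $\bar{G}$ in the Supported LOCAL model.

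Next, I would invoke \Cref{cor:re-works-hyper}, the hypergraph analog of \Cref{lem:re-works}, which takes the lower bound sequence $\Pi_0 = \Pi, \Pi_1, \ldots, \Pi_k$ together with the non-$0$-round-solvability of $\Pi_k$ (and hence of its relaxation $\Pi'$ is enough, by standard properties of relaxations propagating through round elimination) and delivers a deterministic lower bound of $\min\{k, (g-4)/2\}$ rounds for non-bipartitely solving $\Pi$ on $G$, where $g$ is the girth of $G$. Here the factor $k$ rather than $2k$ (as opposed to the bipartite version) reflects the structure of non-bipartite hypergraph round elimination. Substituting $g \ge \varepsilon \log_{\Delta r}(n/(\Delta r)^c) = \varepsilon(\log_{\Delta r}(n) - c)$ and accounting for the at-most-$1$ round cost of converting a black algorithm into a white one yields the deterministic bound stated in the corollary.

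Finally, the randomized bound is obtained by composing the just-derived deterministic bound with the hypergraph derandomization result \Cref{lem:hyperderandomization}, which by inspection of the stated expression $\sqrt[3]{\tfrac{1}{4}\log n}$ must give a relation of the form $D_\Pi(n) \le R_\Pi(2^{4n^3})$, i.e.\ $R_\Pi(n) \ge D_\Pi(\sqrt[3]{\tfrac{1}{4}\log n})$, playing exactly the role that \Cref{lem:derand-supported} plays in the bipartite proof. Substituting this into the deterministic lower bound yields the randomized bound verbatim. The only real work---which is essentially bookkeeping rather than a conceptual obstacle---is to verify that the three hypergraph analogs really are drop-in replacements with precisely the constants that appear in the statement; once that is granted, no step of the bipartite proof requires modification beyond renaming ``bipartite solution'' to ``non-bipartite solution'' and ``$(\Delta,r)$-biregular graph'' to ``$\Delta$-regular $r$-uniform linear hypergraph''.
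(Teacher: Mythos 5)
Your proposal matches the paper's proof exactly: the paper itself proves this corollary by stating that one replaces \Cref{th:lift-bipartite}, \Cref{lem:re-works}, and \Cref{lem:derand-supported} in the proof of \Cref{thm:approach-bipartite} with \Cref{th:lift}, \Cref{cor:re-works-hyper}, and \Cref{lem:hyperderandomization}, which is precisely the substitution you carry out, including the change from $2k$ to $k$ and the $\sqrt[3]{\frac{1}{4}\log n}$ coming from $D_\Pi(n) \le R_\Pi(2^{4n^3})$.
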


\section{Variants of Maximal Matching}\label{sec:matching}
In this section, we prove lower bounds for $x$-maximal $y$-matchings in the Supported LOCAL model.
More in detail, we prove the following theorem.
\begin{restatable}{theorem}{lbXyMatching}\label{thm:lb-xy-matching}
    Let $\mathcal{G}$ be the family of bipartite $2$-colored $\Delta$-regular graphs of girth $\Omega(\log_\Delta n)$.
    Assume that the support graph is from $\mathcal{G}$, and that the input graph has degree $\Delta'$ satisfying $\Delta \ge c\Delta'$ for some large-enough constant $c$.
    Then, the $x$-maximal $y$-matching problem requires $\Omega(\min\{(\Delta'-x) / y,\log_\Delta n\})$ rounds in the deterministic Supported LOCAL model and $\Omega(\min\{(\Delta'-x) / y,\log_\Delta \log n\})$ rounds in the randomized Supported LOCAL model.
\end{restatable}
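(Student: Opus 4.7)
I will instantiate \Cref{thm:approach-bipartite} on the existing $x$-maximal $y$-matching lower bound sequence $\Pi_0,\ldots,\Pi_k$ constructed in~\cite{trulytight} (with $\Pi_0$ the $x$-maximal $y$-matching problem, $\Pi_k$ its $k$-th round-elimination descendant, and $k=\Theta((\Delta'-x)/y)$). Because the sequence is already defined and its round-elimination structure has been verified in the LOCAL setting, reusing it here is free: by the black-white formalism developed in \Cref{sec:preliminaries}, the matching problem has white/black configurations of size $\Delta'$ over the input graph, so I apply the framework with $r=\Delta$ and a support degree also equal to $\Delta$. With this choice, it suffices to exhibit a family $\mathcal{G}$ of bipartite $2$-colored $\Delta$-regular support graphs of girth $\Omega(\log_\Delta n)$ on which $\mathrm{lift}_{\Delta,\Delta}(\Pi_k)$ has no bipartite solution; the deterministic bound then follows from \Cref{thm:approach-bipartite}, and the randomized bound from the built-in derandomization clause of that theorem.

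\textbf{Support graphs.} For the family $\mathcal{G}$ I would use a bipartite analogue of \Cref{lem:graph-family} (obtained either by the same probabilistic deletion argument on a random bipartite graph, or by taking the bipartite double cover of the graph supplied by \Cref{lem:graph-family} and controlling the girth loss), giving, for every large enough $n$, a bipartite $2$-colored $\Delta$-regular graph on $\Theta(n)$ nodes with girth $\Omega(\log_\Delta n)$ and independence number $O(n\log\Delta/\Delta)$. The key qualitative feature I need is that the graph has no independent set covering a constant fraction of the vertices, which, once $\Delta \geq c\Delta'$ for a large-enough constant $c$, is stronger than the density of any plausible ``matched'' node set forced by a lifted solution.

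\textbf{Main obstacle: ruling out $\mathrm{lift}_{\Delta,\Delta}(\Pi_k)$.} The heart of the proof is the purely combinatorial claim that no bipartite solution to $\mathrm{lift}_{\Delta,\Delta}(\Pi_k)$ exists on a graph from $\mathcal{G}$. Semantically, the last problem of the matching sequence from~\cite{trulytight} behaves like an almost-sinkless-orientation problem: each edge must commit to either a ``matched'' label or an ``unmatched'' label, and the white/black constraints enforce that the set $M$ of matched edges is a matching yet dominates enough of the vertex set. Passing to $\mathrm{lift}_{\Delta,\Delta}(\Pi_k)$ assigns to every edge a right-closed label-set and, via \Cref{def:lift}, requires (i) that for every white node $v$ and every $\Delta'$-subset of its incident label-sets there is a compatible choice witnessing a valid $\Pi_k$-configuration, and (ii) that every black node's $\Delta'$-subset of incident label-sets be universally compatible. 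I would follow the strength/diagram machinery of \Cref{sec:preliminaries} to extract, from a hypothetical solution, two node sets: the set $A$ of vertices whose incident label-sets all contain a ``matched''-type label, and the set $B$ of vertices not in $A$. The white constraint forces $A$ to be a dominating set of $G$; the black constraint together with the girth of $G$ forces $A$ to avoid adjacent pairs, i.e., to be independent. The resulting independent dominating set would have size $\Omega(n/\Delta')$, contradicting the $O(n\log\Delta/\Delta)$ independence bound whenever $\Delta \ge c\Delta'$, which is exactly the quantitative regime of the theorem. Executing this deduction precisely through the explicit labels and configurations of $\Pi_k$ from~\cite{trulytight}, rather than through the clean sinkless-orientation intuition, is the step I expect to require the most work; once done, \Cref{thm:approach-bipartite} plugs in the values of $k$, $\Delta$ and $r$ and yields $\Omega(\min\{(\Delta'-x)/y,\log_\Delta n\})$ deterministic rounds and $\Omega(\min\{(\Delta'-x)/y,\log_\Delta \log n\})$ randomized rounds, matching the statement.
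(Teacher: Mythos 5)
Your high-level scaffolding matches the paper exactly: instantiate \Cref{thm:approach-bipartite} with the lower bound sequence of~\cite{trulytight} (after relaxing $\Pi_k=\Pi_{\Delta'}(x+ky,y)$ to $\Pi_{\Delta'}(\Delta'-1-y,\,y)$ and absorbing the $2$-round loss from \Cref{lem:xy-to-re}), and build the support graph as the bipartite double cover of the graph from \Cref{lem:graph-family}. The gap is in the one step you yourself flag as the hard one: ruling out a bipartite solution of $\mathrm{lift}_{\Delta,\Delta}(\Pi_k)$. Your plan---extract an independent dominating set $A$ of size $\Omega(n/\Delta')$ and contradict the independence number $O(n\log\Delta/\Delta)$---fails on two counts. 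First, quantitatively: $\Omega(n/\Delta')$ exceeds $O(n\log\Delta/\Delta)$ only when $\Delta=\Omega(\Delta'\log\Delta)$, whereas the theorem assumes only $\Delta\ge c\Delta'$ for a constant $c$; in that regime the independence bound is the \emph{larger} of the two quantities and no contradiction arises. (The independence-number property of \Cref{lem:graph-family} is genuinely needed for the coloring and ruling-set results, where a $\Delta/\log\Delta$ loss appears in the statements for exactly this reason, but it plays no role in the matching bound.) Second, semantically: the constraints of $\Pi_{\Delta'}(x',y)$ do not force the ``matched'' vertices to be independent---a matching pairs up \emph{adjacent} vertices, and moreover up to $y$ incident $\M$-edges per node are allowed---so the reduction to an independent set is not available.

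The paper's actual argument is a degree-counting contradiction that never mentions independent sets. From the white constraint of the lift, every white node must place a label-set containing $\M$ or $\P$ on at least $\Delta-\Delta'+1$ of its $\Delta$ incident edges (because every white configuration of $\Pi_{\Delta'}(x',y)$ uses at least one label from $\{\M,\P,\Z\}$, so at most $\Delta'-1$ incident label-sets can avoid them). The black constraint caps the number of $\M$-containing label-sets per black node at $y$ and the number of $\P$-containing label-sets per black node at $\Delta'-1$. Splitting white nodes by whether at least $(\Delta-\Delta')/2$ of their edges carry $\M$ yields a lower bound of $n(\tfrac{\Delta-\Delta'}{2}-y)$ on $\P$-carrying edges versus an upper bound of $n(\Delta'-1)$, which already contradicts at $\Delta=5\Delta'$. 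If you want to salvage your write-up, replace the independent-dominating-set step with this counting argument; the rest of your outline then goes through as in the paper.
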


\subsection {Problem Definition}
In order to prove \Cref{thm:lb-xy-matching}, we consider a family of problems that we will later show to be strongly related with  $x$-maximal $y$-matchings.
\paragraph{A family of problems in the black-white formalism.} We define $\Pi_\Delta(x,y)$ as follows.
\begin{definition}[The problem $\Pi_\Delta(x,y)$]
The problem $\Pi_\Delta(x,y)$ is defined via the following white and black constraints.
\begin{equation*}
	\begin{aligned}
		\begin{aligned}
			\Pi^W_\Delta(x,y)\text{:}\\
	&\X^{y-1} \s \M \s \O^{\Delta-y} \\
	&\X^y \s \O^{x} \s \P^{\Delta-y-x} \\
	&\X^y \s \Z \s \O^{\Delta-y-1}
 		\end{aligned}
		\qquad
		\begin{aligned}
  			\Pi^B_\Delta(x,y)\text{:}\\
	&[\M \Z \P \O \X]^{y-1} \s [\M \X] \s [\P \O \X]^{\Delta-y} \\
	&[\M \Z \P \O \X]^y \s [\P \O \X]^{x} \s [\O \X]^{\Delta-y-x} \\
	&[\M \Z \P \O \X]^y \s [\X] \s [\P \O \X]^{\Delta-y-1}
		\end{aligned}
	\end{aligned}
\end{equation*}
\end{definition}
We observe that, by increasing the value of the parameters $x$ and $y$, we obtain a relaxed problem.
\begin{observation}\label{obs:notharder}
 For any $x' \ge x$ and any $y' \ge y$, the problem $\Pi_\Delta(x',y')$ is a relaxation of $\Pi_\Delta(x,y)$.
\end{observation}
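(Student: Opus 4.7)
The plan is to exhibit an explicit relaxation map $f$ between the ordered white configurations of $\Pi_\Delta(x,y)$ and those of $\Pi_\Delta(x',y')$, and then verify that the induced label-to-set map $r$ sends every black configuration of $\Pi_\Delta(x,y)$ into some black configuration of $\Pi_\Delta(x',y')$ under arbitrary label choices.

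First I would map each of the three white configurations of $\Pi_\Delta(x,y)$ to the structurally identical white configuration of $\Pi_\Delta(x',y')$ by a simple relabeling: in $\X^{y-1}\s\M\s\O^{\Delta-y}$ and in $\X^y\s\Z\s\O^{\Delta-y-1}$, turn $y'-y$ of the $\O$s into $\X$s; and in $\X^y\s\O^x\s\P^{\Delta-y-x}$, split the block of $\P$s into $y'-y$ copies of $\X$, $x'-x$ copies of $\O$, and $\Delta-y'-x'$ copies of $\P$. Reading off, for each label of $\Pi_\Delta(x,y)$, the set of labels of $\Pi_\Delta(x',y')$ it may be mapped to yields $r(\X)=\{\X\}$, $r(\M)=\{\M\}$, $r(\Z)=\{\Z\}$, $r(\O)\subseteq\{\O,\X\}$, and $r(\P)\subseteq\{\P,\O,\X\}$.

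Second, I would verify for each black configuration of $\Pi_\Delta(x,y)$ that any selection of labels from the corresponding $r$-sets lies in a black configuration of $\Pi_\Delta(x',y')$. The singleton slots $[\M\X]$ and $[\X]$ appearing in the black constraints are preserved verbatim, since $r(\M)=\{\M\}$ and $r(\X)=\{\X\}$. The slots $[\P\O\X]$ and $[\O\X]$ map into $[\P\O\X]$ because each of $r(\O)$, $r(\P)$, $r(\X)$ lies in $\{\P,\O,\X\}$. The fully unrestricted target slots $[\M\Z\P\O\X]^{y'-1}$ or $[\M\Z\P\O\X]^{y'}$ can absorb positions ``demoted'' from the stricter source groups because $\{\P,\O,\X\}\subseteq\{\M,\Z,\P,\O,\X\}$.

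The main bookkeeping lies in the second black configuration, which has three distinct slot-types: the target needs $y'-y$ additional positions in the $[\M\Z\P\O\X]$-group and $x'-x$ additional positions in the $[\P\O\X]$-group compared to the source, and both shifts must draw from the source's $[\O\X]^{\Delta-y-x}$ block (which is legal since $[\O\X] \subseteq [\P\O\X] \subseteq [\M\Z\P\O\X]$). The identity $(y'-y)+(x'-x)+(\Delta-y'-x') = \Delta-y-x$ confirms that the counts line up exactly; the first and third black configurations follow from the same but simpler argument.
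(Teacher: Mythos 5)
Your proof is correct and follows essentially the same route as the paper's: both map each white configuration of $\Pi_\Delta(x,y)$ to its counterpart in $\Pi_\Delta(x',y')$ by promoting labels along $\P \to \O \to \X$. The only difference is that you verify the black constraint explicitly slot-by-slot, whereas the paper dispatches it in one line by noting that every replacement moves a label to one reachable from it in the black diagram.
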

\begin{proof}
We show that a white node $v$ can convert a solution for $\Pi_\Delta(x,y)$ into a solution for $\Pi_\Delta(x',y')$ without communication.
    Each white node $v$ operates as follows. If its output is $\X^{y-1} \s \M \s \O^{\Delta-y}$, it converts the required amount of $\O$ into $\X$ in an arbitrary way, in order to obtain the configuration $\X^{y'-1} \s \M \s \O^{\Delta-y'}$. If its output is $\X^y \s \O^{x} \s \P^{\Delta-y-x}$, it converts the required amount of $\P$ into $\O$ or $\X$, and the required amount of $\O$ into $\X$, in an arbitrary way, in order to obtain the configuration $\X^{y'} \s \O^{x'} \s \P^{\Delta-y'-x'}$.  If its output is $\X^y \s \Z \s \O^{\Delta-y-1}$, it converts the required amount of $\O$ into $\X$, in an arbitrary way, in order to obtain the configuration $\X^{y'} \s \Z \s \O^{\Delta-{y'}-1}$.
    Observe that, since we only replace labels with ones that can be reached from them in the black diagram of the problem (depicted in \Cref{fig:BD_xy_matching}), then the constraints are still satisfied on the black nodes.
\end{proof}

\paragraph{What is known about these problems. } In \cite{trulytight}, it has been shown that the problem $\Pi_\Delta(x,y)$ is strictly related to $x$-maximal $y$-matching, and in particular that, given a solution for the latter, it is possible to solve the former in just $2$ additional rounds of communication.
\begin{lemma}[Lemma 4.2 and 4.3 of \cite{trulytight}]\label{lem:xy-to-re}
    In the LOCAL model, given a solution for $x$-maximal $y$-matching, it is possible to solve $\Pi_\Delta(x,y)$ in $2$ rounds.
\end{lemma}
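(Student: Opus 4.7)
The plan is to transform a solution $M$ for $x$-maximal $y$-matching on the input graph $G'$ into an edge labeling over $\{\M, \X, \P, \O, \Z\}$ that satisfies both the white and the black constraints of $\Pi_\Delta(x,y)$. Two rounds on $G'$ suffice because after two rounds every node sees its entire radius-$2$ ball and can therefore both compute canonical tie-breaking information based on identifiers and simulate the decisions of its neighbors so that the two endpoints of every edge assign the same label.

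For a matched node $v$ with $k_v \in \{1,\ldots,y\}$ incident matching edges, I would pick a canonical matching edge (for instance the one whose neighbor has smallest identifier), label it $\M$, label the remaining $k_v - 1$ matching edges with $\X$, label $y - k_v$ canonically chosen non-matching edges with $\X$ as padding, and label the remaining $\Delta - y$ non-matching edges with $\O$. This produces the white Type-1 multiset $\X^{y-1} \s \M \s \O^{\Delta-y}$. For an unmatched node $v$, by $x$-maximality there are at most $x$ unmatched neighbors; I would place $y$ canonical $\X$-labels and then split into two sub-rules. Either distribute $x$ $\O$-labels (starting with the unmatched-neighbor edges and padding with canonical matched-neighbor edges if fewer than $x$ unmatched neighbors exist) and $\Delta-y-x$ $\P$-labels on the remaining matched-neighbor edges to obtain white Type-2; or mark a single canonical matched neighbor with $\Z$ and label the remaining $\Delta-y-1$ edges with $\O$ to obtain white Type-3, used in sub-cases where the former configuration cannot fit the black constraint at a black neighbor.

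The main technical obstacle is verifying the black constraints. For each node $u$ on the opposite side of the bipartition, the multiset of labels received from its white neighbors must fit one of the three black patterns $[\M \Z \P \O \X]^{y-1} \s [\M \X] \s [\P \O \X]^{\Delta-y}$, $[\M \Z \P \O \X]^y \s [\P \O \X]^x \s [\O \X]^{\Delta-y-x}$, or $[\M \Z \P \O \X]^y \s [\X] \s [\P \O \X]^{\Delta-y-1}$. The verification reduces to a bookkeeping argument organized by whether $u$ is itself matched or unmatched and by the matching structure in the neighborhood of $u$. The key leverage is that $\X$ appears in every slot of every black constraint, so the canonical padding with $\X$ always fits, while the restricted labels $\O$, $\P$, $\Z$ can be placed into the slots $[\P \O \X]$ and $[\O \X]$ as needed. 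The subtlest subcase is when $u$ is matched but has an unmatched neighbor: the $\P$ or $\Z$ label contributed by the unmatched neighbor has to fit into the $[\M \Z \P \O \X]^{y-1}$ or $[\P \O \X]^{\Delta-y}$ slot of black Type-1, which it does precisely because these labels are admitted there. The second round is what makes the distance-$2$ canonical selection of the $\Z$-witness consistent across both endpoints and thus closes the argument.
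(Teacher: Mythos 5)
The paper does not actually prove this lemma; it imports it verbatim from the cited work, so the only question is whether your construction is sound on its own. Your white-side labeling is essentially right, but the black-side verification---which you correctly identify as the main obstacle---breaks exactly in the case you call the subtlest. You assert that a $\Z$ contributed by an unmatched white neighbor ``has to fit into the $[\M \Z \P \O \X]^{y-1}$ or $[\P \O \X]^{\Delta-y}$ slot of black Type-1, which it does precisely because these labels are admitted there.'' This is false for $\Z$: the group $[\P \O \X]^{\Delta-y}$ does not admit $\Z$, and the slots that do admit $\Z$ (and $\M$) number only $y-1$ in the first black configuration and $y$ in the other two. The real obligation is therefore a \emph{counting} one: at every black node the number of incident edges labeled $\M$ or $\Z$ must not exceed the available slots. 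A black node $u$ with $y$ matching edges, each of which happens to be the canonical $\M$-edge of its white endpoint, already receives $y$ labels $\M$; a single additional $\Z$ from an unmatched neighbor then violates all three black configurations. Since you never specify when an unmatched white node switches from Type 2 to Type 3 (the criterion ``where the former configuration cannot fit the black constraint'' is circular), this failure mode is not excluded and the argument does not close.

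The repair is that white Type 3 is never needed. By $x$-maximality, an unmatched white node of degree $\Delta$ has at most $x$ unmatched neighbors, so it can always realize $\X^y \s \O^{x} \s \P^{\Delta-y-x}$ while placing every $\P$ (and, if convenient, every $\X$) on edges to \emph{matched} black neighbors and $\O$ on the unmatched ones; no $\Z$ is ever emitted. Then the only capacity-limited label reaching a black node is $\M$, which appears only on matching edges and hence at most $y$ times per node; the $[\M \X]$ slot of the first black configuration is fillable because a matched black node has at least one matching edge, labeled $\M$ or $\X$; unmatched black nodes receive only $\O$ and $\X$, which the second black configuration absorbs; and white nodes of degree less than $\Delta$, being unconstrained, can output $\X$ everywhere. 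With that change your construction goes through (and in fact a single round of communication, enough for each white node to learn which neighbors are matched, already suffices for the stated two-round bound).
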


In \cite{trulytight}, it is shown how problems $\Pi_\Delta(x,y)$ with different parameters are related.
\begin{lemma}[Lemma 4.1 of \cite{trulytight}, rephrased]\label{lem:trulytight}
Assume $x+2y \le \Delta$.
    Then, for any $\Delta \ge 2$, $1 \le y \le \Delta - 1$, and $0 \le x \le \Delta - y$, $\Pi_\Delta(x+y,y)$ is a relaxation of $\mathrm{RE}(\Pi_\Delta(x,y))$.
\end{lemma}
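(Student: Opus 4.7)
The plan is to verify the claim by directly applying the standard round elimination construction to $\Pi_\Delta(x,y)$ and exhibiting an explicit relaxation map onto $\Pi_\Delta(x+y,y)$. First, I would work out the black diagram of $\Pi_\Delta(x,y)$: inspecting the three black configurations shows that $\X$ is maximal (it can substitute any other label in any black configuration), $\O$ dominates both $\P$ and $\Z$ (since $\O$ appears wherever $\P$ or $\Z$ does), while $\M$ and $\P$ each dominate $\Z$. The right-closed subsets of $\Sigma_{\Pi_\Delta(x,y)}$ with respect to this diagram thus form a small and explicit list, which constitutes the label alphabet of $\mathrm{RE}(\Pi_\Delta(x,y))$.

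Next, I would define the relaxation map $f$ on new white configurations by classifying each according to which of the three old white configurations of $\Pi_\Delta(x,y)$ witnesses its existential definition. The interesting case concerns a new white configuration derived from $\X^y \s \O^{x} \s \P^{\Delta-y-x}$: it is a multiset of $\Delta$ label-sets from which one can choose $y$ copies of $\X$, $x$ copies of $\O$, and $\Delta-y-x$ copies of $\P$. The key observation is that right-closure of the label-sets with respect to the black diagram forces any label-set containing $\P$ to also contain $\O$. Hence I can reassign, under $f$, exactly $y$ of the label-sets playing the role of $\P$ to instead play the role of $\O$. This produces the target configuration $\X^y \s \O^{x+y} \s \P^{\Delta-2y-x}$ of $\Pi_\Delta(x+y,y)$, where the hypothesis $x+2y \le \Delta$ ensures $\Delta-2y-x \ge 0$. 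The remaining two white configurations (those containing $\M$ or $\Z$) are mapped in the obvious way, since $\Pi_\Delta(x+y,y)$ shares them verbatim with $\Pi_\Delta(x,y)$.

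The final step is to verify the relaxation condition on the black side: for every black configuration of $\mathrm{RE}(\Pi_\Delta(x,y))$ and every tuple chosen from the induced sets $r(\ell)$, the resulting multiset must lie in the black constraint of $\Pi_\Delta(x+y,y)$. I would proceed by a case analysis on which of the three old black configurations of $\Pi_\Delta(x,y)$ is satisfied universally by a given new black configuration, and argue that the image automatically satisfies the matching black constraint of $\Pi_\Delta(x+y,y)$, using the fact that the latter differs from the former only by loosening the bound on the count of $\O$-type positions, and that every label-set used by the image respects the diagram of $\Pi_\Delta(x,y)$.

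The main obstacle I expect is the combinatorial bookkeeping of the second step: after reassigning $y$ copies of $\P$-carrying label-sets to $\O$, the induced label-to-label maps $r(\ell)$ must stay restrained enough that no black configuration of $\Pi_\Delta(x+y,y)$ is violated. The hypothesis $x+2y \le \Delta$ is precisely what provides the room to perform this reassignment, and the argument closely mirrors the original construction from Lemma~4.1 of \cite{trulytight}, of which the present statement is a direct rephrasing.
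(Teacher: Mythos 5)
The paper does not actually prove this lemma: it is imported verbatim (``rephrased'') from Lemma~4.1 of \cite{trulytight}, so there is no in-paper proof to compare against. Judged on its own merits, your sketch has the right target and the right intuition --- the middle white configuration gains $y$ extra $\O$'s, the hypothesis $x+2y\le\Delta$ makes the exponent $\Delta-2y-x$ non-negative, and the fact that any right-closed set containing $\P$ also contains $\O$ is indeed what lets $\P$-positions be re-read as $\O$-positions --- but the structure of your argument is off by one half-step, and that is a genuine gap.

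Concretely, the paper defines $\mathrm{RE}(\Pi) := \rere(\re(\Pi))$, i.e.\ \emph{two} applications of the set-forming operation. The object you describe --- alphabet given by the right-closed subsets of $\Sigma_{\Pi_\Delta(x,y)}$, and white configurations being multisets of label-sets from which one can \emph{existentially} choose an old white configuration --- is the intermediate problem $\re(\Pi_\Delta(x,y))$, not $\mathrm{RE}(\Pi_\Delta(x,y))$. The latter has labels that are sets of sets, its white constraint is the \emph{maximal/universal} one built on top of the white constraint of $\re(\Pi_\Delta(x,y))$, and its black constraint is the existential one; the relaxation map $f$ from the paper's definition must be defined on these second-level white configurations. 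Your plan also omits the technically dominant part of any such proof: explicitly enumerating the maximal configurations $\{\L_1,\ldots,\L_\Delta\}$ with every choice in $C_B$ in the first half-step, and then repeating the maximality analysis on the white side in the second half-step. Without those enumerations you cannot carry out the final verification that every choice from the induced sets $r(\ell)$ lands in $\Pi^B_\Delta(x+y,y)$. To make this rigorous you would either have to perform both half-steps explicitly (as \cite{trulytight} does over several pages) or simply cite their computation, which is what the paper itself does.
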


By applying \Cref{lem:trulytight} for $k$ times, we obtain the following.
\begin{corollary}\label{cor:xy-re-sequence}
    Assume $x + (k+1)y \le \Delta$. Then, there exists a lower bound sequence $\Pi_1,\ldots,\Pi_k$, where $\Pi_1 = \Pi_\Delta(x,y)$ and $\Pi_k = \Pi_\Delta(x+ky,y)$.
\end{corollary}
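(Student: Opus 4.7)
The proof is a straightforward induction: I will iterate Lemma~\ref{lem:trulytight} and only need to check that its hypothesis remains satisfied at every step of the iteration under the assumption $x + (k+1)y \le \Delta$.

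The plan is to define the sequence explicitly by setting $\Pi_i := \Pi_\Delta(x + (i-1)y, y)$ for $i = 1, \ldots, k$ (matching the two endpoint identities $\Pi_1 = \Pi_\Delta(x,y)$ and, up to the expected index shift, $\Pi_k = \Pi_\Delta(x+ky,y)$ at the final step). To establish that this is a lower bound sequence, I must check, for each $1 \le i \le k-1$, that $\Pi_{i+1}$ is a relaxation of $\mathrm{RE}(\Pi_i)$. Unpacking the definitions, this reduces to showing that $\Pi_\Delta\bigl((x+(i-1)y) + y, y\bigr)$ is a relaxation of $\mathrm{RE}\bigl(\Pi_\Delta(x+(i-1)y, y)\bigr)$, which is exactly the content of Lemma~\ref{lem:trulytight} applied with first parameter $x' := x + (i-1)y$.

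The only thing left is to verify Lemma~\ref{lem:trulytight}'s hypothesis $x' + 2y \le \Delta$ at each application. Since $i \le k-1$, we have $x' + 2y = x + (i+1)y \le x + ky \le x + (k+1)y \le \Delta$, so the hypothesis holds throughout. The auxiliary conditions $\Delta \ge 2$, $1 \le y \le \Delta - 1$, and $0 \le x' \le \Delta - y$ follow from $y \ge 1$ together with the main bound $x + (k+1)y \le \Delta$ (which forces $x' + y \le \Delta$, hence $x' \le \Delta - y$; in particular $y \le \Delta - 1$ and $\Delta \ge 2y \ge 2$).

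The main obstacle is purely bookkeeping: one must be careful with the off-by-one in the indexing of the sequence (versus the statement of Lemma~\ref{lem:trulytight}, which consumes one unit of slack per relaxation step) and confirm that the slack in the hypothesis $x + (k+1)y \le \Delta$ is exactly enough to cover all $k-1$ applications. No non-trivial construction is needed beyond the explicit sequence above; the claim follows immediately by concatenating the relaxation relations.
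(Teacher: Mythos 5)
Your proof is correct and follows essentially the same route as the paper's own one-line argument: iterate Lemma~\ref{lem:trulytight}, verifying its hypothesis $x'+2y\le\Delta$ at each application, which is exactly what the assumption $x+(k+1)y\le\Delta$ provides (indeed it is tight for the final application, which is applied to first parameter $x+(k-1)y$ and requires $x+(k+1)y\le\Delta$). The off-by-one you flag is present in the corollary's own statement --- a sequence of $k$ problems each stepping the first parameter by $y$ cannot simultaneously start at $x$ and end at $x+ky$ --- so your explicit handling of the index shift is appropriate and does not constitute a gap.
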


\subsection{A Lower Bound for the Supported LOCAL Model}
In the remainder of the section we prove the claimed lower bound for $x$-maximal $y$-matchings in the Supported LOCAL model. More in detail, by exploiting \Cref{thm:approach-bipartite}, we will prove that, for any fixed $\Delta$, the problem $\Pi_{\Delta'}(x,y)$, in the case where the support graph has degree bounded by $\Delta$ and the input graph has degree bounded by $\Delta'$ satisfying $\Delta \ge c \Delta'$ for some large-enough constant $c$, requires at least $\min\{k, \varepsilon \log_\Delta n\} - 1$ deterministic and $\min\{k, \varepsilon \log_\Delta \log n\} - 1$ randomized  rounds, for some absolute constant $\varepsilon$ (i.e., that does not depend on $\Delta$ and $n$), where $k := \lfloor \frac{\Delta'-x}{y} \rfloor - 2$. 
By \Cref{lem:xy-to-re}, by decreasing the lower bound by $2$, we obtain a lower bound that holds also for $x$-maximal $y$-matchings.
Note that such a statement implies \Cref{thm:lb-xy-matching}. Hence, in the following, let $\Delta$ be a fixed value.

Observe that $x + ky \le x + (\frac{\Delta'-x}{y} - 2) y \le x + \Delta' - x - 2y = \Delta' - 2y \le \Delta' - 1 - y =: x'$. Hence, by \Cref{obs:notharder}, $\Pi_{\Delta'}(x',y)$ is a relaxation of $\Pi_k$. Thus, there exists a lower bound sequence of length $k := \lfloor \frac{\Delta'-x}{y} \rfloor - 2$, where the first problem is $\Pi_{\Delta'}(x,y)$ and the last problem is $\Pi_{\Delta'}(x',y)$. Hence, by \Cref{thm:approach-bipartite} (applied with $c=1$), if we prove that, for any $n \ge \Delta^2$, there exists a $\Delta$-regular bipartite graph $G$ with at least $n / \Delta^2$ nodes and at most $n$ nodes, of girth at least $\epsilon' \log_\Delta n$ (for some absolute constant $\varepsilon'$), and where $\mathrm{lift}_{\Delta,\Delta}(\Pi_{\Delta'}(x',y))$ has no bipartite solution, then we obtain a lower bound for $\Pi_{\Delta'}(x,y)$, as desired. 

In the remainder of the section, we prove that such graphs exist. For a given $n$, we consider a number $n' \in \{n,n-1,n-2,n-3\}$ such that $n' \Delta /2$ is even. We take a graph of size $n'/2$ from the family given by \Cref{lem:graph-family}, and then we take its bipartite double cover. We obtain a graph $G$ of size $n'$ that is $\Delta$-biregular and that satisfies the requirement on the girth. Note that $n/\Delta^2 \le n/4 \le n' \le n$, as required. Let $\Pi:= \Pi_{\Delta'}(x',y)$, and let $\Bar{\Pi} := \mathrm{lift}_{\Delta,\Delta}(\Pi)$. What remains to be done is to show that, on $G$, the problem $\Bar{\Pi}$ is unsolvable.
For ease of reading, we report here the problem $\Pi$.
\begin{equation*}
	\begin{aligned}
		\begin{aligned}
			\Pi^W_{\Delta'}(x',y)\text{:}\\
	&\X^{y-1} \s \M \s \O^{\Delta'-y} \\
	&\X^y \s \O^{\Delta' -1 -y} \s \P \\
	&\X^y \s \Z \s \O^{\Delta-y-1}
 		\end{aligned}
		\qquad
		\begin{aligned}
  			\Pi^B_{\Delta'}(x',y)\text{:}\\
	&[\M \Z \P \O \X]^{y-1} \s [\M \X] \s [\P \O \X]^{\Delta'-y} \\
	&[\M \Z \P \O \X]^y \s [\P \O \X]^{\Delta' - 1 - y} \s [\O \X] \\
	&[\M \Z \P \O \X]^y \s [\X] \s [\P \O \X]^{\Delta'-y-1}
		\end{aligned}
	\end{aligned}
\end{equation*}
In the following, we denote with $2n$ the number of nodes of $G$.
    Assume, for a contradiction, that there exists a bipartite solution for $\Bar{\Pi}$ on $G$. For each edge $e$, let $S_e$ be the set of labels (of $\Pi$) assigned to $e$.  In the following,  with $\mybox{L_1 \ldots L_k}$,  we denote the set $\{L_1,\ldots,L_k\}$.
    
    Recall that, by \Cref{th:lift}, each set $S_e$ is non-empty and right-closed w.r.t.\ the black diagram of $\Pi$, which is shown in \Cref{fig:BD_xy_matching}. That is, if in the diagram there is an arrow from a label $A$ to a label $B$, then $A \in S_e \implies B \in S_e$. Thus, each set $S_e$ can only be one of the following label-sets: $\bX$,$\bOX$, $\bMX$, $\bMOX$, $\bPOX$, $\bMPOX$, $\bZMPOX$.

    \begin{figure}[t]
        \centering
        \includegraphics[width = 0.3\textwidth]{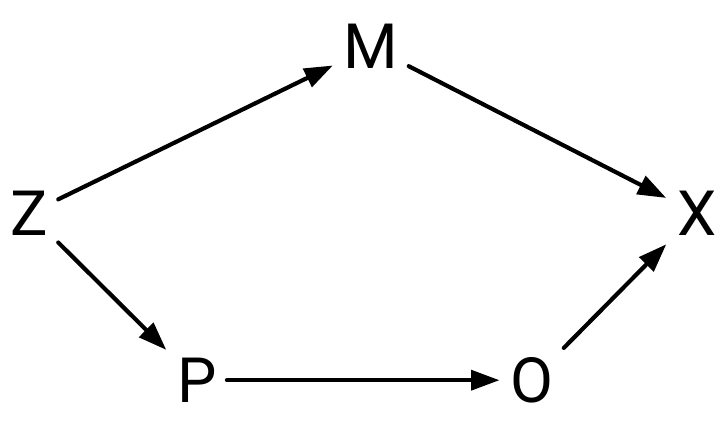}
        \caption{Black diagram of $\Pi$.}
        \label{fig:BD_xy_matching}
    \end{figure}

    In order to prove that there cannot be a solution for $\Bar{\Pi}$, we apply the following strategy. First, we prove that, in order to satisfy the constraints of $\Bar{\Pi}$ on the white nodes, \emph{at least} some amount of edges need to be labeled with label-sets that contain $\P$. Then, we prove that, in order to satisfy the constraints of $\Bar{\Pi}$ on the black nodes, \emph{at most} some amount of edges need to be labeled with label-sets that contain $\P$. We then show that, for $c$ large enough, the two provided bounds are not compatible, implying a contradiction.
    We first prove the following lemma.

\begin{lemma}\label{lem:ub-m}
    Let $2n$ be the number of nodes of $G$.
    Any solution for $\Bar{\Pi}$ must satisfy that at most  $n y$ edges are labeled with label-sets containing $\M$.
\end{lemma}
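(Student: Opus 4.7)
The plan is to prove the bound by a per-black-node argument: I will show that every black node $u$ is incident to at most $y$ edges whose label-set contains $\M$, and then sum over the $n$ black nodes. Since $G$ is $\Delta$-biregular bipartite with $2n$ vertices, both sides contain exactly $n$ nodes, and every edge has a unique black endpoint; hence the total number of edges whose label-set contains $\M$ equals $\sum_u b_u$, where $b_u$ is the number of $\M$-containing label-sets incident to $u$.

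Fix a black node $u$ with incident label-sets $L_1, \dots, L_\Delta$ and suppose for contradiction that $b_u \ge y+1$. Because $\Delta \ge \Delta'$, I can select a subset $S$ of $\Delta'$ of the incident edges that includes $y+1$ of the $\M$-containing label-sets: first pick $y+1$ such label-sets, then pad with $\Delta' - y - 1$ arbitrary remaining label-sets, which is possible since $\Delta - (y+1) \ge \Delta' - y - 1$. By the black constraint of $\Bar{\Pi}$ (\Cref{def:lift}), for every choice $\ell_i \in L_i$ with $i \in S$ the multiset $\{\ell_i : i \in S\}$ must be a configuration of $\Pi^B_{\Delta'}(x',y)$. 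I pick the choice that selects $\M$ from each of the $y+1$ $\M$-containing label-sets in $S$ and any allowed label from the remaining ones; the resulting multiset contains at least $y+1$ copies of $\M$.

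To reach a contradiction, I verify that no configuration of $\Pi^B_{\Delta'}(x',y)$ admits $y+1$ copies of $\M$. Inspecting the three condensed configurations, the only slots that allow $\M$ are: the $y-1$ free slots together with the single $[\M \X]$ slot in the first configuration (so $y$ slots in total), the $y$ free slots in the second, and the $y$ free slots in the third; the remaining slots are of the form $[\P \O \X]$, $[\O \X]$, or $[\X]$ and forbid $\M$. So every configuration contains at most $y$ copies of $\M$, the multiset built above is invalid, and $b_u \le y$ after all. Summing over the $n$ black nodes gives the claimed bound of $n y$.

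I do not anticipate any serious obstacle: the argument is a direct unpacking of the ``for every choice'' quantifier in the black constraint of $\Bar{\Pi}$, together with a short inspection of $\Pi^B_{\Delta'}(x',y)$. The only detail requiring a bit of care is the feasibility of constructing $S$, which follows cleanly from the hypothesis $\Delta \ge \Delta'$ (and from $y+1 \le \Delta'$, which is implicit since $x' = \Delta' - 1 - y \ge 0$).
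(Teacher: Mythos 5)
Your proposal is correct and follows essentially the same route as the paper: a per-black-node argument showing that $y+1$ incident $\M$-containing label-sets would force, via the black constraint of $\Bar{\Pi}$, a choice of $\Delta'$ labels with $y+1$ copies of $\M$, which no configuration of $\Pi^B_{\Delta'}(x',y)$ admits. Your write-up merely spells out the subset-selection feasibility and the slot-by-slot inspection that the paper leaves implicit.
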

\begin{proof}
    Let $v$ be a black node. We prove that at most $y$ edges incident to $v$ are labeled with label-sets containing $\M$. Suppose for a contradiction that at least $y+1$ edges incident to $v$ have label-sets containing $\M$. Then, since each configuration in $\Pi^B_{\Delta'}(x',y)$ allows at most $y$ $\M$ labels, we obtain that 
    there exists a choice of $\Delta'$ edges incident to $v$ satisfying the following. The $\Delta'$ edges have label-sets assigned $L_1,\ldots,L_{\Delta'}$, and there exists a choice $\ell_1 \in L_1,\ldots,\ell_{\Delta'} \in L_{\Delta'}$ that is not in $\Pi^B_{\Delta'}(x',y)$, which a contradiction with the definition of the black constraint of $\Bar{\Pi}$ in \Cref{def:lift}.  
\end{proof}
    

\begin{lemma}\label{lem:xy-lb-p}
    Let $2n$ be the number of nodes of $G$.
    Any solution for $\Bar{\Pi}$ must satisfy that at least  $n (\frac{\Delta - \Delta'}{2} - y)$ edges are labeled with label-sets containing $\P$, that is, with $\bPOX$, $\bMPOX$, or $\bZMPOX$.
\end{lemma}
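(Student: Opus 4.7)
The plan is to lower bound, at each individual white node $v$, the number of incident edges whose label-set contains $\P$, then sum over all $n$ white nodes of $G$ and subtract off the $\M$-contribution using \Cref{lem:ub-m}.

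The key structural observation is that among the seven right-closed label-sets $\bX, \bOX, \bMX, \bMOX, \bPOX, \bMPOX, \bZMPOX$ (all possible values of $S_e$), the ones containing $\P$ are exactly $\bPOX, \bMPOX, \bZMPOX$, and the only one containing $\Z$ is $\bZMPOX$, which itself also contains $\P$. Consequently, the label-sets in $\{\bX, \bOX\}$ contain none of $\M$, $\P$, $\Z$; the only labels extractable from them are $\O$ and $\X$. Now fix a white node $v$ and consider any $\Delta'$-subset $S$ of its $\Delta$ incident edges such that every edge of $S$ carries a label-set in $\{\bX, \bOX\}$. By the white constraint of $\bar\Pi$ from \Cref{def:lift}, some choice of labels from the label-sets in $S$ must form a white configuration of $\Pi$. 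However, the configuration $\X^y \s \O^{\Delta'-1-y} \s \P$ requires a $\P$-label, the configuration $\X^y \s \Z \s \O^{\Delta'-y-1}$ requires a $\Z$-label, and the configuration $\X^{y-1} \s \M \s \O^{\Delta'-y}$ requires an $\M$-label—none of which can be drawn from $\{\O, \X\}$. Hence no such $\Delta'$-subset can exist at $v$: the number of edges at $v$ whose label-set lies in $\{\bX, \bOX\}$ is at most $\Delta' - 1$, and so the number of edges at $v$ whose label-set contains $\M$ or $\P$ is at least $\Delta - \Delta' + 1$.

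Summing this pointwise inequality over the $n$ white nodes (each edge being incident to exactly one white node) yields
\[
E_M + E_P \;\geq\; n(\Delta - \Delta' + 1),
\]
where $E_M$ (resp.\ $E_P$) denotes the total number of edges whose label-set contains $\M$ (resp.\ $\P$). Combining with the bound $E_M \leq ny$ from \Cref{lem:ub-m} gives
\[
E_P \;\geq\; n(\Delta - \Delta' + 1 - y) \;\geq\; n\!\left(\tfrac{\Delta - \Delta'}{2} - y\right),
\]
as claimed; the last inequality merely uses $\Delta \geq \Delta'$. The one genuinely delicate step is the case analysis ruling out each of the three white configurations of $\Pi$ when the only available labels are drawn from $\{\O, \X\}$; this is what produces the $\Delta - \Delta' + 1$ per-node bound, after which the argument is just a summation together with \Cref{lem:ub-m}.
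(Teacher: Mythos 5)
Your proof is correct, and its first half (the per-white-node bound: at most $\Delta'-1$ incident edges may carry $\bX$ or $\bOX$, hence at least $\Delta-\Delta'+1$ carry a label-set containing $\M$ or $\P$) is exactly the paper's opening step. Where you diverge is in the accounting that converts this into a global bound on $\P$-edges. The paper partitions the white nodes into $M$-nodes (those with at least $(\Delta-\Delta')/2$ incident $\M$-edges) and $P$-nodes, uses \Cref{lem:ub-m} to cap the number of $M$-nodes by $\frac{2ny}{\Delta-\Delta'}$, and then counts $\P$-edges only at the $P$-nodes, arriving at $n(\frac{\Delta-\Delta'}{2}-y)$. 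You instead sum the pointwise inequality $m(v)+p(v)\ge \Delta-\Delta'+1$ over all $n$ white nodes (valid since each edge has exactly one white endpoint, and an edge counted in both $E_M$ and $E_P$ only helps a lower bound on the sum) and subtract $E_M\le ny$ globally, obtaining the strictly stronger $E_P\ge n(\Delta-\Delta'+1-y)$ before weakening it to the stated form. Your route is both shorter and tighter; the paper's threshold-based classification loses roughly a factor of two in the $\Delta-\Delta'$ term, which is immaterial for the final contradiction but makes the argument longer. Both versions rest on the same two pillars, namely the per-node $\{\M,\P\}$ count and \Cref{lem:ub-m}, so this is a cleaner bookkeeping of the same idea rather than a new proof strategy.
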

\begin{proof}
    Observe that, since $\Delta' - y - x' = \Delta' - y - (\Delta' - 1 - y) = 1$, each configuration in $\Pi^W_{\Delta'}(x',y)$ requires at least one label in $\{\M,\P,\Z\}$. 
    Thus, incident to $v$, there are at most $\Delta'-1$ edges that are either $\bX$ or $\bOX$, since otherwise, similarly as in the proof of \Cref{lem:ub-m}, we could pick $\Delta'$ edges that do not contain any label in $\{\M,\P,\Z\}$, reaching a contradiction. 
    Since all the label-sets that are not $\bX$ or $\bOX$ contain either an $\M$ or a $\P$, then we conclude that there are at least $\Delta - \Delta' +1$ edges with assigned label-sets that contain $\M$ or $\P$, that is, $\bMX$, $\bMOX$, $\bPOX$, $\bMPOX$, $\bZMPOX$.

    We split the white nodes into two parts. A white node is called $M$-node if it has at least $(\Delta-\Delta')/2$ edges with assigned label-sets that contain the label $\M$, and it is called $P$-node otherwise.

    Observe that there cannot be too many $M$-nodes. In fact, suppose for a contradiction that there are at least $\frac{2ny}{\Delta-\Delta'}+1$ $M$-nodes. We obtain that the total number of edges having $\M$ in their label-sets is at least $(\frac{2ny}{\Delta-\Delta'}+1) \frac{\Delta-\Delta'}{2} > ny$, a contradiction with \Cref{lem:ub-m}. 
    
    We thus get that the amount of $P$-nodes is at least $n(1-\frac{2y}{\Delta-\Delta'})$. Since these nodes have at least $\Delta - \Delta' +1$ edges with assigned label-sets containing $\M$ or $\P$, but strictly less than $(\Delta-\Delta')/2$ edges with assigned label-sets containing $\M$, we obtain that $P$-nodes have, in total, at least $n(1-\frac{2y}{\Delta'-\Delta})\frac{\Delta-\Delta'}{2} = n (\frac{\Delta - \Delta'}{2} - y)$ edges whose assigned label-sets contain $\P$.
\end{proof}

\begin{lemma}\label{lem:xy-ub-p}
    Let $2n$ be the number of nodes of $G$.
    Any solution for $\Bar{\Pi}$ must satisfy that at most 
    $n (\Delta'-1)$
    edges are labeled with label-sets containing $\P$, that is, with $\bPOX$, $\bMPOX$, or $\bZMPOX$.
\end{lemma}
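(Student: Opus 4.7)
The strategy mirrors that of \Cref{lem:ub-m}: I will establish the bound locally at each black node, then sum over all black nodes. Concretely, I will show that every black node $u$ of $G$ has at most $\Delta'-1$ incident edges whose assigned label-set contains $\P$, which immediately yields the global bound $n(\Delta'-1)$ since each edge is incident to exactly one black node in the bipartite graph $G$, and $G$ has $n$ black nodes.

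The key inspection is of the three black configurations of $\Pi$:
\[
[\M \Z \P \O \X]^{y-1} \s [\M \X] \s [\P \O \X]^{\Delta'-y},\quad
[\M \Z \P \O \X]^y \s [\P \O \X]^{\Delta' - 1 - y} \s [\O \X],\quad
[\M \Z \P \O \X]^y \s [\X] \s [\P \O \X]^{\Delta'-y-1}.
\]
In each of these three configurations, exactly one of the $\Delta'$ positions (namely the one constrained to $[\M \X]$, $[\O \X]$, or $[\X]$ respectively) does \emph{not} allow $\P$. Hence the all-$\P$ multiset $\{\P,\ldots,\P\}$ of size $\Delta'$ is not in $\Pi^B_{\Delta'}(x',y)$.

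Now suppose for contradiction some black node $u$ has at least $\Delta'$ incident edges with label-sets $L_1,\ldots,L_{\Delta'}$ that all contain $\P$. Then choosing $\ell_i = \P$ from each $L_i$ yields the configuration $\{\P,\ldots,\P\}$, which, as observed, is not in $\Pi^B_{\Delta'}(x',y)$. This contradicts the definition of the black constraint of $\bar{\Pi}$ in \Cref{def:lift}, which demands that \emph{every} choice of $r' = \Delta'$ labels from $r' = \Delta'$ of the incident label-sets yields a valid black configuration of $\Pi$. Thus at each black node at most $\Delta'-1$ incident label-sets contain $\P$, and summing over the $n$ black nodes finishes the proof.

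\textbf{Main obstacle.} There is essentially no obstacle: the argument is a direct counting argument in the spirit of \Cref{lem:ub-m}, and the only thing to verify is the (straightforward) combinatorial claim that no valid black configuration of $\Pi$ consists entirely of $\P$'s, which follows by inspection. The only subtlety to keep in mind is to apply \Cref{def:lift} with the correct choice of the subset of $r' = \Delta'$ among the $r = \Delta$ incident label-sets, and to note that we count each edge once because $G$ is bipartite.
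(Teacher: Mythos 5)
Your proposal is correct and follows essentially the same route as the paper's proof: observe that $\P^{\Delta'}$ is not a valid black configuration of $\Pi$ (the paper derives this from $\Delta'-y-x'=1$, you from direct inspection of the three condensed configurations), then invoke the black constraint of $\bar{\Pi}$ from \Cref{def:lift} to conclude each black node has at most $\Delta'-1$ incident edges whose label-set contains $\P$, and sum over the $n$ black nodes. No issues.
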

\begin{proof}
    Let $v$ be a black node. 
    Since $\Delta' - y - x' = 1$, we get that  $\P^{\Delta'}$ is not in $\Pi^B_{\Delta'}(x',y)$.
    By the definition of $\Bar{\Pi}$, we get the following. Let $e_1,\ldots,e_{\Delta'}$ be an arbitrary choice of $\Delta'$ edges incident to $v$, and let $L_1,\ldots,L_{\Delta'}$ be the label-sets assigned to these edges. Any choice $\ell_1 \in L_1,\ldots,\ell_{\Delta'} \in L_{\Delta'}$ must satisfy that $\ell_1 \ldots \ell_{\Delta'}$ is a configuration allowed by  $\Pi^B_{\Delta'}(x',y)$. We thus get that, incident to $v$, there can be at most $\Delta'-1$ edges with assigned label-sets containing $\P$, showing the lemma.
\end{proof}

We now prove that \Cref{lem:xy-lb-p} is in contradiction with \Cref{lem:xy-ub-p}.  
We fix $c = 5$, that is,  $\Delta = 5\Delta'$. From \Cref{lem:xy-lb-p}, we obtain that the amount $n_\P$ of edges with label-sets containing $\P$ can be bounded as follows.
\[
n_\P \ge n(\frac{\Delta-\Delta'}{2} -y )\ge n (2 \Delta'-y)\ge n\Delta' 
\]
From \Cref{lem:xy-ub-p}, we obtain the following bound.
\[
    n_\P \le n (\Delta'-1)
\]
Thus, we reach a contradiction.

\section{Arbdefective Coloring}\label{sec:arb}
In this section, we prove the following.
\begin{restatable}{theorem}{arbColoring}\label{thm:arb-coloring}
    Let $\mathcal{G}$ be the family of $\Delta$-regular graphs of girth $\Omega(\log_\Delta n)$.
    Assume that the support graph is from $\mathcal{G}$, and that the input graph has degree $\Delta'$. Let $k = \min\{\Delta', \eps \Delta / \log \Delta \}$, for some small-enough constant $\eps$.
    If $(\alpha+1)c \le k$, then the $\alpha$-arbdefective $c$-coloring problem, in the Supported LOCAL model, requires $\Omega(\log_\Delta n)$ deterministic rounds and $\Omega(\log_\Delta \log n)$ randomized rounds.
\end{restatable}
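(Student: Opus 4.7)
The plan is to apply \Cref{thm:approach-hypergraphs} with the round-elimination sequence for $\alpha$-arbdefective $c$-coloring from \cite{Balliu0KO22}. In the black-white formalism, $\alpha$-arbdefective $c$-coloring is a rank-$2$ (i.e., ordinary-graph) problem whose labels encode a color in $[c]$ together with an orientation per incidence: the white (node) constraint forces a node to use the same color on all incident labels with at most $\alpha$ of them marked as outgoing, and the black (edge) constraint forces any monochromatic edge to be properly oriented. I import from \cite{Balliu0KO22} the lower bound sequence $\Pi_0 = \Pi_{ac}, \Pi_1, \ldots, \Pi_k$ of length $k = \Theta(\log_\Delta n)$ (this is where the hypothesis $(\alpha+1)c \leq \Delta'$ is used, to make the sequence long enough) together with the specific relaxation $\Pi'$ of $\Pi_k$ used there to derive the LOCAL bound.

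For the support graph I take the $\Delta$-regular graph $G$ guaranteed by \Cref{lem:graph-family}: $G$ has $\Omega(n)$ vertices, girth at least $\varepsilon' \log_\Delta n$, and independence number at most $\alpha' n \log\Delta/\Delta$, so that its chromatic number satisfies $\chi(G) \geq \Delta/(\alpha' \log \Delta)$. Viewed as a $2$-uniform linear hypergraph, $G$ meets every hypothesis of \Cref{thm:approach-hypergraphs} except the non-solvability of $\mathrm{lift}_{\Delta,2}(\Pi')$ on $G$, which is the only remaining step to verify.

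The main step, and the principal obstacle, is showing that $\mathrm{lift}_{\Delta,2}(\Pi')$ has no non-bipartite solution on $G$. I argue by contradiction: by \Cref{th:lift}, such a solution yields a deterministic $0$-round algorithm solving $\Pi'$ on $G$ in the Supported LOCAL model. Exploiting the form of $\Pi'$, its output can be decoded into a proper $(\alpha+1)c$-coloring of $G$: the color coordinate of the labels of $\Pi_k$ is preserved through round elimination, so a $0$-round solution induces a $c$-partition of $V(G)$ whose classes each have arboricity at most $\alpha$; the standard peeling argument then splits every class into $(\alpha+1)$ independent sets. This gives $\chi(G) \leq (\alpha+1)c \leq \varepsilon \Delta/\log\Delta$, which contradicts $\chi(G) \geq \Delta/(\alpha' \log \Delta)$ once $\varepsilon$ is chosen small enough (say $\varepsilon < 1/\alpha'$). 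Rigorously carrying out this decoding is the main obstacle, as it requires unpacking the definition of the relaxation $\Pi'$ from \cite{Balliu0KO22} and verifying that the color component of labels remains recoverable after $\Omega(\log_\Delta n)$ rounds of round elimination. Once this is established, the deterministic bound $\Omega(\log_\Delta n)$ and the randomized bound $\Omega(\log_\Delta \log n)$ both follow immediately from \Cref{thm:approach-hypergraphs}.
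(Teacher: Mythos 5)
Your overall architecture matches the paper's: invoke the fixed-point lower bound sequence for $\Pi_{\Delta'}((\alpha+1)c)$ from \cite{Balliu0KO22}, take the high-girth, low-independence-number graph of \Cref{lem:graph-family} as support graph, show that $\mathrm{lift}_{\Delta,2}$ of the end-of-sequence problem has no solution on it by deriving a coloring of $G$ with too few colors, and feed everything into \Cref{thm:approach-hypergraphs}. However, the step you flag as ``the main obstacle'' is indeed the entire content of the proof, and your sketch of it both omits the key argument and mislocates the difficulty. Since $\Pi_{\Delta'}((\alpha+1)c)$ is a \emph{fixed point} under $\mathrm{RE}$ (\Cref{lem:arbdef-fp}), one has $\Pi_k=\Pi_0$, so there is nothing to ``track through $\Omega(\log_\Delta n)$ rounds of round elimination''; that part of your worry is vacuous. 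The real difficulty is the degree mismatch between the input graph and the support graph: the white constraint of $\mathrm{lift}_{\Delta,2}(\Pi_{\Delta'}(k))$ only asserts that \emph{every} choice of $\Delta'$ of the $\Delta$ incident label-sets admits some valid configuration of $\Pi_{\Delta'}(k)$, and different $\Delta'$-subsets may be witnessed by different color sets. Your claim that ``a $0$-round solution induces a $c$-partition of $V(G)$ whose classes each have arboricity at most $\alpha$'' --- a statement about the full degree-$\Delta$ graph $G$ --- does not follow from this without further work. The paper closes exactly this gap in \Cref{lem:sol_on_d'}, using Hall's marriage theorem on an auxiliary bipartite graph between colors and incident edges to show that the lift's white constraint forces a \emph{single} global configuration $\ell(C)^{\Delta-x}\s\X^{x}$ of $\Pi_{\Delta}(k)$ at each node. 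Without this (or an equivalent) argument, your contradiction with $\chi(G)\ge \Delta/(\alpha'\log\Delta)$ is not established.

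A secondary, more minor point: even after the Hall step, one does not directly read off a proper $(\alpha+1)c$-coloring. The $\X$-labeled incidences (the arbdefect edges) can be monochromatic, and the paper's \Cref{lem:col-supergraph} handles them with a degeneracy ordering that costs a factor $2$ in the number of colors (a class of arboricity $\alpha$ has degeneracy up to $2\alpha-1$, not $\alpha$, so your ``splits into $\alpha+1$ independent sets'' is also quantitatively off). Neither factor harms the asymptotic contradiction for small enough $\eps$, but the peeling argument needs to be run on the $\X$-edges of the $S$-solution, not on the color classes of an already-extracted arbdefective coloring of $G$, which you have not yet obtained at that point.
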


\subsection{Problem Definition}
In order to prove \Cref{thm:arb-coloring}, we consider a family of problems that we will later show to be strongly related with  $\alpha$-arbdefective $c$-coloring.
\paragraph{A family of problems in the black-white formalism.} 
We define a problem $\Pi_{\Delta}(c)$ in the black-white formalism as follows.
\begin{definition}[The problem $\Pi_\Delta(c)$]\label{def:arbdef}
Let $\mathcal{C} := \{1,\ldots,c\}$, and let $\Sigma_{\Pi_\Delta(c)} := \{ \X \} \cup \{ \ell(C) \mid C \subseteq \mathcal{C} \text{ and } C \neq \emptyset \}$.
The problem $\Pi_\Delta(c)$ is defined via the following white (node) and black (edge) constraints.
\begin{equation*}
	\begin{aligned}
		\begin{aligned}
			\Pi^W_\Delta(c)\text{:}\\
	&\ell(C)^{\Delta-x} \s \X^x,&& \text{where} \; x=|C|-1, \\ &&&\text{for all }  C \subseteq \mathcal{C}, C \neq \emptyset
 		\end{aligned}
   \qquad
		\begin{aligned}
  			\Pi^B_\Delta(c)\text{:}\\
	&\ell(C_1) \s \ell(C_2),&& \text{ for all $C_1, C_2$ s.t. } C_1 \cap C_2 = \emptyset, \\ &&& \text{ where } C_1, C_2 \neq \emptyset\\
    &\X \s \L,&& \text{ for all $\L \in \Sigma_{\Pi_\Delta(c)} $}
		\end{aligned}
	\end{aligned}
\end{equation*}
\end{definition}

\paragraph{What is known about these problems.}
In \cite{Balliu0KO22}, it has been shown that $\Pi_\Delta(c)$ is strictly related to $\alpha$-arbdefective $c$-coloring, and in particular that a solution for $\alpha$-arbdefective $c$-coloring can be converted in $0$ rounds into a solution for $\Pi_\Delta((\alpha+1)c)$, implying that $\Pi_\Delta((\alpha+1)c)$ is at least as easy as $\alpha$-arbdefective $c$-coloring.
\begin{lemma}[Theorem 8.2 (in the ArXiv version) of \cite{Balliu0KO22}]\label{lem:arbdef-to-bw}
    In the LOCAL model, given a solution for $\alpha$-arbdefective $c$-coloring, it is possible to solve $\Pi_\Delta((\alpha+1)c)$ in $0$ rounds.
\end{lemma}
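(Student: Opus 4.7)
The plan is to give an explicit $0$-round local transformation of an $\alpha$-arbdefective $c$-coloring into a labeling that meets both the white (node) and black (edge) constraints of $\Pi_\Delta((\alpha+1)c)$. Let $c' := (\alpha+1)c$ and identify the palette $\{1,\ldots,c'\}$ with the pairs $(i,j)$ where $i \in \{1,\ldots,c\}$ and $j \in \{0,\ldots,\alpha\}$, so that the palette is partitioned into $c$ blocks of $\alpha+1$ sub-colors each. The key idea is that the block structure will automatically enforce the disjointness required by the black constraint on dichromatic edges, while the wildcard label $\X$ absorbs the monochromatic edges whose endpoints unavoidably fall inside the same block.

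From the arbdefective coloring, each node $v$ locally knows its base color $i_v \in \{1,\ldots,c\}$ and, for every incident edge, whether it is outgoing monochromatic, incoming monochromatic, or dichromatic; in particular $v$ knows its monochromatic out-degree $t_v$, which by definition satisfies $t_v \le \alpha$. Without any communication, $v$ sets $C_v := \{(i_v,0),(i_v,1),\ldots,(i_v,t_v)\}$, a subset of the $i_v$-block of size exactly $t_v+1$. It then labels each of its $t_v$ outgoing monochromatic edges with $\X$ and each of the remaining $\Delta - t_v$ incident edges (both incoming monochromatic and dichromatic) with $\ell(C_v)$.

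For the verification, the white constraint on a degree-$\Delta$ node $v$ asks for a multiset of the form $\ell(C)^{\Delta - x} \s \X^{x}$ with $x = |C|-1$; plugging in $C = C_v$ and $x = t_v = |C_v|-1$ matches exactly the labels assigned by $v$. For the black constraint on an edge $e=\{u,v\}$: if $e$ is monochromatic then exactly one endpoint (the tail of the orientation) contributes $\X$ and the other contributes $\ell(C_{\cdot})$, which is covered by the $\X \s \L$ configuration; if $e$ is dichromatic then $i_u \neq i_v$, so $C_u$ lies entirely in the $i_u$-block and $C_v$ entirely in the $i_v$-block, making $C_u \cap C_v = \emptyset$ as required by the $\ell(C_1)\s\ell(C_2)$ configurations.

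There is essentially no real obstacle beyond choosing the right encoding: the subtle point is recognizing that a single set $C_v$ depending only on the locally available pair $(i_v, t_v)$ is enough to simultaneously (i) supply the correct multiplicity of $\X$ labels demanded by the white constraint and (ii) guarantee pairwise disjointness on every dichromatic edge without ever consulting a neighbor's color. Once the block decomposition of the palette into $c$ groups of $\alpha+1$ sub-colors is fixed, both requirements follow immediately and no communication is needed, so the whole transformation runs in $0$ rounds.
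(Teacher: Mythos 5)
Your construction is correct: the block decomposition of the palette $\{1,\ldots,(\alpha+1)c\}$ into $c$ blocks of size $\alpha+1$, with $C_v$ taken as the first $t_v+1$ sub-colors of $v$'s block and $\X$ placed on the $t_v$ outgoing monochromatic edges, satisfies the white configuration $\ell(C_v)^{\Delta-t_v}\s\X^{t_v}$ exactly and makes every dichromatic edge see two disjoint sets, while every monochromatic edge is absorbed by the $\X\s\L$ configurations. The paper itself gives no proof of this lemma --- it is imported verbatim as Theorem 8.2 of the cited reference --- and your argument is precisely the standard one used there, so there is nothing to object to; the only point worth being explicit about (which you are) is that ``given a solution'' means each node knows which of its incident edges are its own outgoing monochromatic edges, which is exactly the node's share of the arbdefective-coloring output and is all the construction consults.
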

Moreover, in \cite{Balliu0KO22}, it has been shown that, if $(\alpha+1)c \le \Delta$, then $\Pi_\Delta((\alpha+1)c)$ is a so-called \emph{fixed point} under round elimination, meaning that applying round elimination on the problem gives the problem itself, and hence that there exists a lower bound sequence of infinite length.
\begin{lemma}[Section 4 (in the ArXiv version) of \cite{Balliu0KO22}]\label{lem:arbdef-fp}
    Assume $(\alpha+1)c \le \Delta$. Then, $\mathrm{RE}(\Pi_\Delta((\alpha+1)c)) = \Pi_\Delta((\alpha+1)c)$.
\end{lemma}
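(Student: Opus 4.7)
The plan is to prove the fixed-point property by directly applying the mechanical definition of $\mathrm{RE}$ (from \Cref{sec:re}) to $\Pi_\Delta(k)$, with $k := (\alpha+1)c$, and exhibiting an explicit bijection between the label alphabet, the white constraint, and the black constraint of $\mathrm{RE}(\Pi_\Delta(k))$ and those of $\Pi_\Delta(k)$. Since $\mathrm{RE}$ preserves white- and black-configuration sizes, it suffices to set up a bijection between the two label alphabets and then check that each constraint is carried to the other by this bijection.

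First I would compute the candidate label alphabet of $\mathrm{RE}(\Pi_\Delta(k))$. Each such label is a certain non-empty subset of $\Sigma_{\Pi_\Delta(k)}$, selected out of the white configurations of $\Pi_\Delta(k)$ by the mechanical rule defining $\mathrm{RE}$. Exploiting the highly symmetric structure of the edge constraint of $\Pi_\Delta(k)$ --- $\ell(C_1)\s \ell(C_2)$ is valid iff $C_1 \cap C_2 = \emptyset$, and $\X\s \L$ is always valid --- I expect the canonical labels of $\mathrm{RE}(\Pi_\Delta(k))$ to split into exactly two families: a single label $\widetilde{\X}$ playing the role of $\X$ (the universal label compatible with everything), and, for each non-empty $C \subseteq \mathcal{C}$, a label $\widetilde{\ell}(C)$ obtained as the closure of the set of $\Pi_\Delta(k)$-labels that are compatible with $\ell(C)$ under the edge constraint. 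The correspondences $\widetilde{\X} \leftrightarrow \X$ and $\widetilde{\ell}(C) \leftrightarrow \ell(C)$ form the candidate bijection.

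Second, I would verify that under this bijection the white configurations of $\mathrm{RE}(\Pi_\Delta(k))$ are precisely $\widetilde{\ell}(C)^{\Delta-|C|+1}\s \widetilde{\X}^{|C|-1}$ for every non-empty $C \subseteq \mathcal{C}$, and the black configurations are precisely $\widetilde{\ell}(C_1)\s \widetilde{\ell}(C_2)$ for disjoint non-empty $C_1, C_2$, together with $\widetilde{\X}\s \widetilde{\L}$ for every $\widetilde{\L}$. Matching this against \Cref{def:arbdef} then closes the isomorphism. The hypothesis $k \le \Delta$ enters crucially at this step: it guarantees that for every non-empty $C \subseteq \mathcal{C}$ one has $|C|-1 \le k - 1 \le \Delta - 1$, so the white configuration $\ell(C)^{\Delta-|C|+1}\s \X^{|C|-1}$ is a well-defined multiset of size $\Delta$, which in turn makes the full family of $\widetilde{\ell}(C)$ labels genuinely appear in $\mathrm{RE}(\Pi_\Delta(k))$.

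The main obstacle is the combinatorial enumeration step: \emph{a priori}, additional subsets of $\Sigma_{\Pi_\Delta(k)}$ may formally qualify as labels of $\mathrm{RE}(\Pi_\Delta(k))$ beyond the canonical $\widetilde{\X}$ and $\widetilde{\ell}(C)$, and one must argue that every such extra label is equivalent to a canonical one --- either because it induces exactly the same white- and black-constraint membership, or because it is discarded by the mechanical simplification step built into $\mathrm{RE}$. The symmetry across the color indices $\{1,\dots,k\}$ and the uniform shape of the white constraints of $\Pi_\Delta(k)$ should permit a compact argument, with a single case split on whether a candidate subset contains $\X$ or only labels of the form $\ell(C)$; nevertheless, this is the technically heaviest part of the proof, and is where the detailed argument of Section~4 of~\cite{Balliu0KO22} is ultimately invoked.
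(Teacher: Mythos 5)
The paper does not actually prove this lemma: it is imported verbatim from Section~4 of \cite{Balliu0KO22}, so there is no in-paper argument to compare yours against. Judged on its own terms, your sketch follows the standard way such fixed-point claims are verified (compute the labels of the renamed problem, exhibit a bijection with the original alphabet, check both constraints transfer), and you correctly flag both where the hypothesis $(\alpha+1)c\le\Delta$ is needed and where the real difficulty sits. But as written it is an outline, not a proof: the decisive step --- enumerating \emph{all} maximal configurations produced by the mechanical definition and showing that nothing beyond the canonical labels survives --- is exactly the part you defer back to \cite{Balliu0KO22}, so the proposal does not close the statement on its own.

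Two more concrete points. First, $\mathrm{RE}=\rere\circ\re$ is a two-stage operation, and a faithful verification must compute the \emph{intermediate} problem $\re(\Pi_\Delta(k))$ before applying $\rere$. Its labels are not naturally of the form ``$\widetilde{\ell}(C)$ for $C\subseteq\mathcal{C}$'' as you suggest; the maximal black set-configurations of $\Pi_\Delta(k)$ are pairs $\{\L_A,\L_{\mathcal{C}\setminus A}\}$ with $\L_A=\{\X\}\cup\{\ell(C'):\emptyset\neq C'\subseteq A\}$, so the intermediate alphabet is indexed by \emph{subsets} $A\subseteq\mathcal{C}$ (a strictly larger and differently structured alphabet), and only after the second stage does one recover something isomorphic to $\Sigma_{\Pi_\Delta(k)}$. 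Your sketch collapses these two stages into one, which hides precisely the bookkeeping that makes the claim nontrivial. Second, your explanation of where $k\le\Delta$ enters (well-definedness of the exponent $\Delta-|C|+1$) is only part of the story: the hypothesis is also what prevents \emph{additional} maximal configurations from appearing after renaming, which is why the fixed-point property genuinely fails for $k>\Delta$ (consistent with $(\Delta+1)$-coloring being solvable in $O(\Delta+\log^* n)$ rounds). If you want a self-contained proof rather than a citation, the enumeration in both stages is the work that must be done.
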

\begin{corollary}\label{lem:arbdef-sequence}
    Assume $(\alpha+1)c \le \Delta$. Then, there exists a lower bound sequence of infinite length where all problems are equal to $\Pi_\Delta((\alpha+1)c)$.
\end{corollary}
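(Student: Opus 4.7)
The plan is to verify that the constant sequence $\Pi_i := \Pi_\Delta((\alpha+1)c)$ for every $i \ge 0$ qualifies as a lower bound sequence of infinite length. By the definition recalled in the preliminaries, this reduces to checking, for each $i \ge 1$, that $\Pi_i$ is a relaxation of $\mathrm{RE}(\Pi_{i-1})$. The first step is to invoke \Cref{lem:arbdef-fp}: under the hypothesis $(\alpha+1)c \le \Delta$ we have $\mathrm{RE}(\Pi_\Delta((\alpha+1)c)) = \Pi_\Delta((\alpha+1)c)$, so for the constant sequence the requirement collapses to showing that $\Pi_\Delta((\alpha+1)c)$ is a relaxation of itself.

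The second step is to exhibit an explicit witness of that self-relaxation by unwinding the definition. Taking $f \colon \Vec{C_W} \to \Vec{C_W}$ to be the identity on the (ordered versions of the) white configurations forces $r(\ell) = \{\ell\}$ for every label $\ell$ appearing in some white configuration. Consequently, for each black configuration $\{\ell_1,\ldots,\ell_{d_B}\}$, the only element of the product $r(\ell_1) \times \cdots \times r(\ell_{d_B})$ is $\{\ell_1,\ldots,\ell_{d_B}\}$ itself, which belongs to the black constraint by assumption. Thus the identity witnesses $\Pi_\Delta((\alpha+1)c)$ as a relaxation of itself, and combining this with the fixed-point equality from \Cref{lem:arbdef-fp} immediately yields the desired infinite sequence.

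There is no real obstacle here beyond carefully unpacking the formal definition of relaxation; the content of the corollary is essentially a repackaging of \Cref{lem:arbdef-fp}. Stating it separately is useful because the downstream machinery—in particular \Cref{thm:approach-bipartite}, which will be invoked in the proof of \Cref{thm:arb-coloring}—consumes lower bound sequences rather than fixed-point statements, so the corollary is the precise interface needed to plug the arbdefective coloring analysis into the general framework.
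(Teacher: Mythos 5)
Your proposal is correct and matches the paper's reasoning: the corollary is stated without proof precisely because it follows immediately from \Cref{lem:arbdef-fp} together with the observation that every problem is a relaxation of itself (witnessed by the identity map on white configurations), which is exactly the argument you spell out. No gaps.
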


\subsection{A Lower Bound for the Supported LOCAL Model}\label{apx:arbdef}
In order to prove \Cref{thm:arb-coloring}, we follow the same strategy as in the case of $x$-maximal $y$-matchings, that is, we operate as follows.
Let $k = \min\{\Delta', \eps \Delta / \log \Delta \}$, for some small-enough constant $\eps$.
We show that, for any $n$ and $\Delta$ such that $n \ge \Delta^2$, there exists a $\Delta$-regular graph $G$ with at least $n / \Delta^2$ nodes and at most $n$ nodes, of girth at least $\epsilon' \log_\Delta n$ (for some absolute constant $\varepsilon'$), and where,  assuming $(\alpha+1)c \le k$,  $\Pi' := \mathrm{lift}_{\Delta,2}(\Pi_{\Delta'}(\alpha+1)c))$ has no non-bipartite solution. For this purpose, we use exactly the graph family $\mathcal{G}$ given by \Cref{lem:graph-family}.
 
Hence, in the following, let $G \in \mathcal{G}$. We need to prove that, on $G$, $\Pi'$ is not non-bipartitely solvable.
We actually prove a different statement, that will be useful also in the next section. We will show that this statement implies what we want, that is, that $\Pi'$ admits no solution in $G$. We start by defining what we mean by solving a problem $\Pi$ on a subset of nodes $S$ of $G$.
\begin{definition}[$S$-solution of $\Pi$]\label{def:s-solution}
    Let $G$ be a graph, and let $\Pi$ be a problem defined on $G$ in the black-white formalism. A labeling of $G$ is an $S$-solution of $\Pi$ if the following holds.
    \begin{itemize}
        \item The node constraint of $\Pi$ is satisfied on all nodes of $S$.
        \item The edge constraint of $\Pi$ is satisfied on all edges that connect two nodes of $S$.
    \end{itemize}
\end{definition}
\begin{lemma}\label{lem:subgraph-coloring}
Let $S$ be a subset of nodes of $G$, and let $k \le \Delta'$ be an integer. Assume we are given an $S$-solution for  $\Pi' := \mathrm{lift}_{\Delta,2}(\Pi_{\Delta'}(k))$. 
    Then, it is possible to color the subgraph induced by the nodes in $S$ with $2k$ colors.
\end{lemma}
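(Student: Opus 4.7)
The plan is to convert the $S$-solution of $\Pi' := \mathrm{lift}_{\Delta,2}(\Pi_{\Delta'}(k))$ into a proper $2k$-coloring of $G[S]$ by extracting, at each $v \in S$, a non-empty palette $C_v \subseteq \{1,\ldots,k\}$ together with an auxiliary orientation of ``conflict'' edges. First I would observe that every label-set $L(v,e)$ in the $S$-solution, being non-empty and right-closed under the black diagram of $\Pi_{\Delta'}(k)$, necessarily contains $\X$: indeed $\X$ is the unique maximal element of that diagram, since $\X\L$ is a valid edge configuration for every $\L$, so $\X$ is at least as strong as every other label. For each $v \in S$, I would then apply the white constraint of $\Pi'$ to some $\Delta'$-subset $T_v$ of $v$'s $\Delta$ incident edges (chosen to include as many $G[S]$-edges as possible), obtaining a non-empty $C_v \subseteq \{1,\ldots,k\}$ together with an assignment labeling $\Delta' - |C_v| + 1$ edges of $T_v$ with $\ell(C_v)$ (``heavy'') and the remaining $|C_v| - 1 \le k - 1$ with $\X$ (``light'').

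Next, the edge constraint of $\Pi'$ implies that for every edge $e=\{u,v\}$ of $G[S]$ labeled heavy on both sides, $\ell(C_u) \in L(u,e)$ and $\ell(C_v) \in L(v,e)$ together with the edge clause $\ell(C_1)\ell(C_2) \Rightarrow C_1 \cap C_2 = \emptyset$ of $\Pi_{\Delta'}(k)$ force $C_u \cap C_v = \emptyset$. Picking an arbitrary primary color $c_v \in C_v$ for each $v \in S$ therefore already produces a proper coloring on the ``heavy-heavy'' subgraph of $G[S]$. The remaining potential conflicts lie on edges of $G[S]$ with at least one light side; I would orient each such \emph{conflict edge} toward a light-side endpoint, yielding a conflict-subgraph orientation with out-degree at most $|C_v|-1 \le k-1$ at each $v$, and therefore degeneracy at most $2k-2$.

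The $2k$-coloring is then produced in the product space $\{1,\ldots,k\} \times \{0,1\}$: combine $c_v$ with a secondary tier $\tau_v \in \{0,1\}$ assigned greedily by processing $S$ in reverse degeneracy order of the conflict subgraph, at each step picking $\tau_v$ to disagree with the (at most $k-1$) already-committed tiers of out-neighbors that share $v$'s primary color. Properness follows edge-by-edge: heavy-heavy edges are resolved by primary disjointness, and conflict edges are resolved by the tier step.

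The main obstacle I anticipate is making the tier assignment fit into the binary space $\{0,1\}$: a naive greedy rule can easily fail as soon as some $v$ has two conflict out-neighbors committed to opposite tiers, so one must exploit finer structure---for instance, that within a fixed primary color class the conflict subgraph is acyclically orientable with out-degree one, or that a judicious choice of $c_v$ within $C_v$ always leaves a free tier. A secondary complication is the regime $\deg_{G[S]}(v) > \Delta'$, where no $\Delta'$-subset $T_v$ can cover all $G[S]$-edges of $v$, so uncovered $G[S]$-edges act as extra light edges and threaten the $k-1$ out-degree bound; this will likely be circumvented by the intended application, in which $S$ is chosen so that $\deg_{G[S]}(v) \le \Delta'$, or by an averaging argument identifying a canonical set of $|C_v|-1$ light edges uniformly across all $\Delta'$-subsets.
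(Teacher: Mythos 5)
There are two genuine gaps here, and both sit exactly where you flagged your ``obstacles''---but neither is circumvented the way you hope. First, the regime $\deg_{G[S]}(v)>\Delta'$ is not avoidable: in the application of this lemma the set $S$ is all of $V$ and $G$ is $\Delta$-regular with $\Delta$ possibly much larger than $\Delta'$, so every node has $\Delta-\Delta'$ incident $G[S]$-edges outside any chosen $\Delta'$-subset $T_v$, and your construction says nothing about the labels there. The missing idea is the paper's Lemma~\ref{lem:sol_on_d'}: for each $v$ one builds a bipartite graph between the $k$ colors and the $\Delta$ incident edges (joining color $i$ to edge $e$ iff $i\notin C_e(v)$, where $C_e(v)$ is the union of the color sets occurring in the label-set of $e$) and shows via Hall's marriage theorem that Hall's condition must \emph{fail}---otherwise one could extract a $\Delta'$-subset on which no white configuration of $\Pi_{\Delta'}(k)$ is selectable, contradicting the white constraint of the lift. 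The violating set $C$ is then a single palette such that at most $|C|-1$ of \emph{all} $\Delta$ edges miss a color of $C$, i.e.\ one gets an $S$-solution of $\Pi_{\Delta}(k)$ (degree $\Delta$, not $\Delta'$). This is the ``averaging argument uniform across all $\Delta'$-subsets'' you conjecture must exist, and it is the technical heart of the lemma; without it the heavy/light split does not control the full degree.

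Second, your tier assignment in $\{0,1\}$ does indeed fail for the reason you give (a node can have up to $k-1$ light out-neighbors already committed, with the same primary color and both tiers used), and the fix is not ``finer structure'' but simply not to fix the primary color in advance. The paper gives each node the full doubled palette $C'_v=\{c_{i,1},c_{i,2}\mid c_i\in C_v\}$ of size $2|C_v|$ and colors greedily in a degeneracy order of the $\X$-subgraph $G_X$ (edges with an $\X$ on at least one side); the counting bound $|E(G_{\ge i})|\le\sum_{v\in G_{\ge i}}(|C_v|-1)$ guarantees at each step a node of $G_X$-degree at most $2|C_v|-1<|C'_v|$, and non-$G_X$ edges are handled automatically by palette disjointness. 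Your orientation with out-degree $|C_v|-1$ is essentially the same counting, but by committing to $c_v$ first you reduce your freedom from $2|C_v|$ colors to $2$, which is why the greedy step breaks. So the high-level decomposition (disjoint palettes on heavy--heavy edges, degeneracy on $\X$-edges) matches the paper, but as written both halves of your argument are incomplete.
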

We start by showing that \Cref{lem:subgraph-coloring} implies that $\Pi'$ admits no solution in $G$. Later, we will prove \Cref{lem:subgraph-coloring}.
\begin{corollary}
   The problem $\Pi'$ admits no solution in $G$.
\end{corollary}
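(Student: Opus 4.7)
The plan is to derive a contradiction from the existence of a solution for $\Pi'$ on $G$ by combining \Cref{lem:subgraph-coloring} with the independence number bound from \Cref{lem:graph-family}. First I would assume, for contradiction, that $\Pi'$ does admit a (non-bipartite) solution on $G$. In particular, this is a $V(G)$-solution in the sense of \Cref{def:s-solution}, since every node and every edge of $G$ lies in the induced subgraph on $V(G)$ and hence every constraint is satisfied.

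Next I would apply \Cref{lem:subgraph-coloring} with $S = V(G)$ and with parameter $k := (\alpha+1)c$, which is valid since the hypothesis of \Cref{thm:arb-coloring} guarantees $(\alpha+1)c \le \min\{\Delta', \eps \Delta / \log \Delta\} \le \Delta'$. This produces a proper coloring of $G$ with at most $2k$ colors. Any proper coloring of an $n$-node graph with $2k$ colors must contain a color class of size at least $n/(2k)$, and such a color class is an independent set in $G$.

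On the other hand, \Cref{lem:graph-family} ensures that the independence number of $G$ is bounded by $\alpha n \log \Delta / \Delta$ for some absolute constant $\alpha$ (the constant from that lemma, not the arbdefect parameter). Comparing the two bounds yields
\[
    \frac{n}{2k} \le \frac{\alpha\, n \log \Delta}{\Delta},
\]
which rearranges to $k \ge \Delta / (2\alpha \log \Delta)$. However, the assumption $(\alpha+1)c \le \eps \Delta / \log \Delta$ together with $k = (\alpha+1)c$ gives $k \le \eps \Delta / \log \Delta$, so we would need $\eps \ge 1/(2\alpha)$. Choosing $\eps$ to be a sufficiently small constant (smaller than $1/(2\alpha)$) makes this impossible, yielding the desired contradiction.

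The main obstacle here is essentially packaged away inside \Cref{lem:subgraph-coloring}, whose proof presumably translates the label-set structure of $\mathrm{lift}_{\Delta,2}(\Pi_{\Delta'}(k))$ into an explicit coloring; the corollary itself is then a short counting argument reconciling a chromatic upper bound (from the lifted problem) with the independence-number upper bound (from the explicit high-girth graph family). The one subtlety worth double-checking is the reuse of the symbol $\alpha$: in the final inequality I need the independence-number constant from \Cref{lem:graph-family}, which is independent of the arbdefect parameter used to define $k$, so the choice of $\eps$ depends only on that graph-theoretic constant and not on $\alpha$ or $c$ from the problem.
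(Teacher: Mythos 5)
Your proposal is correct and follows essentially the same route as the paper: take $S=V(G)$, invoke \Cref{lem:subgraph-coloring} to obtain a $2(\alpha+1)c$-coloring, and contradict the chromatic-number lower bound $\Delta/(\alpha\log\Delta)$ implied by the independence-number bound of \Cref{lem:graph-family}. The paper leaves the pigeonhole step ($\chi(G)\ge n/\alpha(G)$) implicit, whereas you spell it out, and your remark about the clashing uses of the symbol $\alpha$ is a fair observation about the paper's notation.
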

\begin{proof}
Let $V$ be the set of nodes of $G$, and consider $S = V$. By applying \Cref{lem:subgraph-coloring}, we obtain that we can color $G$ with $2(\alpha+1)c \le 2 \epsilon \Delta / \log \Delta $ colors, which, by choosing $\epsilon$ small enough, is less than the chromatic number of $G$, a contradiction.
\end{proof}
In the rest of the section, we prove \Cref{lem:subgraph-coloring}. We prove \Cref{lem:subgraph-coloring} in two steps. The first step is to show that, if we are given an $S$-solution for $\Pi'$, then we can convert it into an $S$-solution for $\Pi_{\Delta}((\alpha+1)c)$. 
The second step is to show that an $S$-solution for $\Pi_{\Delta}((\alpha+1)c)$ can be converted into a proper $(2(\alpha+1)c)$-coloring of the nodes in $S$. Note that these two statements imply \Cref{lem:subgraph-coloring}. In other words, the second step proves that $\Pi_{\Delta}(x)$ is strictly related to the problem of coloring a graph of degree $\Delta$ with $\Theta(x)$ colors. Based on this informal equivalence between $\Pi_{\Delta}(x)$ and $x$-coloring, we can informally restate the first step as proving that, if we can solve $x$-coloring in $0$ rounds in all subgraphs of maximum degree $\Delta'$, then we can solve $x$-coloring also on the support graph of degree~$\Delta$.

\begin{lemma}\label{lem:sol_on_d'}
Let $S$ be a subset of nodes of $G$, and let $k \le \Delta'$ be an integer. Assume there exists an $S$-solution for $\Pi' := \mathrm{lift}_{\Delta,2}(\Pi_{\Delta'}(k))$. Then, there exists an $S$-solution for $\Pi_{\Delta}(k)$. 
\end{lemma}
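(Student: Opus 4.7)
The plan is to read off, for each $(v,e)$ with $v \in S$, a single label from the label-set $L_{v,e}$ so that at every $v \in S$ the resulting $\Delta$ labels form a valid white configuration of $\Pi_\Delta(k)$, namely a multiset $\ell(C_v)^{\Delta - |C_v| + 1}\,\X^{|C_v| - 1}$ for some nonempty $C_v \subseteq \{1,\ldots,k\}$. The key structural observation is that $\X$ dominates every other label in the black diagram of $\Pi_{\Delta'}(k)$, since $\X\s\L$ is always a valid edge configuration, so right-closedness forces $\X \in L_{v,e}$ for every label-set; consequently $\X$ can be assigned freely at any (node, edge) pair, and the only real issue is to label the remaining $\Delta - |C_v| + 1$ edges with the fixed label $\ell(C_v)$ in agreement with the corresponding label-sets.

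The heart of the argument is the combinatorial claim: for every $v \in S$ there exists $C_v$ with $|C_v| \le k$ such that $B(C_v) := \{e \text{ incident to } v : \ell(C_v) \notin L_{v,e}\}$ has size at most $|C_v| - 1$. I plan to prove this by contradiction, assuming $|B(C)| \ge |C|$ for all nonempty $C \subseteq \{1,\ldots,k\}$. Because $\ell(C')$ is stronger than $\ell(C)$ in the black diagram whenever $C' \subseteq C$, label-sets are downward closed in their $\ell(\cdot)$-content, which gives monotonicity $C_1 \subseteq C_2 \Rightarrow B(C_1) \subseteq B(C_2)$. If Hall's condition holds on the bipartite graph between the $k$ color indices and the edges of $v$ (with $i \sim a$ iff $a \in B(\{i\})$), a system of distinct representatives $a_1,\ldots,a_k$ extended to a $\Delta'$-subset $A^*$ (using $\Delta' \ge k$) satisfies $\{a_i : i \in C\} \subseteq A^* \cap B(C)$ for every $C$ by monotonicity, hence $|A^* \cap B(C)| \ge |C|$; this directly contradicts the $\Pi'$ white constraint at $v$ applied to $A^*$. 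When Hall's condition fails on some $T$, the inequality $|B(T)| \ge |T| > |\bigcup_{i \in T} B(\{i\})|$ produces a ``deficient'' witness $b$ with $\ell(\{i\}) \in L_{v,b}$ for every $i \in T$ but $\ell(T) \notin L_{v,b}$, which I plan to combine with partial SDRs on the non-deficient coordinates to still build an $A^*$ that violates the $\Pi'$ white constraint.

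With $C_v$ fixed at every $v \in S$, the $\Pi_\Delta(k)$ labeling assigns $\ell(C_v)$ to $\Delta - |C_v| + 1$ of $v$'s edges chosen among $\{e : \ell(C_v) \in L_{v,e}\}$ and $\X$ to the remaining $|C_v| - 1$ edges; at nodes outside $S$ the label is picked arbitrarily from the label-set. The white constraint at each $v \in S$ holds by construction, and for the black constraint on an $S$--$S$ edge $e = (u, v)$, either some side is labeled $\X$ (and $\{\X, \cdot\}$ is always a valid $\Pi_\Delta(k)$ edge configuration), or both sides carry $\ell(C_u)$ and $\ell(C_v)$ with $\ell(C_u) \in L_{u,e}$ and $\ell(C_v) \in L_{v,e}$; in the latter case the $\Pi'$ black constraint applied to this specific pair of choices forces $C_u \cap C_v = \emptyset$, which is precisely the $\Pi_\Delta(k)$ edge constraint. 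The main obstacle I expect is the Hall-failure sub-case of the combinatorial claim: downward closure does not automatically convert membership of the singletons $\ell(\{i\})$ for $i \in T$ into membership of $\ell(T)$, so the deficient witnesses behave well only for large $C$'s, and assembling them together with an SDR on the remaining coordinates into a single $\Delta'$-subset $A^*$ that simultaneously satisfies $|A^* \cap B(C)| \ge |C|$ for \emph{every} $C$ with $|C|\le k$ will require a careful iterative argument.
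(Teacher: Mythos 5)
There is a genuine gap, and it sits exactly where you flagged it: the Hall-failure sub-case of your combinatorial claim. Your bad sets $B(C)=\{e : \ell(C)\notin L_{v,e}\}$ satisfy only $B(C)\supseteq\bigcup_{i\in C}B(\{i\})$, and the inclusion can be strict (a label-set may contain every singleton $\ell(\{i\})$, $i\in T$, without containing $\ell(T)$ -- right-closedness only propagates \emph{downward} to subsets). Consequently, failure of Hall's condition on your singleton bipartite graph does not produce the set $C$ you need, and the success of Hall's condition only handles the complementary case. The iterative assembly of ``deficient witnesses'' with partial SDRs that you propose is not carried out, and it is not clear it can be made to work; as written, the proof establishes the existence of $C_v$ with $|B(C_v)|\le|C_v|-1$ only under an unverified hypothesis. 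Note also that you are proving a \emph{stronger} statement than necessary, which is what creates the difficulty.

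The paper's proof dissolves this obstacle by relaxing what is required of the final assignment. It replaces each label-set by the union of its colors, $C_e(v):=\bigcup_{\ell(C)\in L_e(v)}C$, and assigns $\ell(C_v)$ to an edge $e$ whenever $C_v\subseteq C_e(v)$ -- it does \emph{not} insist that $\ell(C_v)\in L_e(v)$. With this relaxation the bad set of $C$ becomes exactly $\{e : C\not\subseteq C_e(v)\}=\bigcup_{i\in C}\{e : i\notin C_e(v)\}$, i.e., precisely the neighborhood $N(C)$ in the singleton Hall graph, so there is no gap between $B(T)$ and the union of the $B(\{i\})$: your first sub-case (Hall holds $\Rightarrow$ the $\Pi'$ white constraint is violated on a $\Delta'$-subset containing the SDR) already yields the full claim, since Hall must then fail for some $C$, giving at most $|C|-1$ edges with $C\not\subseteq C_e(v)$. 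The price is that the black constraint can no longer be checked by picking the pair $(\ell(C_u),\ell(C_v))$ from the two label-sets; instead the paper observes that the $\Pi'$ edge constraint forces $C_e(u)\cap C_e(v)=\emptyset$ for the two endpoints of any edge between nodes of $S$, and $C_u\subseteq C_e(u)$, $C_v\subseteq C_e(v)$ then give the required disjointness (an $S$-solution for $\Pi_\Delta(k)$ is just a labeling satisfying the constraints; the labels need not come from the label-sets). To repair your proof, adopt this relaxation; your remaining steps (the SDR argument and the final white/black verification) then go through essentially unchanged.
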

\begin{proof}
    For each edge $e$ incident to a node $v$ in $S$, let $L_e(v)$ be its label-set. 
    We start by transforming $L_e(v)$ into a set of colors that satisfies some desirable properties. Let $C_e(v) = \cup_{\ell(C) \in L_e(v)} C$. Observe that, for two nodes $v$ and $u$ that are incident to the same edge $e$, $C_e(v) \cap C_e(u) = \emptyset$. In fact, by the definition of the edge constraint of $\Pi'$ (see \Cref{def:lift}) it holds that if $\ell(C_1) \in L_e(v)$ and $\ell(C_2) \in L_e(u)$, then $\{\ell(C_1),\ell(C_2)\}$ must be contained in the edge constraint of $\Pi_{\Delta'}(k)$, which in turn requires $C_1$ and $C_2$ to be disjoint. 

    For each node $v$ in $S$, we define a bipartite graph $H = (U_H \cup V_H, E_H)$ as follows. 
    Let $e_1,\ldots, e_{\Delta}$ be the edges incident to $v$, taken in an arbitrary order.
    The set $U_H$ is defined as $U_H := \{u_1,\ldots,u_k\}$. The set $V_H$ is defined as $V_H := \{v_1,\ldots,v_{\Delta}\}$. The set of edges $E_H$ is defined as follows. There is an edge between $u_i \in U_H$ and $v_j \in V_H$ if and only if $i \notin C_{e_j}(v)$. For a subset of colors $C = \{c_{i_1},\ldots,c_{i_{|C|}}\}$, let $N(C)$ be the union of the neighbors of nodes $v_{i_1},\ldots,v_{i_{|C|}}$.
    Assume, for a contradiction, that the following holds.
    \[
        \forall C \subseteq \mathcal{C}, \; |C| \le |N(C)|
    \]
    By Hall's marriage theorem \cite{hall}, this condition implies that there exists a matching in $H$, where all nodes in $U_H$ are matched. Consider a subset $E'$ of $\Delta'$ edges incident to $v$ that include the edges corresponding to matched nodes in $V_H$ (note that this subset exists since $k \le \Delta'$). We claim that, when considering the chosen edges $E'$, the constraint of $\Pi'$ is not satisfied on $v$. Suppose, for a contradiction, that there exists a configuration $\ell(C)^{\Delta'-x} \s \X^x$, where $x = |C| - 1$, valid for these edges. This implies that there exist $\Delta'-x$ edges $e \in E'$ incident to $v$ satisfying $\ell(C) \in L_e(v)$. By the definition of $C_e(v)$, this implies that, in $E'$, there are at least $\Delta' - x = \Delta' - |C| + 1$ edges $e$ satisfying that $C \subseteq C_e(v)$.
    However, by the construction of $E'$, there are at least $|C|$ edges $e$ in $E'$ satisfying that $C_e(v)$ misses at least one color of $C$. Thus, at most $\Delta' - |C|$ edges $e \in E'$ satisfy that $C \subseteq C_e(v)$, reaching a contradiction.

    Hence, we get that, for all nodes $v$, there exists a set $C \subseteq \mathcal{C}$ satisfying that $|C| \ge |N(C)| + 1$.
    This implies that there exists a set $C \subseteq \mathcal{C}$ satisfying that at most $|C|-1$ edges $e$ of $v$ satisfy $C \not\subseteq C_e(v)$, and hence that we can assign the configuration $\ell(C)^{\Delta-x} \s X^x$, where $x = |C| - 1$, to~$v$.
\end{proof}

\begin{lemma}\label{lem:col-supergraph}
Let $S$ be a subset of nodes of $G$, and let $k \le \Delta'$ be an integer. Assume there exists an $S$-solution for $\Pi_{\Delta}(k)$. Then, the subgraph induced by the nodes in $S$ can be colored with $2k$ colors.
\end{lemma}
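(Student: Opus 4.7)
The plan is to prove the lemma via a list-coloring argument. For each $v \in S$, let $C_v \subseteq \{1,\ldots,k\}$ be the non-empty color-set determined by $v$'s white configuration $\ell(C_v)^{\Delta - |C_v|+1} \s \X^{|C_v|-1}$ in the $S$-solution, and assign to $v$ the list $L(v) := C_v \times \{0,1\} \subseteq \{1,\ldots,k\} \times \{0,1\}$ of size $2|C_v|$. An $L$-list-coloring of $G[S]$ then produces a proper coloring with at most $2k$ colors drawn from $\{1,\ldots,k\} \times \{0,1\}$, which is exactly what we want.

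I will classify each edge $(u,v) \in E(G[S])$ as \emph{normal} if $C_u \cap C_v = \emptyset$ and \emph{hard} otherwise. Normal edges are automatically safe: the lists $L(u)$ and $L(v)$ are disjoint, so any valid list-assignment gives different colors at $u$ and $v$ regardless of the choice. The hard edges are the only ones that can cause a conflict; however, from the edge constraint of $\Pi_\Delta(k)$ (namely, $\ell(C_1)\s\ell(C_2)$ is allowed only when $C_1 \cap C_2 = \emptyset$), every hard edge must have at least one endpoint whose label on that edge is $\X$.

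The main step will be to show $L$-list-degeneracy: for every subgraph $H \subseteq G[S]$, some vertex $v \in V(H)$ has hard-degree in $H$ strictly less than $|L(v)| = 2|C_v|$. Granting this, the lemma follows by standard induction: remove such a $v$, list-color $H - v$ inductively, and then extend to $v$ by picking from $L(v)$ an unused color (the list is larger than the number of forbidden neighbors). To establish the degeneracy condition, I will use a charging argument. Each hard edge of $H$ is charged to one of its $\X$-using endpoints (which lies in $V(H)$, since both endpoints do); because $v$ uses $\X$ on exactly $|C_v| - 1$ edges globally by its node configuration, $v$ is charged at most $|C_v| - 1$ times. Hence
\begin{equation*}
\sum_{v \in V(H)} \deg^{\mathrm{hard}}_H(v) \;=\; 2\,|E^{\mathrm{hard}}(H)| \;\le\; 2\sum_{v \in V(H)} (|C_v| - 1),
\end{equation*}
while $\sum_{v \in V(H)} |L(v)| = 2\sum_{v \in V(H)} |C_v|$. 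Subtracting gives
\begin{equation*}
\sum_{v \in V(H)} \bigl(|L(v)| - \deg^{\mathrm{hard}}_H(v)\bigr) \;\ge\; 2|V(H)| \;>\; 0,
\end{equation*}
so some $v$ indeed satisfies $\deg^{\mathrm{hard}}_H(v) < |L(v)|$.

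The main obstacle is the charging bookkeeping: one must use that every hard edge is special (a consequence of the edge constraint) so that each hard edge in $H$ has an $\X$-endpoint inside $V(H)$ to charge to, and then exploit the fact that the node configuration of $\Pi_\Delta(k)$ pins down each node's global $\X$-count exactly to $|C_v| - 1$. Once this counting is in place, the disjointness of lists across normal edges removes all remaining constraints, and the inductive list-coloring argument is completely standard.
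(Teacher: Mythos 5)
Your proposal is correct and follows essentially the same route as the paper: both double the palette to $2|C_v|$ per node, isolate the potentially monochromatic edges (your ``hard'' edges are contained in the paper's set of edges carrying an $\X$ on at least one side), bound their number by $\sum_v(|C_v|-1)$ using the fact that each node's configuration contains exactly $|C_v|-1$ copies of $\X$, and conclude via a degeneracy ordering plus greedy extension. The only cosmetic difference is that you phrase the conclusion as list-degeneracy over all subgraphs while the paper builds the elimination ordering explicitly; the counting argument is identical.
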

\begin{proof}
For each $v \in S$, let $\ell(C_v)^{\Delta'-x_v} \s \X^{x_v}$, where $x_v = |C_v| - 1$, be the configuration of $v$ in the $S$-solution. Observe that, if each node $v$ picks an arbitrary color from $C_v$, then we obtain a coloring of the nodes of $S$ satisfying that the only monochromatic edges in the subgraph induced by nodes in $S$ are edges that are labeled $\X$ on at least one side. We prove that, at the cost of doubling the amount of colors and using a more careful assignment, then we can properly color the subgraph induced by nodes in $S$.
More in detail, we provide a function that maps each node $v$ into a color from $C'_v := \{c_{i,1},c_{i,2} \mid c_i \in C_v\}$, such that we obtain a proper coloring of the nodes in $S$. 

Let $G_X$ be the graph obtained as follows. Start from the subgraph induced by the nodes in $S$ and throw away all edges $e$ that are labeled with a configuration $\{\ell_1,\ell_2\}$ satisfying that both $\ell_1$ and $\ell_2$ are not $\X$.
We prove that we can provide an ordering $\mathcal{O} = (v_{1},v_{2},\ldots)$ of the nodes that satisfies the following property. For each node $v_{i}$, let $V_{\ge i} = \{ v_{j} \mid j \ge i\}$, and let $G_{\ge i}$ be the subgraph of $G_X$ induced by nodes in $V_{\ge i}$. The degree of $v_i$ in $G_{\ge i}$ is bounded by $2|C_{v_i}| - 1$.

Suppose we already constructed the prefix of the ordering $(v_1,\ldots,v_{i-1})$. We show that a node $v_i$ that satisfies the above property exists. For each node $v \in V_{\ge i}$, let $d_i(v)$ be the degree of node $v$ in $G_{\ge i}$.
Let $E(G_{\ge i})$ be the edges of $G_{\ge i}$. We start by proving that $|E(G_{\ge i})| \le \sum_{v \in G_{\ge i}} (|C_{v}| - 1)$. For each edge to be in $E(G_{\ge i})$, it needs to be labeled $\X$ on at least one side. Hence, $|E(G_{\ge i})|$ is upper bounded by the amount of $\X$ in $G_{\ge i}$, which in turn is bounded by $\sum_{v \in G_{\ge i}} (|C_{v}| - 1)$. We will use this property in the following.

Suppose, for a contradiction, that all nodes $v \in V_{\ge i}$ satisfy $d_i(v) \ge 2|C_{v}|$.
Then, the following holds.
\begin{align*}
    \forall v \in G_{\ge i} ~ 2|C_{v}| \le d_i(v) &\Longrightarrow \sum_{v \in G_{\ge i}} 2|C_{v}| \le \sum_{v \in G_{\ge i}} d_i(v) \\
    &\Longrightarrow \sum_{v \in G_{\ge i}} 2|C_{v}| \le 2 |E(G_{\ge i})| 
    &\Longrightarrow \sum_{v \in G_{\ge i}} |C_{v}| \le \sum_{v \in G_{\ge i}} (|C_{v}| - 1),
\end{align*}
which is a contradiction if $G_{\ge i}$ contains at least one node. Hence, there exists a node $v_i$ that satisfies  $d_i(v) \le 2|C_{v_i}| - 1$, as required. Hence, we can construct the ordering $\mathcal{O}$, as desired.

We now prove that we can process the nodes in the ordering that is the reverse of $\mathcal{O}$, and color each node $v$ with a color from the set $C'_v$ satisfying that the graph induced by nodes in $S$ is properly colored.
Assume that the nodes $v_{i+1},v_{i+2},\ldots$ are already colored. We show that we find a color for $v_i$ such that $G_{\ge i}$ is properly colored. Node $v_i$ has at most $2|C_{v_i}| - 1$ neighbors that are already colored, and it has $2|C_{v_i}|$ colors available. Hence, $v_i$ can pick a unused color from $C'_{v_i}$.
This implies that, after processing all nodes, we obtain a proper coloring of $G_X$ where each node $v$ gets a color from $C'_v$, which implies a proper coloring for the subgraph induced by nodes in $S$.
\end{proof}

\section{Arbdefective Colored Ruling Sets}\label{sec:acbrs}
In this section, we prove the following theorem.
\begin{restatable}{theorem}{acbrs}\label{thm:acbrs}
Let $\mathcal{G}$ be the family of $\Delta$-regular graphs of girth $\Omega(\log_\Delta n)$.
     Let $\Delta$ be the degree of the support graph $G$, and let $\Delta'$ be the degree of the input graph $G'$.
     Assume that the support graph is from $\mathcal{G}$, and that the input graph has degree $\Delta'$ satisfying $\Delta \ge 3 \Delta'$.
     Let $\Bar{\Delta} = \min\{\Delta', \eps \Delta / \log \Delta \} / 2^{c \cdot \beta}$, for some small-enough constant $\eps$, and some large-enough constant $c$.
     Let $\alpha\ge 0$, $c \ge 1$, $\beta \ge 1$ be integers satisfying $(\alpha+1)c \le \Bar{\Delta}$ and $\beta < \Delta'$.
     Then, the $\alpha$-arbdefective $c$-colored $\beta$-ruling set problem, in the Supported LOCAL model, requires $T_{\mathrm{det}} =\Omega(\min\{ \beta(\frac{\Bar{\Delta}}{(\alpha + 1)c})^{1 / \beta}), \log_\Delta n \})$ deterministic rounds and $T_{\mathrm{rand}} = \Omega(\min\{ \beta(\frac{\Bar{\Delta}}{(\alpha + 1)c})^{1 / \beta}), \log_\Delta \log n \})$ randomized rounds, for all $\beta$ satisfying $T_{\mathrm{det}} - \beta = \Omega(T_{\mathrm{det}})$ in the deterministic case, and $T_{\mathrm{rand}} - \beta = \Omega(T_{\mathrm{rand}})$ in the randomized case.
\end{restatable}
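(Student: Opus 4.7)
The plan is to apply the framework of \Cref{thm:approach-bipartite}: identify a problem $\Pi$ in the black-white formalism that is at most as hard as $\alpha$-arbdefective $c$-colored $\beta$-ruling sets, exhibit a lower bound sequence $\Pi_0 = \Pi, \Pi_1, \ldots, \Pi_T$ of length $T = \Theta(\beta(\bar{\Delta}/((\alpha+1)c))^{1/\beta})$, and prove that $\mathrm{lift}_{\Delta,2}(\Pi_T)$ admits no solution on any graph drawn from the family $\mathcal{G}$ of \Cref{lem:graph-family}. Recovering the round bound together with the randomized consequence via \Cref{lem:derand-supported} is then automatic.

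First, I would use the same black-white formulation of $\alpha$-arbdefective $c$-colored $\beta$-ruling sets employed in \cite{Balliu0KO22}, together with the lower bound sequence constructed there. The key structural feature of that sequence is that after $\beta$ rounds of elimination the ``ruling'' component of the problem vanishes: the remaining problem is (a relaxation of) an arbdefective coloring problem of the form $\Pi_{\Delta'}(k)$ from \Cref{def:arbdef}, where the parameter $k$ grows by a multiplicative factor at each further step. A careful accounting of the recurrence shows that after a further $\Theta((\bar{\Delta}/((\alpha+1)c))^{1/\beta})$ steps, starting from an effective coloring parameter $(\alpha+1)c$, the parameter $k$ reaches a value at most $\bar{\Delta}$. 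The condition $(\alpha+1)c \leq \bar{\Delta}$ and $\beta < \Delta'$ in the theorem statement ensures that this target can be reached while staying within the regime where the round elimination steps are valid (in particular, white-configuration sizes stay bounded by $\Delta'$).

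Next, for the final problem $\Pi_T$ in the sequence, I would show that $\mathrm{lift}_{\Delta,2}(\Pi_T)$ admits no bipartite solution on any $G \in \mathcal{G}$. Since $\Pi_T$ is (a relaxation of) $\Pi_{\Delta'}(k)$, any hypothetical bipartite solution for its lift would, by \Cref{lem:subgraph-coloring} applied with $S = V(G)$, yield a proper $2k$-coloring of $G$. With the parameter tracking above, $2k \leq 2\bar{\Delta} \leq 2\eps \Delta/\log \Delta$, and by \Cref{lem:graph-family} the chromatic number of $G$ is $\Omega(\Delta / \log \Delta)$. Choosing $\eps$ small enough produces a contradiction. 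Combining this unsolvability with the sequence length $T$ inside \Cref{thm:approach-bipartite}, and noting that the additional factor of $2^{-c\beta}$ in the definition of $\bar{\Delta}$ absorbs the constant-factor losses at each of the $\beta$ initial ``ruling'' steps, yields the claimed $\min\{\beta(\bar{\Delta}/((\alpha+1)c))^{1/\beta}, \log_\Delta n\}$ deterministic bound and the analogous randomized bound.

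The main obstacle is the careful interplay between the ruling-set structure of the first $\beta$ problems in the sequence and the lift operation. Unlike the pure arbdefective coloring case of \Cref{sec:arb}, where the fixed point of \Cref{lem:arbdef-fp} allows an arbitrary-length sequence to reuse the same problem, here the sequence is genuinely non-stationary during its first $\beta$ steps, and we must verify that the relaxation chain from \cite{Balliu0KO22} terminates at a problem to which \Cref{lem:sol_on_d'} and \Cref{lem:col-supergraph} apply. This is where the assumption $T_{\mathrm{det}} - \beta = \Omega(T_{\mathrm{det}})$ enters: it guarantees that the length of the non-stationary prefix is swallowed by the asymptotic bound, so the final coloring-based contradiction governs the round complexity. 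The parameter bookkeeping is routine but tedious; apart from this, all the heavy lifting is already done by \Cref{thm:approach-bipartite} and \Cref{lem:subgraph-coloring}.
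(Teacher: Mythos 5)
There is a genuine gap. Your proposal rests on the claim that after $\beta$ round elimination steps the ``ruling'' component of the problem vanishes, so that the final problem $\Pi_T$ in the lower bound sequence is (a relaxation of) the pure arbdefective coloring problem $\Pi_{\Delta'}(k)$, to which \Cref{lem:subgraph-coloring} can be applied directly with $S=V(G)$. This is not what the sequence from \cite{Balliu0KO22} gives you: by \Cref{lem:acb-sequence}, the final problem is related to $\Pi_{\Delta}(k,\beta)$, which still carries the pointer labels $\P_i,\U_i$ and the configurations $\P_i \s \U_i^{\Delta-1}$ for all $1\le i\le\beta$. The round elimination steps grow the color parameter from $(\alpha+1)c$ to $k$ but do not strip away the ruling structure. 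Consequently, $\mathrm{lift}_{\Delta,2}(\Pi_{\Delta'}(k,\beta))$ admits label-sets and node configurations (nodes ``pointing'' via $\P_i$) that are entirely outside the scope of \Cref{lem:sol_on_d'} and \Cref{lem:col-supergraph}, and your direct appeal to \Cref{lem:subgraph-coloring} does not go through. The missing piece is precisely the main technical content of the paper's \Cref{lem:acb-recursive}: a recursive peeling argument which, given an $S$-solution of the lifted problem at ruling depth $\beta$, produces an $S'$-solution at depth $\beta-1$ with $|S'|\ge|S|/4$, at the cost of doubling the number of colors and decrementing the effective degree by one (via a three-way case analysis on how $\P_\beta$ and $\U_\beta$ appear around each node). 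Only after $\beta$ applications of this lemma does one reach a pure coloring problem on an $n/4^\beta$ fraction of the nodes, at which point the chromatic-number/independence-number contradiction from \Cref{lem:graph-family} applies; the $2^{c\beta}$ slack in $\bar\Delta$ and the $1/4^\beta$ node loss are exactly what this peeling consumes.

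Two smaller issues: the hypothesis $T_{\mathrm{det}}-\beta=\Omega(T_{\mathrm{det}})$ is not there to absorb a ``non-stationary prefix'' of the sequence; it compensates for the $\beta$-round reduction in \Cref{lem:arbdef-to-pi} from the ruling set problem to $\Pi_\Delta((\alpha+1)c,\beta)$, so that the lower bound for the latter still transfers to the former up to constants. Also, since the support graphs here are general $\Delta$-regular graphs (treated as $2$-uniform hypergraphs) rather than bipartite $2$-colored graphs, the correct wrapper is \Cref{thm:approach-hypergraphs}, not \Cref{thm:approach-bipartite}.
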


\subsection{Problem Definition}
In order to prove \Cref{thm:acbrs}, we consider a family of problems that we will later show to be strongly related with $\alpha$-arbdefective $c$-colored $\beta$-ruling sets.

\paragraph{A family of problems in the black-white formalism.}
We define a problem $\Pi_\Delta(c,\beta)$ in the black-white formalism as follows.
\begin{definition}[The problem $\Pi_\Delta(c,\beta)$]
Let $\mathcal{C} := \{1,\ldots,c\}$, let $\beta \ge 0$ be an integer, and let $\Sigma_{\Pi_\Delta(c,\beta)} := \{ \X \} \cup \{ \ell(C) \mid C \subseteq \mathcal{C} \text{ and } C \neq \emptyset \} \cup \{\P_i,\U_i \mid 1 \le i \le \beta\}$.
If $\beta = 0$, the problem $\Pi_\Delta(c,\beta)$ is defined as the problem $\Pi_\Delta(c)$ of \Cref{def:arbdef}. For $\beta \ge 1$, the problem $\Pi_\Delta(c,\beta)$ is defined via the following white and black constraints.
\begin{equation*}
	\begin{aligned}
		\begin{aligned}
			\Pi^W_\Delta(c,\beta)\text{:}\\
	&\ell(C)^{\Delta-x} \s \X^x,&& \text{where} \; x=|C|-1, \\ &&&\text{for all }  C \subseteq \mathcal{C}, C \neq \emptyset\\
        & \P_i \s \U_i^{\Delta - 1}, && \text{for all $1 \le i \le \beta$}
 		\end{aligned}
   \qquad
		\begin{aligned}
  			\Pi^B_\Delta(c,\beta)\text{:}\\
	&\ell(C_1) \s \ell(C_2),&& \text{ for all $C_1, C_2$ s.t. }\\&&&\text{ } C_1 \cap C_2 = \emptyset, \\ &&& \text{ where } C_1, C_2 \neq \emptyset\\
    &\X \s \L,&& \text{ for all $\L \in \Sigma_{\Pi_\Delta(c,\beta)} $}\\
        &\P_i \s \U_j,&& \text{ for all $1 \le j < i \le \beta$} \\
        &\P_i \s \ell(C),&& \text{ for all $1 \le i \le \beta$},\\&&&\text{ for all $C \subseteq \mathcal{C}, C \neq \emptyset$} \\
        &\U_i \s \U_j,&& \text{ for all $1 \le i, j \le \beta$}\\
		\end{aligned}
	\end{aligned}
\end{equation*}
\end{definition}
An example of black diagram of the probles in the family is shown in \Cref{fig:diagram-acrs}.
  \begin{figure}[b]
        \centering
        \includegraphics[width = 0.27\textwidth]{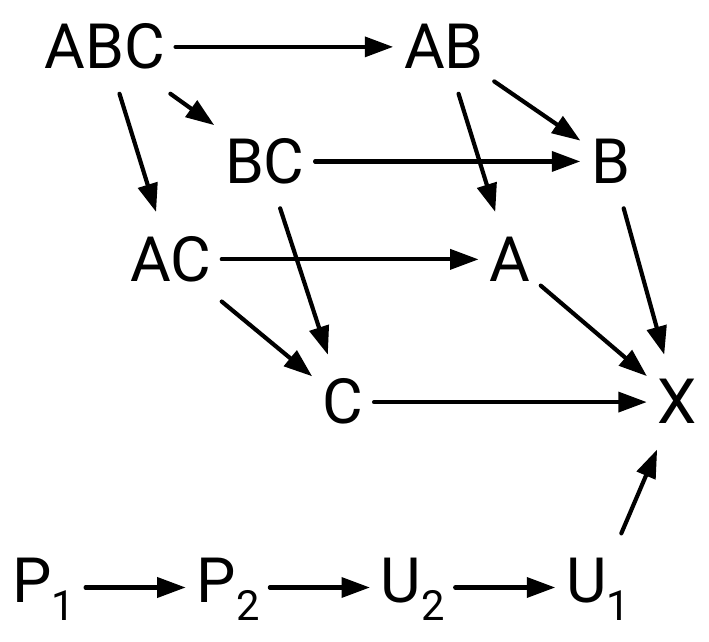}
        \caption{Black diagram of $\Pi$, in the case where $\mathcal{C}$ contains $3$ colors denoted with $\A$, $\B$, and $\C$, and $\beta = 2$.}
        \label{fig:diagram-acrs}
    \end{figure}

Observe that the problem $\Pi_\Delta(c,\beta)$ is defined very similarly as the problem $\Pi_\Delta(c)$ of \Cref{def:arbdef}. In particular, $\Pi_\Delta(c,\beta)$ can be obtained from $\Pi_\Delta(c)$ by performing the following operations.
\begin{itemize}
    \item On the node constraint, we add the configuration $\P_i \s \U_i^{\Delta-1}$ for each $1 \le i \le \beta$.
    \item On the edge constraint, we make $\P_i$ and $\U_i$ compatible with all the labels of $\Pi_\Delta(c)$.
    \item Additionally, on the edge constraint, we make $\U_i$ compatible with $\U_j$ for all pairs $(i,j)$, and we make $\P_i$ compatible with $\U_j$ only if $i > j$.
\end{itemize}
Intuitively, a valid solution of $\Pi_\Delta(c,\beta)$ can be obtained as follows.
\begin{itemize}
    \item Select a subset of nodes $S$ that satisfies that all nodes in $V \setminus S$ have a node in $S$ at distance at most $\beta$, and solve $\Pi_\Delta(c)$ on them.
    \item All nodes in $V \setminus S$ can use labels $\P_i$ and $\U_i$ to point to a node in $S$ at distance at most $\beta$.
\end{itemize}

\paragraph{What is known about these problems.}
In \cite{Balliu0KO22}, it has been shown that $\Pi_\Delta(c,\beta)$ is strictly related to $\alpha$-arbdefective $c$-colored $\beta$-ruling sets, and in particular that a solution for $\alpha$-arbdefective $c$-colored $\beta$-ruling sets can be converted in $\beta$ rounds into a solution for $\Pi_\Delta((\alpha+1)c,\beta)$, implying that $\Pi_\Delta((\alpha+1)c,\beta)$ is at most $\beta$ rounds harder than $\alpha$-arbdefective $c$-colored $\beta$-ruling sets. 
\begin{lemma}[Theorem 1.5 (in the ArXiv version) of \cite{Balliu0KO22}]\label{lem:arbdef-to-pi}
In the LOCAL model, given a solution for $\alpha$-arbdefective $c$-colored $\beta$-ruling sets, it is possible to solve $\Pi_\Delta((\alpha+1)c,\beta)$ in $\beta$ rounds. 
\end{lemma}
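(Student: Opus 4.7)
The plan is to spend the $\beta$ allotted rounds constructing a shortest-path forest rooted at $S$ and then apply a local labeling that combines the $0$-round reduction of Lemma~\ref{lem:arbdef-to-bw} on $S$-nodes with a pointer-style labeling on $V \setminus S$. Concretely, during the first $\beta$ rounds each node $v \in V \setminus S$ learns its distance $d(v) := \min_{u \in S}\operatorname{dist}_G(v,u) \in \{1,\dots,\beta\}$ by flooding an ``$\in S$'' flag from $S$-nodes, and picks one neighbor at distance $d(v)-1$ as its parent $p(v)$, breaking ties by ID. Every $v\in S$ meanwhile also learns its whole $\beta$-neighborhood, which in particular tells it which of its $V\setminus S$-neighbors selected it as their parent.

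After the $\beta$ rounds, each node assigns labels locally. A node $v \in S$ applies Lemma~\ref{lem:arbdef-to-bw} to produce a label set $C_v \subseteq \{1,\dots,(\alpha+1)c\}$ of size $a_v + 1$, where $a_v \le \alpha$ is the out-degree of $v$ in the arbdefective orientation; it assigns $\X$ to the $a_v$ outgoing same-color $S$-edges and $\ell(C_v)$ to every other incident edge, including all edges toward $V \setminus S$, yielding the white configuration $\ell(C_v)^{\Delta - a_v}\s\X^{a_v}$. A node $v \in V \setminus S$ with $d(v)=i$ labels the edge to $p(v)$ with $\P_i$ and every other incident edge with $\U_i$, yielding $\P_i \s \U_i^{\Delta-1}$. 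Both patterns are valid white configurations of $\Pi_\Delta((\alpha+1)c,\beta)$.

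Correctness then reduces to a case analysis of the edge constraints of $\Pi_\Delta((\alpha+1)c,\beta)$. Within-$S$ edges are already valid by Lemma~\ref{lem:arbdef-to-bw}. A boundary edge $\{v,u\}$ with $v \in S$ and $u$ at distance $1$ from $S$ is either $\ell(C_v)\s\P_1$ when $p(u) = v$, allowed by the rule $\P_i \s \ell(C)$, or $\ell(C_v)\s\U_1$ when $p(u) \neq v$; this latter case is permitted under the reading of $\Pi_\Delta(c,\beta)$ spelled out right after the definition in the excerpt, namely that every $\U_i$ is made compatible with every label of $\Pi_\Delta(c)$. A parent edge from a node $u$ at distance $i \ge 2$ to its parent at distance $i-1$ is $\P_i \s \U_{i-1}$, allowed since $i-1 < i$, and any remaining edge between two $V\setminus S$ nodes carries two $\U$-labels, allowed by $\U_i \s \U_j$. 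The runtime is exactly $\beta$ since no further communication is needed after the initial BFS.

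The main obstacle I expect is precisely the boundary case $\ell(C_v)\s\U_1$ on $S$-to-$(V\setminus S)$ edges where the $V\setminus S$-endpoint did not choose the $S$-endpoint as its parent: the construction above hinges on $\U_i \s \ell(C)$ being an admissible edge configuration, which is the intended reading of the problem but would fail if one took the itemized list of black constraints strictly. Once this compatibility is in place, no clever coordination between $S$-nodes is required -- each $v\in S$ can safely label every edge to $V\setminus S$ with $\ell(C_v)$, and the natural Lemma~\ref{lem:arbdef-to-bw}-plus-pointers labeling goes through unchanged.
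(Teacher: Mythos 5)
This lemma is imported from \cite{Balliu0KO22} (Theorem 1.5 of the ArXiv version) and the paper gives no proof of its own, so there is nothing to compare against line by line; your reconstruction is the standard argument and it is correct. The BFS-to-$S$ pointer forest computed in $\beta$ rounds, the configuration $\P_i \s \U_i^{\Delta-1}$ for nodes at distance $i$ from $S$, and the $0$-round conversion of the arbdefective coloring on $G[S]$ via \Cref{lem:arbdef-to-bw} (with $\ell(C_v)$ on all edges leaving $S$) is exactly the intended construction, and your case analysis of the edge constraints is complete.

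On the one obstacle you flag: your resolution is the right one. The itemized black constraint in the definition of $\Pi_\Delta(c,\beta)$ is indeed missing the line $\U_i \s \ell(C)$, but the prose immediately following the definition (``we make $\P_i$ and $\U_i$ compatible with all the labels of $\Pi_\Delta(c)$'') states the intended constraint, and the paper's own proof of \Cref{lem:acb-recursive} explicitly relies on it (it asserts that whenever $\{\P_i,\ell\}$ is an allowed edge configuration, so is $\{\U_\beta,\ell\}$, which fails for $\ell = \ell(C)$ under the strict itemized reading). Without $\U_i \s \ell(C)$ the lemma would in fact be false in general, since a node of $S$ has only $|C_v|-1 \le \alpha$ labels $\X$ available and cannot spend them on its (possibly many) non-parent neighbors outside $S$. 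So your construction goes through under the definition the paper actually uses.
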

Moreover, in \cite{Balliu0KO22} it is also shown how, in the round elimination framework, problems $\Pi_\Delta(k,\beta)$ with different parameters are related.
\begin{lemma}[Lemma 6.1, Lemma 6.13, Lemma 8.6, and Corollary 8.8 (in the ArXiv version) of \cite{Balliu0KO22}]\label{lem:acb-sequence}
Let $t:= \lfloor \eps \beta(\frac{k}{(\alpha+1)c})^{1/\beta} \rfloor$, for some small-enough constant $\eps$, and any integer $1 \le k < \Delta$. 
Then, there exists a lower bound sequence $\Pi_1,\ldots,\Pi_t$, where $\Pi_1 = \Pi_\Delta((\alpha+1)c,\beta)$ and $\Pi_t$ can be related to $\Pi_\Delta(k,\beta)$.
\end{lemma}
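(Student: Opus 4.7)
The plan is to recognize that this lemma is, modulo notation, a direct consequence of the four results from \cite{Balliu0KO22} quoted in its header, and to re-derive it by explicitly exhibiting the sequence in our formalism. Concretely, I would construct the sequence by letting each $\Pi_i$ be of the form $\Pi_\Delta(k_i, \beta_i)$ for a carefully chosen pair $(k_i, \beta_i)$ with $(k_1, \beta_1) = ((\alpha+1)c, \beta)$ and $(k_t, \beta_t)$ close to $(k, \beta)$; after a final relaxation, $\Pi_t$ can be identified with a problem related to $\Pi_\Delta(k, \beta)$ as claimed. The crucial observation that makes this reduction essentially free is that the function $\mathrm{RE}$ and the notion of relaxation are purely syntactic---they depend only on the black-white formalism, not on the computational model---so the sequence constructed in \cite{Balliu0KO22} for the LOCAL model transfers verbatim to the Supported LOCAL setting. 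The bulk of the work is therefore to check that the problem descriptions used there coincide with $\Pi_\Delta(\cdot, \beta)$ as defined in this section.

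First, I would compute $\mathrm{RE}(\Pi_\Delta(k_i, \beta_i))$ explicitly, following the construction of Lemma 6.1 of \cite{Balliu0KO22}. The $\mathrm{RE}$ output has a label space consisting of right-closed subsets of the original label set; restricted to the ``natural'' subfamily---labels built from $\ell(C)$ together with compatibility-preserving combinations involving the ruling-set symbols $\P_i$ and $\U_i$---the output admits a clean relaxation to $\Pi_\Delta(k_{i+1}, \beta_{i+1})$ for slightly amplified parameters. The relaxation map $f$ collapses each subset label to a canonical representative, and verifying that $f$ is a valid relaxation reduces to checking that, for each black configuration of $\mathrm{RE}(\Pi_i)$, every tuple picked from $r(\ell_1) \times \cdots \times r(\ell_{d_B})$ lies in the black constraint of $\Pi_{i+1}$. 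The asymmetric rule $\P_i \s \U_j$ (valid only when $i > j$) must be preserved through this mapping, since it is exactly what enforces the ruling-set ``pointer chain'' structure across successive RE steps.

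Second, I would track the growth of $k_i$ using the quantitative analysis from Lemmas 6.13 and 8.6 of \cite{Balliu0KO22}. Their combined effect is a recurrence of the form $k_{i+1} \ge k_i + g(k_i, \beta)$ whose solution, iterated $t$ times starting from $k_1 = (\alpha+1)c$, reaches $k_t \ge k$ exactly when $t = \Theta(\beta(k/((\alpha+1)c))^{1/\beta})$; Corollary 8.8 then packages this as the sequence-length bound. The step I expect to be the main obstacle is the explicit $\mathrm{RE}$ computation together with the verification of the relaxation map, because the label space of $\mathrm{RE}(\Pi_\Delta(k, \beta))$ is combinatorially rich and the interaction between the arbdefective labels $\ell(C)$ and the ruling-set labels $\P_i, \U_i$ makes the black diagram substantially more intricate than for pure arbdefective coloring (cf.\ \Cref{fig:diagram-acrs}). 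The overall strategy is to mirror the proofs in \cite{Balliu0KO22} and translate each step into our formalism, relying on the fact that the $\mathrm{RE}$ function is defined identically in both papers.
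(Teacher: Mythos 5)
The lemma you are asked to prove is not actually proved in this paper: it is a pure citation to results of \cite{Balliu0KO22}, and the authors state explicitly in the preliminaries that ``we will not explicitly compute lower bound sequences. In fact, for the purposes of our proofs, we will only need to prove statements about the last problem of some sequences that have already been defined in different papers.'' So there is no proof here to compare against. That said, the load-bearing structural observation in your proposal---that $\mathrm{RE}$ and the notion of relaxation are purely syntactic operations on problems in the black-white formalism and therefore independent of the computational model, so that lower-bound sequences constructed in the LOCAL model transfer verbatim to the Supported LOCAL setting---is correct and is precisely the implicit justification the paper relies on when it imports \Cref{lem:acb-sequence} by citation (and indeed the paper spells this reasoning out once, in \Cref{sec:new-technique}, when it says the sequence ``is defined in the same exact way as when using round elimination in the standard LOCAL model'').

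One caveat on the details of your plan: you propose to take each intermediate problem $\Pi_i$ to be literally of the form $\Pi_\Delta(k_i,\beta_i)$, but the lemma's own phrasing---that $\Pi_t$ ``can be related to'' $\Pi_\Delta(k,\beta)$ rather than ``equals''---is a hint that the intermediate problems in the sequence of \cite{Balliu0KO22} are not exactly of this clean form; they are larger problems (products of several $\mathrm{RE}$ steps, not yet relaxed), and only after a terminal relaxation step does one land on something comparable to $\Pi_\Delta(k,\beta)$. Your program (explicitly computing $\re$ and $\rere$, exhibiting the relaxation map, verifying it preserves the black constraint including the asymmetric $\P_i\,\U_j$ rule, then solving the parameter recurrence) is a reasonable sketch of what the cited proof amounts to, but at the granularity written it is a research plan rather than a verification: it does not resolve the combinatorial work that \cite{Balliu0KO22} actually carries out in Lemmas 6.1, 6.13, 8.6 and Corollary 8.8. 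Since the paper at hand also does not carry out that work, there is no loss relative to the paper; just do not mistake the blueprint for a self-contained proof.
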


\subsection{A Lower Bound for the Supported LOCAL Model}
In order to prove \Cref{thm:acbrs}, we follow the same strategy as in the case of $x$-maximal $y$-matchings and arbdefective colorings, that is, we prove the following lemma, that combined with \Cref{thm:approach-hypergraphs}, \Cref{lem:acb-sequence}, and \Cref{lem:arbdef-to-pi}, gives \Cref{thm:acbrs}.
\begin{lemma}\label{lem:acb-lift-no-sol}
Let $k := \lfloor \min\{\Delta', \eps \Delta / \log \Delta \}/ 2^{c \cdot \beta} \rfloor$, for a small-enough constant $\eps$, and a large-enough constant $c$.
For any $n$ and $\Delta$ such that $n \ge \Delta^2$, there exists a $\Delta$-regular graph $G$ with at least $n / \Delta^2$ nodes and at most $n$ nodes, of girth at least $\epsilon' \log_\Delta n$ (for some absolute constant $\varepsilon'$), and where,  assuming $\Delta \ge 3 \Delta'$,  $\Pi' := \mathrm{lift}_{\Delta,2}(\Pi_{\Delta'}(k,\beta))$ has no non-bipartite solution.
\end{lemma}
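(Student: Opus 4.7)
My plan is to adapt the two-step strategy of Section~\ref{sec:arb} for the $\beta=0$ case (\Cref{lem:sol_on_d',lem:col-supergraph}) and add a third iterative coloring-extension step. Given $n\ge \Delta^2$, I take $G$ from \Cref{lem:graph-family} on $n'\in[n/\Delta^2,n]$ nodes, so that $G$ is $\Delta$-regular, has girth at least $\varepsilon'\log_\Delta n$, satisfies $\alpha(G)\le \alpha n'\log\Delta/\Delta$, and hence $\chi(G)\ge\Delta/(\alpha\log\Delta)$. I argue by contradiction, assuming a non-bipartite solution for $\Pi':=\mathrm{lift}_{\Delta,2}(\Pi_{\Delta'}(k,\beta))$ exists on $G$.

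The first step is to generalize \Cref{lem:sol_on_d'} to produce a $V$-solution of $\Pi_\Delta(k,\beta)$ on $G$. For each node $v$ I run a Hall's-theorem argument on the color components of the label-sets on its incident edges, exactly as in the proof of \Cref{lem:sol_on_d'}: if the Hall condition fails, $v$ receives a coloring configuration $\ell(C)^{\Delta-|C|+1}\s\X^{|C|-1}$; if it succeeds, the enumeration of the incident label-sets must contain enough pointer labels to assemble a pointer configuration $\P_i\s\U_i^{\Delta-1}$ for an appropriately chosen $i\in[1,\beta]$ (picked as the minimum level among pointer labels viable across the incident edges). The resulting solution partitions $V$ into the ruling set $S^\star$ (coloring configurations) and level sets $V_1,\ldots,V_\beta$, and the edge constraints of $\Pi_\Delta(k,\beta)$ force each $\P_i$-edge to terminate at $S^\star$ or at a strictly lower level, so $S^\star$ is a $\beta$-ruling set of $G$. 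Restricting the labeling to $S^\star$ gives a $V$-solution of $\Pi_\Delta(k)$ on $G$ in the sense of \Cref{def:s-solution} (with $S=S^\star$); applying \Cref{lem:col-supergraph} yields a proper $2k$-coloring of $G[S^\star]$.

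The last and most delicate step is to extend this $2k$-coloring of $G[S^\star]$ to a proper coloring of all of $G$ using at most $2k\cdot 2^{c\beta}$ colors, processing levels in order $i=1,\ldots,\beta$ with a constant-factor palette blow-up per step. The per-level constant blow-up would exploit three combined ingredients: (i) each level-$i$ node has a unique already-colored ``parent'' via its $\P_i$-edge; (ii) the edge constraint of $\Pi_\Delta(k,\beta)$ forbids $\U_i$ from facing a coloring label $\ell(C)$, severely limiting how coloring-labeled neighbors appear around a level-$i$ node; and (iii) the high girth of $G$ (much larger than~$\beta$) makes each local neighborhood tree-like—in particular, siblings sharing a parent are non-adjacent, and the trees of the parent-pointer forest interact only sparsely. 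The resulting coloring uses $2k\cdot 2^{c\beta}\le 2\varepsilon\Delta/\log\Delta$ colors, strictly below $\chi(G)\ge\Delta/(\alpha\log\Delta)$ for sufficiently small $\varepsilon$, yielding the desired contradiction. Graphs of exactly the prescribed size are obtained by padding with isolated degree-matching trees, as in the proof of \Cref{thm:approach-bipartite}. The main obstacle is the execution of this third step: a naive extension would require $\Delta$ new colors per level and thus produce a $\Delta^\beta$-color palette, which is insufficient for the stated range of $k$, so the interplay of ingredients (i)--(iii) must be exploited carefully to bring the per-level blow-up down to a constant.
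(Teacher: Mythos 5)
Your proposal has a fatal structural gap in the passage from the lifted problem to the plain problem, which makes the third step an impossible task rather than merely a delicate one. In your first step you convert the hypothetical solution of $\Pi' = \mathrm{lift}_{\Delta,2}(\Pi_{\Delta'}(k,\beta))$ into a $V$-solution of $\Pi_{\Delta}(k,\beta)$ on all of $G$, and from then on you work only with that object. But for $\beta \ge 1$ a $V$-solution of $\Pi_{\Delta}(k,\beta)$ exists on \emph{every} graph of maximum degree $\Delta$ (e.g., by \Cref{lem:arbdef-to-pi} applied to any greedily computed arbdefective colored ruling set, such as an MIS), including the high-girth, low-independence graphs of \Cref{lem:graph-family}. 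Hence no contradiction can possibly be derived from its existence. Concretely, your step 3 asserts the implication ``$V$-solution of $\Pi_\Delta(k,\beta)$ with $G[S^\star]$ properly $2k$-colored $\Rightarrow$ proper coloring of all of $G$ with $2k\cdot 2^{c\beta}$ colors''; if that implication were true, it would prove $\chi(G) \le 2\eps\Delta/\log\Delta$ for every such $G$, contradicting $\chi(G) \ge \Delta/(\alpha\log\Delta)$. So the implication is false: there are valid $V$-solutions in which, say, the level-$1$ nodes induce a subgraph of $G$ with chromatic number $\Omega(\Delta/\log\Delta)$ (nothing in the constraints of $\Pi_\Delta(k,\beta)$ prevents $\U_1$--$\U_1$ edges from forming a dense induced subgraph), and no choice of ingredients (i)--(iii) can rescue a palette of size $2k\cdot 2^{c\beta}$. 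The information you discard by leaving the lifted world after step 1 --- namely the ``for all $\Delta'$-subsets of the $\Delta$ incident label-sets'' quantifier --- is exactly what the contradiction must come from.

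This is precisely why the paper's proof never produces a global solution of $\Pi_\Delta(k,\beta)$: \Cref{lem:acb-recursive} peels one pointer level at a time while \emph{staying inside the lifted problem} (passing from $\bar{\Pi}_{\Delta',x}(k,\beta)$ to $\bar{\Pi}_{\Delta',x+1}(2k,\beta-1)$), discards a constant fraction of the nodes at each level (the ``Type 1'' nodes that absorb too many $\P_\beta$-pointers --- the counting there uses $\Delta \ge 3\Delta'$, an hypothesis your proposal never invokes), and only at $\beta = 0$ invokes the Hall's-theorem conversion and the coloring argument of \Cref{lem:subgraph-coloring}, on a $1/4^\beta$ fraction of the nodes rather than on all of $G$. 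A secondary, independent issue: even your step 1 is not established --- when the Hall condition holds at a node $v$, the lift only guarantees that each $\Delta'$-subset of $v$'s $\Delta$ incident label-sets admits \emph{some} pointer configuration $\P_i \s \U_i^{\Delta'-1}$, with a level $i$ and a $\P_i$-carrying edge that may vary from subset to subset; assembling a single consistent configuration $\P_i \s \U_i^{\Delta-1}$ over all $\Delta$ edges does not follow. But the decisive problem is the first one: the reduction target of your step 1 is unconditionally satisfiable, so the argument cannot close.
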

In order to prove this lemma, we use exactly the graph family $\mathcal{G}$ given by \Cref{lem:graph-family}. Hence, in the following, let $G \in \mathcal{G}$. We need to prove that, on $G$, $\Pi'$ is not non-bipartitely solvable.

For this purpose, consider the following family of problems. The problem $\bar{\Pi}_{\Delta',x}(k,\beta)$ has the same edge constraint as $\mathrm{lift}_{\Delta,2}(\Pi_{\Delta'}(k,\beta))$, and the node constraint requires that each node satisfies the node constraint of $\mathrm{lift}_{\Delta,2}(\Pi_{\Delta' - y}(k,\beta))$, for some $y \in \{1,\ldots,x\}$, where different nodes may use different values of $y$.
We prove the following statement (recall the notion of $S$-solution defined in \Cref{def:s-solution}). 
\begin{lemma}\label{lem:acb-recursive}
    Let $S$ be a subset of nodes of $G$. Assume there exists an $S$-solution for $\bar{\Pi}_{\Delta',x}(k,\beta)$ satisfying that, for all edges $e = \{u,v\}$ such that $u \in S$ and $v \notin S$, the label assigned by $u$ on $e$ does not contain any label $\P_i$, for $1 \le i \le \beta$.
    Then, there exists a subset $S'$ of nodes of $G$ satisfying $|S'| \ge |S|/4$ such that there exists an $S'$-solution for $\bar{\Pi}_{\Delta',x+1}(2k,\beta-1)$ on $G$ satisfying that, for all edges $e = \{u,v\}$ such that $u \in S'$ and $v \notin S'$, the label-set assigned by $u$ on $e$ does not contain any label $\P_i$, for $1 \le i \le \beta -1$.
\end{lemma}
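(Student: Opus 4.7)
My approach is a peeling argument that identifies and removes the outermost ``distance layer'' encoded by the labels $\P_\beta$ and $\U_\beta$, then re-encodes the surviving structure of $S$ into a solution of $\bar{\Pi}_{\Delta', x+1}(2k, \beta-1)$. Say $v \in S$ is a \emph{level-$\beta$ node} if some label-set on an edge incident to $v$ contains $\P_\beta$ or $\U_\beta$, and a \emph{core node} otherwise. The edge constraint of $\bar{\Pi}$ forces the opposing label-set of any $\P_\beta$-containing label-set to exclude every $\P_i$ and every $\U_j$ with $j \geq \beta$; combined with the $S$-boundary hypothesis, this orients the $\P_\beta$-carrying edges from level-$\beta$ nodes of $S$ to ``lower-level'' nodes, exactly mirroring the ruling-set semantics behind the $\P_i / \U_i$ labels.

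I would first use this orientation to select $S' \subseteq S$ with $|S'| \geq |S|/4$. The candidate is the set of core nodes of $S$, possibly further restricted by deleting those core nodes that receive too many $\P_\beta$-pointers from their level-$\beta$ neighbors. Since each level-$\beta$ node contributes only one $\P_\beta$ label (corresponding to the single $\P_\beta$ slot in the configuration $\P_\beta \U_\beta^{\Delta-1}$ it witnesses), a standard averaging argument discards at most half of the core nodes to remove over-loaded receivers; together with an independent half-factor loss bounding the number of level-$\beta$ nodes inside $S$ itself, the product yields the claimed $1/4$.

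The second step is to relabel the edges incident to $S'$ using the alphabet of $\bar{\Pi}_{\Delta', x+1}(2k, \beta - 1)$. On edges whose original label-set contains neither $\P_\beta$ nor $\U_\beta$, I would reuse the label-set unchanged, since it already lies in the new alphabet once the color palette is extended from $\{1,\dots,k\}$ to $\{1,\dots,2k\}$, and it is still right-closed in the new edge diagram. On edges touching removed level-$\beta$ nodes, I would replace the label-set on the $S'$-side by the right-closure of a freshly chosen $\ell(C')$ with $C' \subseteq \{k+1,\dots,2k\}$ or of an appropriate $\U_j$ with $j < \beta$; the doubling $k \to 2k$ is used precisely to guarantee that these new color-subsets are disjoint from all surviving $\ell(C) \subseteq \{1,\dots,k\}$ on adjacent edges, so that the edge constraint continues to hold. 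The parameter bump $x \to x+1$ accommodates the extra $\X$ slot that core nodes may need because they have lost some neighbors to the removed set.

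The most delicate part, and where the real technical work lies, is verifying that after relabeling every $v \in S'$ still satisfies the lifted node constraint: for every choice of $\Delta'-y$ edges at $v$ there must be a consistent choice of labels in those label-sets forming a valid configuration of $\Pi_{\Delta'-y}(2k, \beta-1)$ for some admissible $y$. The witness $y$ may need to increment by exactly one relative to the original, which is why the statement only permits $x \to x+1$. Checking this for core nodes that now pick witnesses from the enlarged coloring family (\emph{promoted} configurations $\ell(C)^{\Delta-x'}\X^{x'}$ with $C$ drawn from $\{1,\dots,2k\}$) and for formerly level-$(<\beta)$ nodes that are now the outermost surviving layer (whose pointer witnesses shift down by one) is the main obstacle. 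The required boundary condition for $S'$ then follows automatically, because the systematic replacement rule never introduces $\P_i$ with $i \leq \beta-1$ on the $S'$-side of any boundary edge.
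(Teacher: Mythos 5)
There is a genuine gap in the construction of $S'$. You propose to keep only the ``core nodes'' of $S$, i.e., those with no incident label-set containing $\P_\beta$ or $\U_\beta$, and to bound the loss by arguing that each level-$\beta$ node contributes a single $\P_\beta$. Neither step survives in the lifted setting. First, a node of $S$ can have \emph{many} incident edges whose label-sets contain $\P_\beta$ — the node constraint of $\bar\Pi$ only asks that for every choice of $\Delta'-y$ edges \emph{some} selection from the label-sets is a valid configuration of $\Pi_{\Delta'-y}(k,\beta)$, so nothing pins a node to ``one $\P_\beta$ slot''. Second, and more importantly, by right-closedness $\U_\beta$ typically sits in most label-sets, so the set of nodes untouched by $\P_\beta$ and $\U_\beta$ can easily be empty; discarding every node that sees these labels cannot yield $|S'|\ge|S|/4$. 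The paper instead \emph{keeps and relabels} almost all nodes that see $\P_\beta$ or $\U_\beta$: it discards only those nodes all of whose incident label-sets contain $\U_\beta$ and which in addition have more than $\Delta-\Delta'$ incident label-sets containing $\P_\beta$. Since the edge constraint forbids $\P_\beta$ on both sides of an edge, at most $|S|\Delta/2$ half-edges inside $S$ carry $\P_\beta$, and with $\Delta\ge 3\Delta'$ this caps the discarded nodes at $3|S|/4$ — this is where the $1/4$ really comes from.

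The remaining (kept) nodes are handled by two separate arguments that your sketch defers or misplaces. Nodes all of whose edges carry $\U_\beta$ but with few $\P_\beta$-edges are completely recolored with the fresh palette $\{k+1,\dots,2k\}$ — so the doubling of $k$ is spent on \emph{kept} nodes, not on edges toward removed nodes, and correctness uses that two such nodes can only meet along edges where both sides lack $\P_\beta$. Nodes having some incident label-set without $\U_\beta$ are handled by a case analysis on that special edge $e$: either the node still satisfies the constraint with the same effective degree after deleting $\P_\beta,\U_\beta$, or one must sacrifice one edge and pass from $\Delta'-y$ to $\Delta'-y-1$; this — not the loss of neighbors to the removed set (which does not affect the node constraint of the lift at all) — is what forces $x\to x+1$. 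As written, your proposal identifies the right three parameter changes but attributes each to the wrong mechanism, and the selection of $S'$ as stated would fail.
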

We now show that \Cref{lem:acb-recursive} implies \Cref{lem:acb-lift-no-sol}. Later, we will prove \Cref{lem:acb-recursive}.
Assume, for a contradiction, that $\mathrm{lift}_{\Delta,2}(\Pi_{\Delta'}(k,\beta))$ is solvable. Note that $\mathrm{lift}_{\Delta,2}(\Pi_{\Delta'}(k,\beta)) = \bar{\Pi}_{\Delta',0}(k,\beta)$.
By applying \Cref{lem:acb-recursive} recursively for $\beta$ times, we obtain that if there exists a $V$-solution on $G = (V,E)$ for $\mathrm{lift}_{\Delta,2}(\Pi_{\Delta'}(k,\beta))$, then there exists a subset $S$ of nodes of size $n / 4^\beta$ satisfying that an $S$-solution for $\bar{\Pi}_{\Delta',\beta}(2^\beta k,0)$ exists on $G$. 
Note that, if a node has a labeling that is valid for $\mathrm{lift}_{\Delta,2}(\Pi_{\Delta'-x}(2^\beta k,0))$, then it has a valid labeling also for $\mathrm{lift}_{\Delta,2}(\Pi_{\Delta' - x'}(2^\beta,0))$, for any $x' > x$. Hence $\bar{\Pi}_{\Delta',\beta}(2^\beta k,0)$ is equivalent to $\mathrm{lift}_{\Delta,2}(\Pi_{\Delta'-\beta}(2^\beta k,0))$, which in turn is equivalent to $\mathrm{lift}_{\Delta,2}(\Pi_{\Delta'-\beta}(2^\beta k))$, that is, the lift of the problem defined in \Cref{def:arbdef}.  Recall that $k$ is defined as $k := \lfloor \min\{\Delta', \eps \Delta / \log \Delta \}/ 2^{c \cdot \beta} \rfloor$.
Observe that
\[
2^\beta k = 2^\beta \cdot \lfloor \min\{\Delta', \eps \Delta / \log \Delta \}/ 2^{c \cdot \beta}\rfloor \le 2^\beta \cdot \Delta' / 2^{c \cdot \beta} = \frac{\Delta'}{2^{c\beta - \beta}},
\]
that, for large-enough $c$, by the assumption that $\beta < \Delta'$, is at most $\Delta' - \beta$.
Hence, by applying \Cref{lem:subgraph-coloring}, we obtain that it is possible to color a fraction $1 / 4^\beta$ of the the nodes of $G$ with $2 \cdot 2^\beta \cdot k= 2^{\beta+1} \cdot k$ colors.  Thus, by picking $c$ large enough, if there exists a solution for $\mathrm{lift}_{\Delta,2}(\Pi_{\Delta'}(k,\beta))$ on $G$, then it is possible to color a fraction $1 / 4^\beta$ of the the nodes of $G$ with the following amount of colors:
\[
\min\{\Delta', \eps \Delta / \log \Delta \} / 2^{c' \beta} \le \frac{\eps \Delta}{ 2^{c' \beta} \log \Delta},
\]
for any chosen constant $c'$. By assumption, the graph $G$ satisfies that the largest independent set in $G$ has size $\gamma n \frac{\log \Delta}{\Delta}$ for some constant $\gamma > 0$.
Observe that this property implies that, if we consider the subgraph $G_S$ induced by an arbitrary fraction $1 / 4^\beta$ of the nodes, the chromatic number of $G_S$ is lower bounded by:
\[
\frac{n}{4^\beta} \cdot \frac{\Delta}{\gamma \cdot n \cdot \log \Delta} = \frac{\Delta}{\gamma \cdot 2^{2\beta} \cdot \log \Delta}.
\]
Observe that, by picking $\eps$ small enough, and $c'$ large enough, we obtain a contradiction. Hence, there is no solution for $\mathrm{lift}_{\Delta,2}(\Pi_{\Delta'}(k,\beta))$ on $G$, implying \Cref{lem:acb-lift-no-sol}. In the rest of the section, we prove \Cref{lem:acb-recursive}.

\begin{proof}
    Assume there is an $S$-solution to $\bar{\Pi}_{\Delta',x}(k, \beta)$ satisfying the requirements of \Cref{lem:acb-recursive}. The goal is to modify the label-sets of the $S$-solution in order to get rid of the labels $\P_\beta$ and $\U_\beta$.
    We consider all nodes $u$ that have at least one incident edge with a label-set that contains $\P_\beta$ or $\U_\beta$. 
    Such nodes could be of three possible types.
    \begin{itemize}
        \item \textbf{Type 1}: all edges incident to $u$ have $\U_\beta$ in their label-set, and the number of edges incident to $u$ that have $\P_\beta$ in their label-set is greater than $\Delta-\Delta+1$. We will prove that there are at most $3|S|/4$ of these nodes, and we will define $S'$ as the nodes of $S$ without these ones.
        \item \textbf{Type 2}: all edges incident to $u$ have $\U_\beta$ in their label-set, and the number of incident edges having $\P_\beta$ in their label-sets is at most $\Delta-\Delta'$. We will prove that, at the cost of increasing the number of colors by $k$, it is possible to assign, to the edges incident to these nodes, label-sets not containing $\U_i$ nor $\P_i$ for any $i$, such that the constraints of the problem are satisfied. In other words, we can assign label-sets containing only sets of colors and $\X$, in such a way that the constraints are satisfied on the subset of nodes $S'$ that remain, that is, on all the nodes that are not of type 1.
        \item \textbf{Type 3}: there is an edge $e$ incident to $u$ whose label-set does not contain $\U_\beta$. We will prove that, 
        if $u$ satisfies the node constraint of $\mathrm{lift}(\Pi_{\Delta' - y}(k,\beta))$ for some $y \in \{1,\ldots,x\}$, then it also satisfies the node constraint of $\mathrm{lift}(\Pi_{\Delta' - y - 1}(k,\beta - 1))$.
        In other words, at the cost of decreasing by $1$ the considered degree, we can always pick a labeling for $u$ that does not use $\P_\beta$ or $\U_\beta$. Hence, we can solve $\bar{\Pi}_{\Delta',x+1}(k, \beta - 1)$ on $u$.
    \end{itemize}
    We do not modify the solution of the other nodes. In fact, the other nodes already satisfy the node constraint of $\bar{\Pi}_{\Delta',x}(k, \beta-1)$, and all the node configurations allowed by $\bar{\Pi}_{\Delta',x}(k, \beta-1)$ are also allowed by $\bar{\Pi}_{\Delta',x+1}(k, \beta-1)$.

    \paragraph{Type 1 nodes.} Let $u$ be a node of type 1. 
    By assumption, for all edges $e=\{u,v\}$, if $u \in S$ and $v\notin S$, the label-set assigned by $u$ to $e$ does not contain the label $\P_i$ for any $i$. In particular, this holds for $\P_\beta$. Hence, all the edges incident to $u$ whose label-sets contain $\P_\beta$ have the other endpoint in $S$. 
    Recall that the edge constraint of $\bar{\Pi}_{\Delta',x}(k, \beta)$ requires that each pair of label-sets assigned to an edge satisfies that, for every possible choice of labels over the pairs, the obtained pair is in the edge constraint of $\Pi_{\Delta'(k,\beta)}$. Thus, for each edge of $S$ it holds that, if the label-set assigned to one half-edge contains $\P_\beta$, then the label-set assigned to the other half-edge does not.
    Hence, the number of half-edges with the label $\P_\beta$ in their label-set in the subgraph of $G$ induced by $S$ is at most $|S|\Delta/2$, as there are at most $|S|\Delta$ half-edges between nodes in $S$, and the two half-edges of the same edge cannot both have $\P_\beta$ in their label-set. Since $u$ is a type 1 node, by assumption it has at least $\Delta-\Delta'$ edges with label-sets containing $\P_\beta$, and these edges must be part of the subgraph induced by $S$.  It follows that type 1 nodes are at most $\frac{|S|\Delta}{2}\cdot \frac{1}{\Delta-\Delta'}$. Recall that, by assumption, $\Delta \ge 3 \Delta'$. Thus, type 1 nodes are at most $3|S|/4$. The set $S'$ is defined as the set of nodes that remain in $S$ after removing type 1 nodes. Observe that, since all edges incident to $u$ contain $\U_\beta$, and since $\U_\beta$ is not compatible with $\P_i$ for any $i$, we get that there is no edge $e = \{u,v\}$ such that $u \in S'$ and $v \notin S'$, satisfying that the label-set assigned by $u$ on $e$ contains $\P_i$ for some $i$, as required.

    \paragraph{Type 2 nodes.} Let $u$ be a node of type 2. We provide an entirely new assignment of label-sets for the half-edges incident to $u$, that uses label-sets containing only subsets of $\{k+1,\ldots,2k\}$, and $\X$. We split the half-edges of $u$ into two groups: a half-edge is a type-$U$ edge if its label-set does not contain $\P_\beta$, while it is a type-$P$ edge if its label-set contains $\P_\beta$. An edge is a $U$-$U$ edge if it is composed of two type-$U$ half-edges of nodes of type 2. 
    Observe that nodes of type 2 can only be neighbors via $U$-$U$ edges, since all half-edges of type 2 nodes contain $\U_\beta$, which is not compatible with $\P_\beta$. We thus get that, if we assign to all half-edges of $u$ label-sets containing only subsets of $\{k+1,\ldots,2k\}$, and $\X$, the edge constraints are trivially satisfied on $P$-edges, and on $U$-edges that are not part of $U$-$U$ edges, since these labels are compatible with all the labels that are not subsets of $\{k+1,\ldots,2k\}$, that is, all labels that could possibly be present on the other side of such edges.

    By the definition of type 2 nodes, there are at least $\Delta'$ $U$-edges incident to $u$. Note that, by right-closedness, if a label-set does not contain $\P_\beta$, then it cannot contain any $\P_i$ for any $i < \beta$ as well, due to the fact that if some configuration $\{\ell,\P_i\}$ is allowed on the edges, then $\{\ell,\P_\beta\}$ is also allowed. We thus get that $U$-edges do not contain any label $\P_i$ for any $i$.
    Recall that $u$ satisfies the node constraint of $\mathrm{lift}(\Pi_{\Delta' - y}(k,\beta))$, for some $y \in \{1,\ldots,x\}$. Combining this with the fact that $U$-edges do not contain any $\P_i$, we get that, for any choice of $\Delta'-y$ edges over the $\Delta'$ $U$-edges (which is at least one choice), there exists a choice of labels from the label-sets of the form $\ell(C)^{\Delta'-y - (|C|-1)} \s \X^{|C|-1}$, where $C$ is a subset of $\{1,\ldots,k\}$. For each $U$-edge of $u$ with assigned label-set $L$, we assign the new label-set $\{\ell(\{c + k \mid c \in C\}) \mid \ell(C) \in L\} \cup \{\X\}$, or in other words, we discard $\U_i$ labels, and we shift each color by $k$. To all other edges (the $P$-edges) we assign the same label-set, which is the union of all the label-sets assigned to the $U$-edges. We obtain that the constraint of $\mathrm{lift}(\Pi_{\Delta' - y}(2k,\beta))$ is satisfied on $u$. Moreover, since before shifting the colors by $k$ the edge constraint was satisfied, $U$-$U$ edges still satisfy the edge constraint (because each color is shifted by the same amount). As already discussed, the edge constraint are still satisfied on all the other edges.

    \paragraph{Type 3 nodes.} Let $u$ be a node of type 3, and let $e$ be a half-edge incident to $u$ whose label-set does not contain $\U_\beta$.
    By assumption, $u$ satisfies the node constraint of $\mathrm{lift}(\Pi_{\Delta' - y}(k,\beta))$, for some $y \in \{1,\ldots,x\}$.
    We show that, either we can just discard all $\P_\beta$ and $\U_\beta$ to satisfy $\mathrm{lift}(\Pi_{\Delta' - y}(k,\beta-1))$, or, for any choice of $\Delta' - y - 1$ half-edges incident to $u$, there exists a choice over the label-sets assigned to these half-edges that is in the node constraint of $\Pi_{\Delta' - y - 1}(k,\beta - 1)$, and hence we can discard all $\P_\beta$ and $\U_\beta$ to satisfy $\mathrm{lift}(\Pi_{\Delta' - y - 1}(k,\beta-1))$. Let $M$ be an arbitrary set of $\Delta' - y - 1$ half-edges.
    
    Suppose $e \notin M$. Let $M' = M \cup \{e\}$. We start by showing that, over the label-sets assigned to the half-edges $M'$ there exists a configuration that we can pick that does not use $\P_\beta$ and $\U_\beta$, and that if it uses $\P_i$ for some $i < \beta$, then $\P_i$ is not on $e$. Observe that, if a configuration $\{\P_i,\ell\}$ is allowed by the edge constraint of $\Pi_{\Delta'(k,\beta)}$, then the configuration $\{\U_\beta,\ell\}$ is also allowed. Thus, by right-closedness, since the label-set of $e$ does not contain $\U_\beta$, the label-set of $e$ does also not contain any label $\P_i$ for any $i$. Hence, over the half-edges $M'$, we can pick a configuration that is valid for $\Pi_{\Delta' - y}(k,\beta)$ that does not use $\P_i$ on $e$ (for any $i$), and that does not use $\U_\beta$ at all (since it is not present on $e$). We consider two cases separately.
    \begin{itemize}
        \item The configuration for $M'$ is of the form $\P_i \s \U_i^{\Delta' - y - 1}$, where $i < \beta$, and where $\P_i$ is not on $e$. Then, on $M$, we can pick the configuration $\P_i \s \U_i^{\Delta' - y - 2}$.
        \item The configuration for $M'$ is of the form $\ell(C)^{\Delta'-y - (|C|-1)} \s \X^{|C|-1}$. Since, by right-closedness, all label-sets contain $\X$, we get that on $M$ we can pick the configuration $\ell(C)^{\Delta'-y - (|C|-1) - 1} \s \X^{|C|-1}$.
    \end{itemize}
    Suppose now that $e \in M$. We consider two cases separately.
    \begin{itemize}
        \item There exists an edge $e'\notin M$ such that, on $M' = M \cup \{e'\}$, we can pick a configuration of the form $\ell(C)^{\Delta'-y - (|C|-1)} \s \X^{|C|-1}$, or of the form $\P_i \s \U_i^{\Delta' - y - 1}$ (which must satisfy $i < \beta$, since $e$ does not contain $\U_\beta$ nor $\P_\beta$) in which $\P_i$ is not picked from $e'$. Similarly as before, we obtain that we can pick a configuration for $M$.
        \item For all edges $e'\notin M$, on $M' = M \cup \{e'\}$, the only valid configurations that can be picked is $\P_i \s \U_i^{\Delta' - y - 1}$ for some $i$ (which must satisfy $i < \beta$, since $e$ does not contain $\U_\beta$), where $\P_i$ is picked from the label-set of $e'$. We get that all edges that are not in $M$ (which are $\Delta - (\Delta' - y - 1)$) contain at least one $\P_i$, for $i < \beta$, which, by right-closedness, it implies that these edges contain $\U_j$ for all $j$. Consider an arbitrary choice of $\Delta'-y$ edges of $u$: we get that we can pick $\P_i \s \U_i^{\Delta' - y - 1}$, for some $i < \beta$. Hence, if we discard all $\P_\beta$ and $\U_\beta$, node $u$ satisfies the node constraint of $\Pi_{\Delta' - y}(k,\beta-1)$.
    \end{itemize}

\end{proof}

\section{Conclusions and Open Questions}
In this work, we have shown that essentially all lower bounds for the LOCAL model proved via round elimination hold in the Supported LOCAL model as well. However, there are few exceptions, that we leave as open questions.

    Our lower bounds for arbdefective colored ruling sets are only tight for constant values of $\beta$, and we leave as an open question to determine whether this can be improved.

    In \cite{mm-hypergraphs}, interesting lower bounds for the LOCAL model have been shown for problems on hypergraphs. These problems have not been tackled in our work, and we leave as an open question to determine their complexity in the Supported LOCAL model.

\urlstyle{same}
\bibliographystyle{alpha}
\bibliography{biblio}

\newcommand{\etalchar}[1]{$^{#1}$}
\begin{thebibliography}{GHK{\etalchar{+}}22}

\bibitem[AAPR23]{supportedopodis}
Akanksha Agrawal, John Augustine, David Peleg, and Srikkanth Ramachandran.
\newblock Local recurrent problems in the {SUPPORTED} model.
\newblock In Alysson Bessani, Xavier D{\'{e}}fago, Junya Nakamura, Koichi Wada,
  and Yukiko Yamauchi, editors, {\em 27th International Conference on
  Principles of Distributed Systems, {OPODIS} 2023, December 6-8, 2023, Tokyo,
  Japan}, volume 286 of {\em LIPIcs}, pages 22:1--22:19. Schloss Dagstuhl -
  Leibniz-Zentrum f{\"{u}}r Informatik, 2023.

\bibitem[Alo10]{Alon10}
Noga Alon.
\newblock On constant time approximation of parameters of bounded degree
  graphs.
\newblock In Oded Goldreich, editor, {\em Property Testing - Current Research
  and Surveys}, volume 6390 of {\em Lecture Notes in Computer Science}, pages
  234--239. Springer, 2010.

\bibitem[Bar16]{barenboim16sublinear}
Leonid Barenboim.
\newblock {Deterministic ($\Delta$+1)-Coloring in Sublinear (in $\Delta$) Time
  in Static, Dynamic, and Faulty Networks}.
\newblock {\em Journal of {ACM}}, 63(5):1--22, 2016.

\bibitem[BBH{\etalchar{+}}21]{Balliu2019}
Alkida Balliu, Sebastian Brandt, Juho Hirvonen, Dennis Olivetti, Mika{\"{e}}l
  Rabie, and Jukka Suomela.
\newblock Lower bounds for maximal matchings and maximal independent sets.
\newblock {\em J. {ACM}}, 68(5):39:1--39:30, 2021.

\bibitem[BBKO21]{outdegree-ds}
Alkida Balliu, Sebastian Brandt, Fabian Kuhn, and Dennis Olivetti.
\newblock Improved distributed lower bounds for {MIS} and bounded (out-)degree
  dominating sets in trees.
\newblock In Avery Miller, Keren Censor{-}Hillel, and Janne~H. Korhonen,
  editors, {\em {PODC} '21: {ACM} Symposium on Principles of Distributed
  Computing, Virtual Event, Italy, July 26-30, 2021}, pages 283--293. {ACM},
  2021.

\bibitem[BBKO22]{Balliu0KO22}
Alkida Balliu, Sebastian Brandt, Fabian Kuhn, and Dennis Olivetti.
\newblock Distributed {$\Delta$}-coloring plays hide-and-seek.
\newblock In Stefano Leonardi and Anupam Gupta, editors, {\em {STOC} '22: 54th
  Annual {ACM} {SIGACT} Symposium on Theory of Computing, Rome, Italy, June 20
  - 24, 2022}, pages 464--477. {ACM}, 2022.

\bibitem[BBKO23]{mm-hypergraphs}
Alkida Balliu, Sebastian Brandt, Fabian Kuhn, and Dennis Olivetti.
\newblock Distributed maximal matching and maximal independent set on
  hypergraphs.
\newblock In {\em Proceedings of the 2023 {ACM-SIAM} Symposium on Discrete
  Algorithms, {SODA} 2023, Florence, Italy, January 22-25, 2023}, pages
  2632--2676. {SIAM}, 2023.

\bibitem[BBO22]{rs-siam}
Alkida Balliu, Sebastian Brandt, and Dennis Olivetti.
\newblock Distributed lower bounds for ruling sets.
\newblock {\em {SIAM} J. Comput.}, 51(1):70--115, 2022.

\bibitem[BCG{\etalchar{+}}22]{brandt2021local}
Sebastian Brandt, Yi{-}Jun Chang, Jan Greb{\'{\i}}k, Christoph Grunau,
  V{\'{a}}clav Rozhon, and Zolt{\'{a}}n Vidny{\'{a}}nszky.
\newblock Local problems on trees from the perspectives of distributed
  algorithms, finitary factors, and descriptive combinatorics.
\newblock In {\em 13th Innovations in Theoretical Computer Science Conference,
  {ITCS}}, pages 29:1--29:26, 2022.

\bibitem[BFH{\etalchar{+}}16]{brandtlower16}
Sebastian Brandt, Orr Fischer, Juho Hirvonen, Barbara Keller, Tuomo
  Lempi{\"{a}}inen, Joel Rybicki, Jukka Suomela, and Jara Uitto.
\newblock A lower bound for the distributed lov{\'{a}}sz local lemma.
\newblock In Daniel Wichs and Yishay Mansour, editors, {\em Proceedings of the
  48th Annual {ACM} {SIGACT} Symposium on Theory of Computing, {STOC} 2016,
  Cambridge, MA, USA, June 18-21, 2016}, pages 479--488. {ACM}, 2016.

\bibitem[BKK{\etalchar{+}}23]{BalliuKKLOPPR0S23}
Alkida Balliu, Janne~H. Korhonen, Fabian Kuhn, Henrik Lievonen, Dennis
  Olivetti, Shreyas Pai, Ami Paz, Joel Rybicki, Stefan Schmid, Jan
  Studen{\'{y}}, Jukka Suomela, and Jara Uitto.
\newblock Sinkless orientation made simple.
\newblock In Telikepalli Kavitha and Kurt Mehlhorn, editors, {\em 2023
  Symposium on Simplicity in Algorithms, {SOSA} 2023, Florence, Italy, January
  23-25, 2023}, pages 175--191. {SIAM}, 2023.

\bibitem[BO20]{trulytight}
Sebastian Brandt and Dennis Olivetti.
\newblock Truly tight-in-{$\Delta$} bounds for bipartite maximal matching and
  variants.
\newblock In {\em Proc.\ 39th {ACM} Symp.\ on Principles of Distributed
  Computing (PODC)}, pages 69--78, 2020.

\bibitem[Bra19]{brandtautomatic19}
Sebastian Brandt.
\newblock An automatic speedup theorem for distributed problems.
\newblock In Peter Robinson and Faith Ellen, editors, {\em Proceedings of the
  2019 {ACM} Symposium on Principles of Distributed Computing, {PODC} 2019,
  Toronto, ON, Canada, July 29 - August 2, 2019}, pages 379--388. {ACM}, 2019.

\bibitem[CKP19]{ChangKP19}
Yi{-}Jun Chang, Tsvi Kopelowitz, and Seth Pettie.
\newblock An exponential separation between randomized and deterministic
  complexity in the {LOCAL} model.
\newblock {\em {SIAM} J. Comput.}, 48(1):122--143, 2019.

\bibitem[DDL{\etalchar{+}}23]{derandomization}
Sameep Dahal, Francesco D'Amore, Henrik Lievonen, Timoth{\'{e}} Picavet, and
  Jukka Suomela.
\newblock Brief announcement: Distributed derandomization revisited.
\newblock In {\em 37th International Symposium on Distributed Computing, {DISC}
  2023, October 10-12, 2023, L'Aquila, Italy}, volume 281 of {\em LIPIcs},
  pages 40:1--40:5, 2023.

\bibitem[FHK16]{fraigniaud16local}
Pierre Fraigniaud, Marc Heinrich, and Adrian Kosowski.
\newblock Local conflict coloring.
\newblock In {\em Proc.\ 57th IEEE Symp.\ on Foundations of Computer Science
  (FOCS)}, pages 625--634, 2016.

\bibitem[FHSS19]{foersterpower19}
Klaus{-}Tycho Foerster, Juho Hirvonen, Stefan Schmid, and Jukka Suomela.
\newblock On the power of preprocessing in decentralized network optimization.
\newblock In {\em 2019 {IEEE} Conference on Computer Communications, {INFOCOM}
  2019, Paris, France, April 29 - May 2, 2019}, pages 1450--1458. {IEEE}, 2019.

\bibitem[FKRS19]{foersterpreprocessing19}
Klaus{-}Tycho Foerster, Janne~H. Korhonen, Joel Rybicki, and Stefan Schmid.
\newblock Does preprocessing help under congestion?
\newblock In Peter Robinson and Faith Ellen, editors, {\em Proceedings of the
  2019 {ACM} Symposium on Principles of Distributed Computing, {PODC} 2019,
  Toronto, ON, Canada, July 29 - August 2, 2019}, pages 259--261. {ACM}, 2019.

\bibitem[GHK{\etalchar{+}}22]{guptasparse22}
Chetan Gupta, Juho Hirvonen, Janne~H. Korhonen, Jan Studen{\'{y}}, and Jukka
  Suomela.
\newblock Sparse matrix multiplication in the low-bandwidth model.
\newblock In Kunal Agrawal and I{-}Ting~Angelina Lee, editors, {\em {SPAA} '22:
  34th {ACM} Symposium on Parallelism in Algorithms and Architectures,
  Philadelphia, PA, USA, July 11 - 14, 2022}, pages 435--444. {ACM}, 2022.

\bibitem[Hal35]{hall}
P.~Hall.
\newblock On representatives of subsets.
\newblock {\em Journal of the London Mathematical Society}, s1-10(1):26--30,
  1935.

\bibitem[HWZ21]{HaeuplerWZ21}
Bernhard Haeupler, David Wajc, and Goran Zuzic.
\newblock Universally-optimal distributed algorithms for known topologies.
\newblock In Samir Khuller and Virginia~Vassilevska Williams, editors, {\em
  {STOC} '21: 53rd Annual {ACM} {SIGACT} Symposium on Theory of Computing,
  Virtual Event, Italy, June 21-25, 2021}, pages 1166--1179. {ACM}, 2021.

\bibitem[MT20]{MausTonoyan20}
Yannic Maus and Tigran Tonoyan.
\newblock Local conflict coloring revisited: Linial for lists.
\newblock In {\em 34th International Symposium on Distributed Computing, {DISC}
  2020, October 12-16, 2020, Virtual Conference}, volume 179 of {\em LIPIcs},
  pages 16:1--16:18. Schloss Dagstuhl - Leibniz-Zentrum f{\"{u}}r Informatik,
  2020.

\bibitem[SS13]{schmidexploiting13}
Stefan Schmid and Jukka Suomela.
\newblock Exploiting locality in distributed {SDN} control.
\newblock In Nate Foster and Rob Sherwood, editors, {\em Proceedings of the
  Second {ACM} {SIGCOMM} Workshop on Hot Topics in Software Defined Networking,
  HotSDN 2013, The Chinese University of Hong Kong, Hong Kong, China, Friday,
  August 16, 2013}, pages 121--126. {ACM}, 2013.

\end{thebibliography}

\appendix

\clearpage

\section{Example of Problems in the Black-White Formalism}\label{sec:example-black-white}
The maximal matching problem on bipartite $2$-colored graphs can be encoded in the black-white formalism by the following constraints (see \Cref{fig:mm-black-white} for an example of assignment of labels).

\begin{equation*}
	\begin{aligned}
		\begin{aligned}
			\C_W\text{:}\\
	&\M \s \O^{\Delta-1}\\
        & \P^{\Delta}
 		\end{aligned}
   \qquad
		\begin{aligned}
			\C_B\text{:}\\
	&\M \s [\O\P]^{\Delta-1}\\
        & \O^{\Delta}
 		\end{aligned}
	\end{aligned}
\end{equation*}

The label $\M$ on an edge $e$ indicates that $e$ is in the matching, while the label $\O$ on $e$ indicates that $e$ is \emph{not} in the matching ($\M$ stands for ``matched'' while $\O$ stands for ``other''). First of all, notice that the black and white configurations are such that, for each (black or white) node $v$, $\M$ is outputted on at most one incident edge of $v$. This guarantees that there is never more than one edge incident to a node that is in the matching. We now go through the white and black configurations separately, and show that the requirements of the maximal matching problem are satisfied on both white and black nodes.

A white node $v$ is matched if it outputs the configuration $\M \s \O^{\Delta-1}$, which indicates that exactly one incident edge to $v$ is in the matching while the others are not. Then, a white node $v$ is unmatched if it outputs the configuration $\P^{\Delta}$. By the definition of the maximal matching problem, we want to satisfy that, if a node is not matched, then all its neighbors are already matched with someone else (otherwise the matching would not be maximal). Let $\{u,v\}$ be any edge labeled $\P$ where $u$ is a black node and $v$ is a white node. In order to satisfy maximality for the white nodes, the black node $u$ must be matched. In fact, the only black configuration that contains the label $\P$ is the one that outputs $\M$ on exactly one incident edge, and thus maximality is satisfied on the white nodes. 

A black node $u$ is matched if it outputs a configuration in $\M \s [\O\P]^{\Delta-1}$, guaranteeing that in this case exactly one incident edge of $u$ is in the matching. Moreover, this configuration says that the edges incident to $u$ that are not in the matching can be labeled either $\P$ or $\O$. This indicates that node $u$ can accept pointer-labels $\P$ from white nodes, but it can also be neighbor to white nodes that are matched with some other node different from $u$, and hence $u$ accepts label $\U$ as well. 
Then, a black node $u$ that is not matched outputs the configuration $\O^{\Delta}$, and this satisfies the maximality constraint on black nodes: indeed, all edges $\{u,v\}$ having label $\O$ indicate that the white node $v$ is a matched node, since the only possible white configuration containing $\O$ is $\M \s \O^{\Delta-1}$.

Therefore, the provided description in the black-white formalism satisfies, at all nodes, the packing and covering constraints of the maximal matching problem.

 \begin{figure}
        \centering
        \includegraphics[width = 0.5\textwidth]{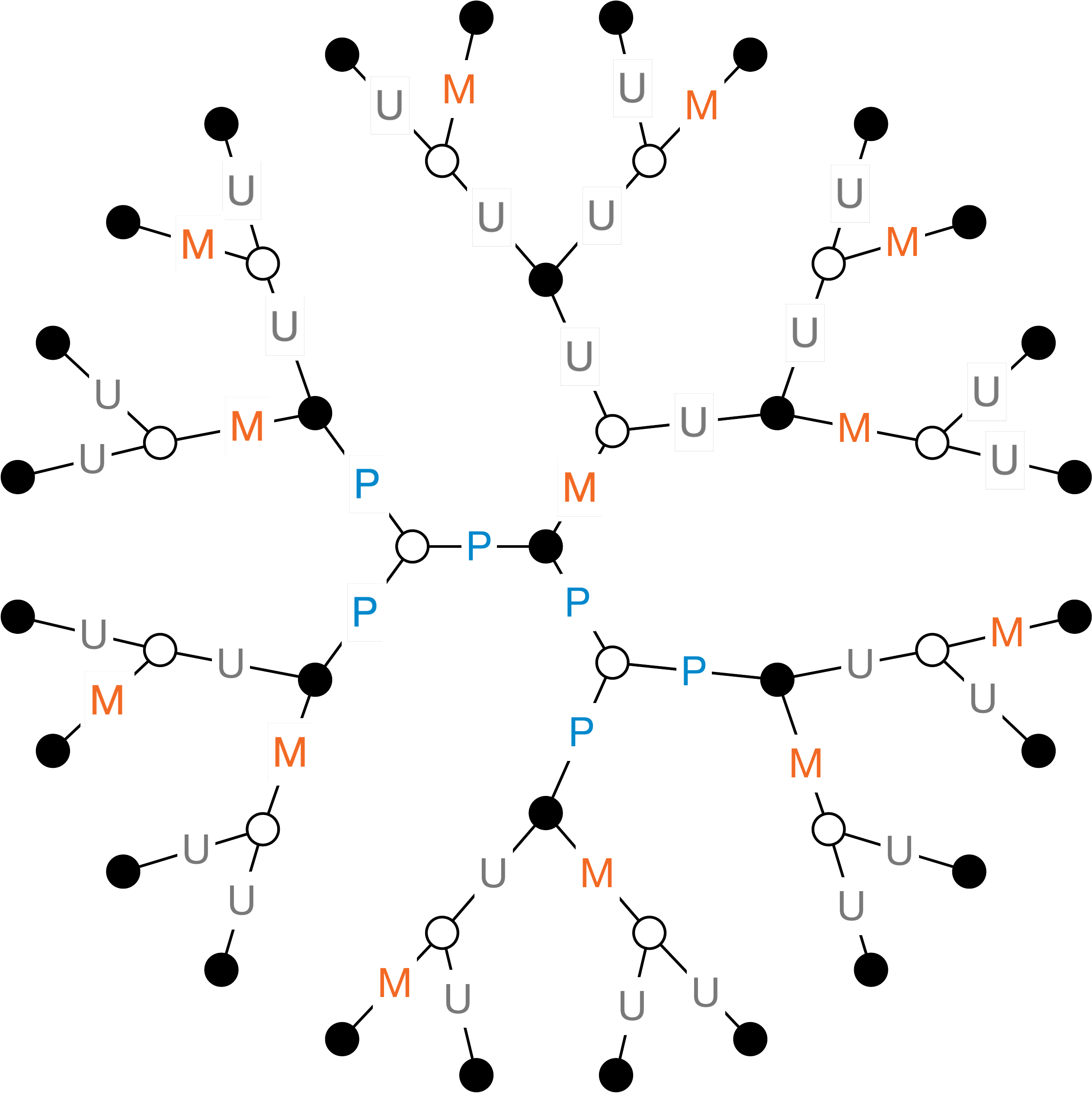}
        \caption{An example of a solution to the maximal matching problem in the black-white formalism.}
        \label{fig:mm-black-white}
    \end{figure}

The black diagram of the problem contains only the directed edge $(\P,\O)$. The reason why this edge is in the diagram is that, for any black configuration containing $\P$, we can replace an arbitrary amount of $\P$ with $\O$ and still obtain a configuration in the black constraint. We get that $\O$ is at least as strong as $\P$. Observe that no other pair of labels satisfies a similar property.

\section{Round Elimination in the Supported LOCAL Model}\label{sec:re}
In this section, we show that the round elimination technique works in the Supported LOCAL model as well. We start by defining two functions, $\re$ and $\rere$, that take as input a problem $\Pi$ in the black-white formalism, and output a problem $\Pi'$ in the black-white formalism. Then, the function $\mathrm{RE(\Pi)}$ used in the rest of the paper is defined as $\mathrm{RE}(\Pi) := \rere(\re(\Pi))$. Consider a problem $\Pi=(\Sigma, \C_W, \C_B)$ where: $\Sigma$ denotes the set of possible labels of $\Pi$; $\C_W$ denotes the white-node constraint, and let $d_W$ be the size of the multisets in it; $\C_B$ denotes the black-node constraint, and let let $d_B$ be the size of the multisets in it.
The problem $\Pi'=(\Sigma', C'_W, \C'_B)=\re(\Pi)$ is defined as follows.
\begin{itemize}
    \item Let us first define $\C'_B$. Let $S$ be the maximal set satisfying the following two properties: (i) for all $\{\L_1, \ldots, \L_{d_B}\} \in S$ it holds that, for all $i$, $\L_i \in 2^{\Sigma} \setminus \{\emptyset\}$; (ii) for all $(\ell_1, \ldots, \ell_{d_B}) \in \L_1 \times \ldots \times \L_{d_B}$ it holds that $\{\ell_1, \ldots, \ell_{d_B}\}$ is in $\C_B$.  Then, $\C'_B$ is defined as the set $S$ where we remove all configurations $\{\L_1, \ldots, \L_{d_B}\}$ such that there exists another configuration $\{\bar{\L}_1, \ldots, \bar{\L}_{d_B}\} \in S$ and a permutation $\phi$, such that $\L_i \subseteq \bar{\L}_{\phi(i)}$ for all $i$, and there exists at least one $i$ such that the inclusion is strict. The removed configurations are called \emph{non-maximal}, while the remaining ones are called \emph{maximal}.

    \item $\Sigma' \subseteq 2^{\Sigma}$ contains all the sets that appear at least once in some configuration in $C_B$.

    \item $C_W$ is defined as all configurations $\{\L_1,\ldots,\L_{d_W}\}$ such that the following two properties hold: (i) for all $i$, $\L_i\in\Sigma'$; (ii) there exists $(\ell_1,\ldots,\ell_{d_W}) \in \L_1 \times \ldots \times L_{d_W}$ such that $\{\ell_1,\ldots,\ell_{d_W}\}$ is in $C_W$.
\end{itemize}
We now define $\rere(\Pi)$.
Let $(\Sigma', C'_B, \C'_W) = \re((\Sigma,C_B,C_W))$. The problem $\rere(\Pi)$ is defined as $(\Sigma', C'_W, \C'_B)$. In other words, $\rere$ is defined similarly as in $\re$, but the role of the black and white constraints are reversed.

We start by proving a lemma that states how the complexity of $\Pi$ is related, in the Supported LOCAL model, with the complexities of $\re(\Pi)$ and $\rere(\Pi)$.
\begin{lemma}\label{lem:speedup}
    Let $\Pi = (\Sigma,C_W,C_B)$ be a problem in the black-white formalism, where $\Delta'$ denotes the size of the multisets in $C_W$ and $r'$ the size of the multisets in $C_B$. 
    Let $G$ be a bipartite $2$-colored support graph, and let $n$ denote the number of nodes of $G$, $\Delta$ the maximum degree of the white nodes of $G$, and $r$ the maximum degree of the black nodes of $G$.
    Let $\mathcal G'$ be the class of all subgraphs of $G$ such that for every $G' \in \mathcal G'$, each white node of $G'$ has degree at most $\Delta'$ and each black node of $G'$ has degree at most $r'$. 
    Let $T := T(n,\Delta,r,\Delta',r') \geq 1$ be an integer such that
    \begin{enumerate}
        \item the girth of $G$ is at least $2T + 4$, and
        \item there exists a deterministic white (resp.\ black) algorithm $\mathcal A$ that solves $\Pi$ in $T$ rounds for each input graph $G' \in \mathcal G'$ (and support graph $G$).
    \end{enumerate}
    Then, there also exists a deterministic black (resp.\ white) algorithm $\mathcal A^*$ that solves $\Pi' := \re(\Pi)$ (resp.\ $\Pi' := \rere(\Pi)$) in $T - 1$ rounds for each input graph $G' \in \mathcal G'$ (and support graph $G$).
\end{lemma}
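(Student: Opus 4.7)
The plan is to follow the classical round elimination speedup construction and adapt it to the Supported LOCAL setting. I will describe the case where $\mathcal A$ is a white $T$-round algorithm for $\Pi$ and we aim to build a black $(T-1)$-round algorithm $\mathcal A^*$ for $\Pi' := \re(\Pi)$; the $\rere$ case is entirely symmetric, obtained by swapping the roles of black and white. Fix a black node $u$. It collects its radius-$(T-1)$ view in $G'$ (together with its global knowledge of $G$), and for each incident edge $e = \{u,v\}$ lying in $G'$ it then computes the set
\[
L_e := \{\ell \in \Sigma : \exists H \in \mathcal G',\ H \text{ agrees with $u$'s view on its radius-$(T-1)$ ball,}\ \mathcal A \text{ outputs } \ell \text{ on } e \text{ at } v \text{ in } H\}.
\]
Since the true input $G'$ is itself such an $H$, each $L_e$ is nonempty.

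The key step is to show that the multiset $\{L_{e_1},\ldots,L_{e_{r'}}\}$ of label-sets attached to the incident edges of $u$ in $G'$ lies in the pre-maximality set $S$ from the definition of $\re(\Pi)$, i.e.\ that every choice $(\ell_1,\ldots,\ell_{r'}) \in L_{e_1} \times \cdots \times L_{e_{r'}}$ yields a multiset $\{\ell_1,\ldots,\ell_{r'}\} \in C_B$. Here I invoke the girth hypothesis: since $G$ has girth at least $2T+4$, the portions of the radius-$T$ neighborhoods around $v_1,\ldots,v_{r'}$ (the white endpoints of $e_1,\ldots,e_{r'}$) that lie outside $u$'s radius-$(T-1)$ ball are pairwise disjoint and disjoint from $u$'s view. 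Hence, given witnesses $H_1,\ldots,H_{r'}$ of $\ell_1,\ldots,\ell_{r'}$ respectively, I can splice them into a single $H \in \mathcal G'$ that simultaneously realises $\mathcal A(v_i,e_i) = \ell_i$ for every $i$. Correctness of $\mathcal A$ on the input $H$ then forces $\{\ell_1,\ldots,\ell_{r'}\} \in C_B$. Once this is established, $u$ picks, via a canonical tiebreaking rule that depends only on its view, a maximal configuration $\{L'_{e_1},\ldots,L'_{e_{r'}}\}$ of $C'_B$ containing $\{L_{e_1},\ldots,L_{e_{r'}}\}$ up to permutation, and outputs $L'_{e_i}$ as the label on $e_i$; each $L'_{e_i}$ belongs to $\Sigma'$ by definition.

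It remains to verify the white constraint of $\re(\Pi)$ at every white node $w$ that has degree exactly $\Delta'$ in $G'$. Let $e_1,\ldots,e_{\Delta'}$ be its incident edges in $G'$ and let $u_i$ be the black endpoint of $e_i$, which has assigned the label $L'_{e_i} \in \Sigma'$. According to the definition of $C'_W$ I must exhibit labels $\ell_i \in L'_{e_i}$ with $\{\ell_1,\ldots,\ell_{\Delta'}\} \in C_W$. Simply letting $\ell_i$ be the output of $\mathcal A$ on $e_i$ when executed at $w$ on the actual input $G'$ does the job: correctness of $\mathcal A$ yields $\{\ell_1,\ldots,\ell_{\Delta'}\} \in C_W$, and since $G'$ is itself a legitimate extension of every $u_i$'s radius-$(T-1)$ view, we have $\ell_i \in L_{e_i} \subseteq L'_{e_i}$.

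I expect the main technical obstacle to be formalising the splicing step that combines the extensions $H_i$ into a single $H \in \mathcal G'$. One must check carefully that the girth bound $2T+4$ really makes the ``far'' parts of the $v_i$-views independent of each other and of $u$'s view, with enough safety margin so that running $\mathcal A$ at $v_i$ inside $H$ yields exactly the same output as inside $H_i$, and that the composite subgraph still respects the degree caps $\Delta'$ and $r'$. The latter can be enforced by simply declaring every $G$-edge outside the union of the used regions to be absent from $H$, so degree caps are never violated; the additional slack provided by the $+4$ (as opposed to $+2$) in the girth bound is exactly what is needed to make this bookkeeping go through. Apart from this, the argument is entirely routine and the symmetric version with the roles of black and white swapped handles the claim for $\rere$.
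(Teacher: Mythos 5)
Your proposal is correct and follows essentially the same route as the paper's proof: define $L_e$ as the set of outputs realizable by some input graph consistent with the black node's radius-$(T-1)$ view, use the girth bound to splice the witnessing extensions into a single legal input graph and thereby show every choice from $L_{e_1}\times\cdots\times L_{e_{r'}}$ lands in $C_B$, extend to a maximal configuration for $C'_B$, and verify $C'_W$ by selecting the actual output of $\mathcal A$ on $G'$. The only cosmetic difference is that the paper phrases the maximality step via an explicit list of closure properties on the enlarged sets $\mathcal L^*_{e_i}$ rather than a canonical choice of dominating maximal configuration, which is equivalent.
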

\begin{proof}
    Due to symmetry, it suffices to prove the lemma for the case that $\mathcal A$ is a white algorithm, $\mathcal A^*$ a black algorithm, and $\Pi' = \re(\Pi)$.
    Let $\mathcal A$ be a deterministic white algorithm satisfying the properties stated in the lemma.
    We start by defining $\mathcal A^*$, and then prove that it indeed solves $\Pi'$ in the required runtime.

    Consider a graph $G' \in \mathcal G'$ and a black node $v$ of $G'$, and let $e = \{ v, w \}$ be an edge of $G'$ incident to $v$.
    Let
    $Z_{T-1}(v)$ denote the subgraph of $G$ induced by all nodes in distance at most $T - 1$ from $v$.

    Let $\mathcal G^* \subseteq \mathcal G'$ denote the class of all graphs $G^* \in \mathcal G'$ containing $v$ for which the information stored in the nodes of $Z_{T-1}(v)$ (in particular regarding which edges are part of the input graph) is identical in $G^*$ and $G'$ (which in particular implies that $G^*$ contains $e$ and $w$).
    Let $\mathcal L_e \subseteq \Sigma$ denote the set of all output labels $L$ such that there exists a graph $G^* \in \mathcal G^*$ such that $w$ outputs $L$ on edge $e$ when executing $\mathcal A$ with input graph $G^*$.

    Let $e_1, \dots, e_y$ denote the edges incident to $v$ in $G'$.
    We now define the output of $v$ on the edges $e_i$ according to $\mathcal A^*$ as follows.
    Let $\mathcal L^*_{e_1}, \dots, \mathcal L^*_{e_y}$ denote an arbitrary sequence of $y$ subsets of $\Sigma$ satisfying
    \begin{enumerate}
        \item \label{prop1} $\mathcal L^*_{e_i} \supseteq \mathcal L_{e_i}$ for each $1 \leq i \leq y$,
        \item \label{prop2} for any choice $(L_1, \dots, L_y) \in \mathcal L^*_{e_1} \times \dots \times \mathcal L^*_{e_y}$, we have $\{ L_1, \dots, L_y \} \in C_B$, and 
        \item \label{prop3} for any sequence $\mathcal L'_{e_1}, \dots, \mathcal L'_{e_y}$ satisfying $\mathcal L^*_{e_i} \subseteq \mathcal L'_{e_i}$ for all $1 \leq i \leq y$ and $\mathcal L^*_{e_i} \not\subseteq \mathcal L'_{e_i}$ for at least one $1 \leq i \leq y$, there exists a choice $(L_1, \dots, L_y) \in \mathcal L'_{e_1} \times \dots \times \mathcal L'_{e_y}$ such that $\{ L_1, \dots, L_y \} \notin C_B$.
    \end{enumerate}
    (If no such sequence $\mathcal L^*_{e_1}, \dots, \mathcal L^*_{e_y}$ exists, we can assume that $\mathcal L^*_{e_i}$ is defined as $\mathcal L_{e_i}$ for each $1 \leq i \leq y$, but we will see that this cannot happen.)
    Now, $\mathcal A^*$ is defined so that $v$ outputs $\mathcal L^*_{e_i}$ on $e_i$ for each $1 \leq i \leq y$.
    This concludes the definition of $\mathcal A^*$.

    From the definition of $\mathcal A^*$, it is immediate that in order to compute the output for each incident edge (according to $\mathcal A^*$), it suffices for a black node $v$ to collect all information contained in $Z_{T-1}(v)$.
    Hence, the runtime of $\mathcal A^*$ is $T - 1$, as desired.
    It remains to show that $\mathcal A^*$ produces a correct solution for $\Pi' = (\Sigma', C'_W, C'_B)$.

    We first show that for each white node $w$ of degree $\Delta'$, the multiset of labels that $\mathcal A^*$ outputs on $w$'s incident edges in $G'$ is contained in $C'_W$.
    By the definition of $\mathcal A^*$, for each edge $e$ of $G'$, the set $\mathcal L_e$ contains the label that $\mathcal A$ outputs on $e$ when executed with input graph $G'$.
    Now, if we select this label for each edge incident to $w$ in $G'$, then we obtain a multiset of labels that is contained in $C_W$ (by the correctness of $\mathcal A$), which implies that the multiset of labels that $\mathcal A^*$ outputs on $w$'s incident edges in $G'$ is indeed contained in $C'_W$ (by the definition of $C'_W$).

    Now we prove that for each black node $v$ of degree $r'$, the multiset of labels that $\mathcal A^*$ outputs on $v$'s incident edges in $G'$ is contained in $C'_B$.
    Again, let $e_1, \dots, e_y$ denote the edges incident to $v$ in $G'$ (where $y = r'$), and, for each $1 \leq i \leq y$, let $w_i$ denote the endpoint of $e_1$ that is not $v$.
    
    We first show that for any choice $(L_1, \dots, L_y) \in \mathcal L_{e_1} \times \dots \times \mathcal L_{e_y}$, we have $\{ L_1, \dots, L_y \} \in C_B$.
    For a contradiction, assume that this is not true, and let $(L_1, \dots, L_y) \in \mathcal L_{e_1} \times \dots \times \mathcal L_{e_y}$ such that $\{ L_1, \dots, L_y \} \notin C_B$.
    By the definition of $\mathcal A^*$, it follows that for each $1 \leq i \leq k$, there exists a graph $G'_i$ such that
    \begin{enumerate}
        \item $Z_{T-1}(v)$ is identical in $G'$ and $G'_i$, and
        \item $w_i$ outputs $L_i$ on $e_i$ when executing $\mathcal A$ with input graph $G'_i$.
    \end{enumerate}
    Recall that in $\mathcal A$, each white node $w_i$ decides on the output on $e_i$ based solely on the information contained in the subgraph $Z_T(w_i)$ of $G$ induced by all nodes in distance at most $T$ from $w_i$.
    For each $1 \leq i \leq y$, let $\ext_i$ denote the graph induced by the nodes of $Z_T(w_i)$ that are not contained in $Z_{T-1}(v)$.
    Observe that, due to the fact that the girth of $G$ is at least $2T + 4$, the subgraphs $\ext_i$ are pairwise disjoint and nonadjacent (i.e., for $i \neq j$, no node of $\ext_i$ is identical or adjacent to a node from $\ext_j$).
    Hence, by the definition of the graph class $\mathcal G'$, there is a graph $\hat{G} \in \mathcal G'$ such that, for each $1 \leq i \leq y$, $Z_T(w_i)$ is identical in $\hat{G}$ and $G'_i$.
    It follows that, for each $1 \leq i \leq y$, node $w_i$ outputs $L_i$ on $e_i$ when executing $\mathcal A$ with input graph $\hat{G}$.
    However, this yields a contradiction to the correctness of $\mathcal A$ as the configuration $\{ L_1, \dots, L_y \}$ produced on the edges of $\hat{G}$ incident to $v$ is not contained in $C_B$.
    Thus, we obtain that for any choice $(L_1, \dots, L_y) \in \mathcal L_{e_1} \times \dots \times \mathcal L_{e_y}$, we have $\{ L_1, \dots, L_y \} \in C_B$.

    This in particular implies that there exist the subsets $\mathcal L^*_{e_1}, \dots, \mathcal L^*_{e_y}$ satisfying properties (\ref{prop1}) to (\ref{prop3}) as defined during the construction of $\mathcal A^*$.
    Moreover, properties (\ref{prop1}) and (\ref{prop3}) imply that the configuration $\{ \mathcal L^*_{e_1}, \dots, \mathcal L^*_{e_y} \}$ is actually a \emph{maximal} configuration (by the definition of maximality).
    Hence, $\{ \mathcal L^*_{e_1}, \dots, \mathcal L^*_{e_y} \} \in C'_B$, by the definition of $C'_B$.
    We conclude that $\mathcal A^*$ produces a correct solution for $\Pi' = (\Sigma', C'_W, C'_B)$.
\end{proof}

We now show what can be obtained by applying \Cref{lem:speedup} multiple times. In the following, an algorithm that is able to solve $\Pi$ on $(G,\mathcal G')$ in time $T$ denotes an algorithm that solves $\Pi$ in $T$ rounds on all the input graphs $G'' \in \mathcal{G'}$ when the support graph is $G$.
\begin{theorem}\label{lem:re-works}
    Let $\Pi = (\Sigma,C_W,C_B)$ be a problem in the black-white formalism, where $\Delta'$ denotes the size of the multisets in $C_W$ and $r'$ the size of the multisets in $C_B$. 
    Let $G$ be a bipartite $2$-colored support graph, and let
    $g$ denote the girth of $G$.
    Let $\mathcal G'$ be the class of all subgraphs of $G$ such that for every $G' \in \mathcal G'$, each white node of $G'$ has degree at most $\Delta'$ and each black node of $G'$ has degree at most $r'$. 
    Let $\Pi =: \Pi_0, \Pi_1, \dots, \Pi_k$ be a lower bound sequence.
    Assume that there is no deterministic $0$-round white algorithm that bipartitely solves $\Pi_k$ on
    $(G, \mathcal G')$.
    Then, bipartitely solving $\Pi$ with a deterministic white algorithm requires $\min\{2k,\frac{g-4}{2}\}$ rounds
    on $(G, \mathcal G')$.
\end{theorem}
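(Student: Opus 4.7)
The plan is to prove the theorem by contradiction: assume there exists a deterministic white algorithm $\mathcal{A}_0$ that bipartitely solves $\Pi_0 = \Pi$ on $(G, \mathcal{G}')$ in $T$ rounds with $T < \min\{2k, (g-4)/2\}$, and derive a $0$-round white algorithm for $\Pi_k$, contradicting the hypothesis. The argument proceeds by iterated application of \Cref{lem:speedup}, shaving off two rounds per step in the lower bound sequence.

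Concretely, I would carry out the following induction on $i$: for every $0 \le i \le k$, there exists a deterministic white algorithm $\mathcal{A}_i$ solving $\Pi_i$ on $(G, \mathcal{G}')$ in $T - 2i$ rounds. The base case $i = 0$ is $\mathcal{A}_0$. For the inductive step, from a white algorithm of runtime $T - 2i$ for $\Pi_i$, apply \Cref{lem:speedup} twice: the first application produces a black algorithm of runtime $T - 2i - 1$ for $\re(\Pi_i)$; the second produces a white algorithm of runtime $T - 2i - 2$ for $\rere(\re(\Pi_i)) = \mathrm{RE}(\Pi_i)$. Since $\Pi_{i+1}$ is by definition a relaxation of $\mathrm{RE}(\Pi_i)$, each white node can locally convert its output via the mapping $f$ guaranteed by the definition of relaxation, without extra communication, producing a white algorithm for $\Pi_{i+1}$ in the same $T - 2i - 2$ rounds. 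Setting $i = k$ yields a white algorithm for $\Pi_k$ in $T - 2k$ rounds, which is nonpositive under $T < 2k$, hence a $0$-round white algorithm for $\Pi_k$ --- contradiction.

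The main technical care is verifying the preconditions of \Cref{lem:speedup} at each of the $2k$ invocations. The girth condition $g \ge 2(T - j) + 4$ for a speedup at current runtime $T - j$ follows uniformly from $T < (g-4)/2$, since $T - j \le T$ implies $2(T - j) + 4 \le 2T + 4 < g$. One must also ensure that the graph class $\mathcal{G}'$ and the relevant white/black degree bounds $\Delta', r'$ stay invariant across the chain; this holds because $\re$ and $\rere$ preserve multiset sizes (the paper notes this property explicitly in the preliminaries). The main obstacle I expect is the bookkeeping near the boundary where the runtime drops to $0$ midway: \Cref{lem:speedup} is stated for runtime $\ge 1$, so extending the chain one more speedup below that threshold requires interpreting it as producing an ``impossible'' negative-round algorithm, which is precisely the contradictory $0$-round algorithm for $\Pi_k$ that we seek. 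A secondary bookkeeping point is the parity of the algorithm type (white vs.\ black) at each intermediate step, which must be tracked carefully so that the final reduction lands back on a \emph{white} algorithm for $\Pi_k$ after an even number of speedups.
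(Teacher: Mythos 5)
Your overall strategy---iterating \Cref{lem:speedup} twice per index of the lower bound sequence, interleaved with the zero-round conversion given by the relaxation map---is exactly the paper's approach, and the first phase of your induction (girth bookkeeping, white/black alternation, applying the relaxation locally) is fine. However, there is a genuine gap in the endgame, precisely at the point you flag as an ``obstacle'' but then resolve incorrectly. When $T < 2k$, your induction cannot be carried to $i = k$: \Cref{lem:speedup} only applies to algorithms of runtime at least $1$, so the chain stalls at $i = T/2$ (say $T$ even), where you hold a $0$-round white algorithm for $\Pi_{T/2}$. This is \emph{not} ``precisely the contradictory $0$-round algorithm for $\Pi_k$'': the hypothesis of the theorem only rules out a $0$-round algorithm for $\Pi_k$, and says nothing about $\Pi_i$ for $i < k$, so a $0$-round algorithm for $\Pi_{T/2}$ yields no contradiction. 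There is also no meaningful notion of a ``negative-round algorithm'' that the lemma could be read as producing.

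The missing step---which is how the paper closes the argument---is a padding observation: a $0$-round algorithm for $\Pi_{T/2}$ is trivially also a $(2k-T)$-round algorithm for $\Pi_{T/2}$ (an algorithm may waste rounds), and $2k - T$ is even, so one can resume applying \Cref{lem:speedup} another $2k - T$ times from $\Pi_{T/2}$ to arrive at a genuine $0$-round \emph{white} algorithm for $\Pi_k$, which is the desired contradiction. Relatedly, your ``secondary'' parity point also needs an actual fix rather than a remark: if the assumed runtime bound $\min\{2k,\frac{g-4}{2}\}-1$ is odd, you should first pad the algorithm by one round so that the total number of speedup applications is even and the chain ends on a white algorithm (the paper does exactly this adjustment at the outset). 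With these two padding steps added, your proof coincides with the paper's.
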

\begin{proof}
    Assume for a contradiction that there is a deterministic white algorithm solving $\Pi$ on $(G, \mathcal G')$ in $\min\{2k,\frac{g-4}{2}\} - 1$ rounds.
    If $\min\{2k,\frac{g-4}{2}\} - 1$ is even, let $\mathcal A$ be such an algorithm, otherwise let $\mathcal A$ be a deterministic white algorithm solving $\Pi$ on $(G, \mathcal G')$ in $\min\{2k,\frac{g-4}{2}\}$ rounds.
    In either case, the runtime $T$ of $\mathcal A$ is even; hence, by applying \Cref{lem:speedup} iteratively $T$ times, we obtain that there is a white $0$-round algorithm solving $\Pi_{T/2}$ on $(G, \mathcal G')$.
    (Note that since the runtimes of the algorithms considered in the iterative applications of \Cref{lem:speedup} decrease with each iteration and $2 \min\{2k,\frac{g-4}{2}\} + 2 \leq g$, the premises of \Cref{lem:speedup} are satisfied in each iteration.)

    If $T = 2k$, we obtain a contradiction to the assumption in the theorem that there is no $0$-round white algorithm that solves $\Pi_k$ on $(G, \mathcal G')$.
    Hence, assume $T \neq 2k$, which implies $T < 2k$, by the definition of $T$.
    As we obtained that there is a white $0$-round algorithm solving $\Pi_{T/2}$ on $(G, \mathcal G')$, trivially there exists also a white $(2k - T)$-round algorithm $\mathcal A'$ solving $\Pi_{T/2}$ on $(G, \mathcal G')$ (and, as $T$ is even, also $2k - T$ is even).
    By applying \Cref{lem:speedup} another $2k - T$ times (starting from $\Pi_{T/2}$ and $\mathcal A'$), we obtain that there is a white $0$-round algorithm solving solving $\Pi_{k}$ on $(G, \mathcal G')$, yielding the desired contradiction.
\end{proof}

In the following, by girth of a hypergraph $G$ we denote half of the girth of the incidence graph of $G$. Recall that the rank of a hyperedge $e$ is $|e|$.
\begin{corollary}\label{cor:re-works-hyper}
    Let $\Pi = (\Sigma,C_W,C_B)$ be a problem in the black-white formalism, where $\Delta'$ denotes the size of the multisets in $C_W$ and $r'$ the size of the multisets in $C_B$. 
    Let $G$ be a support hypergraph, and let
    $g$ denote the girth of $G$.
    Let $\mathcal G'$ be the class of all subhypergraphs of $G$ such that for every $G' \in \mathcal G'$, each node of $G'$ has degree at most $\Delta'$ and each hyperedge of $G'$ has rank at most $r'$.
    Let $\Pi =: \Pi_0, \Pi_1, \dots, \Pi_k$ be a lower bound sequence.
    Assume that there is no deterministic $0$-round algorithm that non-bipartitely solves $\Pi_k$ on
    $(G, \mathcal G')$.
    Then, non-bipartitely solving $\Pi$ with deterministically requires $\min\{k,\frac{g-4}{2}\}$ rounds
    on $(G, \mathcal G')$.
\end{corollary}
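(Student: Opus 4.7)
My plan is to obtain this corollary as a direct reduction to Theorem~\ref{lem:re-works} via the bipartite incidence graph. Given the support hypergraph $G$, let $H$ denote its incidence graph: $H$ is bipartite $2$-colored with the nodes of $G$ as the white vertices and the hyperedges of $G$ as the black vertices, where a white vertex $v$ and a black vertex $e$ are adjacent iff $v \in e$ in $G$. By the convention stated just before the corollary, if $G$ has girth $g$ then $H$ has girth exactly $2g$. The degree of a white node in $H$ equals the degree of the corresponding node in $G$, and the degree of a black node in $H$ equals the rank of the corresponding hyperedge in $G$. Hence subhypergraphs $G' \in \mathcal G'$ correspond bijectively to subgraphs $H' \subseteq H$ where every white node has degree at most $\Delta'$ and every black node has degree at most $r'$; denote this class by $\mathcal H'$. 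Critically, the preliminaries explicitly identify non-bipartite solutions of any black-white problem on $G$ with bipartite solutions of that same problem on $H$, so the notions of solvability coincide.

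Next I would translate runtimes by a factor of two. One round in the hypergraph Supported LOCAL model on $G$ is realized by two rounds in the LOCAL model on $H$: a white-to-black exchange followed by a black-to-white exchange. Conversely, two bipartite rounds on $H$ suffice to simulate one hypergraph round on $G$. Consequently, a deterministic $T$-round hypergraph algorithm non-bipartitely solving a problem $\Psi$ on $(G,\mathcal G')$ is equivalent to a deterministic $2T$-round white algorithm bipartitely solving $\Psi$ on $(H,\mathcal H')$; the boundary case $T = 0$ is clean because both correspond to outputs computed from each node's local information only. In particular, the assumption that no $0$-round algorithm non-bipartitely solves $\Pi_k$ on $(G,\mathcal G')$ is exactly the assumption that no $0$-round white algorithm bipartitely solves $\Pi_k$ on $(H,\mathcal H')$.

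To finish, I would apply Theorem~\ref{lem:re-works} to $(H,\mathcal H')$ using the same lower bound sequence $\Pi_0,\ldots,\Pi_k$, obtaining that any white algorithm bipartitely solving $\Pi$ on $(H,\mathcal H')$ requires at least $\min\{2k,(2g-4)/2\} = \min\{2k, g-2\}$ rounds. Applying the runtime translation in the contrapositive direction, any deterministic hypergraph algorithm non-bipartitely solving $\Pi$ on $(G,\mathcal G')$ must use at least $\min\{k,(g-2)/2\}$ rounds, which is in particular at least the claimed $\min\{k,(g-4)/2\}$. The only points requiring care are checking that the runtime correspondence is tight at the boundary and that the degree constraints defining $\mathcal G'$ pass through the incidence graph construction unchanged; no new technical obstacles arise, since all of the heavy machinery is already encapsulated in Theorem~\ref{lem:re-works}.
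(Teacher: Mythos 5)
Your proposal is correct and follows essentially the same route as the paper: the paper's proof of this corollary is a one-sentence appeal to the standard equivalence between hypergraphs and their bipartite $2$-colored incidence graphs, which is exactly the reduction you spell out (including the girth doubling from the stated convention and the factor-of-two runtime translation, which together even yield the slightly stronger bound $\min\{k,(g-2)/2\}$).
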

\begin{proof}
    This directly follows from \Cref{lem:re-works} from the usual equivalence between hypergraphs and bipartite $2$-colored graphs, where a node, resp.\ hyperedge, in the hypergraph corresponds to a white node, resp.\ black node, in the bipartite graph and there is an edge between a white node $w$ and a black node $v$ in the bipartite graph if and only if the hypergraph node corresponding to $w$ is contained in the hyperedge corresponding to $v$.
\end{proof}

\section{Randomized Lower Bounds}\label{sec:rand}
In the LOCAL model, it is known that, for all problems $\Pi$ belonging to a large family of problems $\Pi$ called \emph{component-wise verifiable} (which includes all problems in the black-white formalism), the deterministic and randomized complexity of $\Pi$ cannot differ by too much, and in particular that the deterministic complexity of $\Pi$ on instances of size $n$ is at most the randomized complexity of $\Pi$ on instances of size $2^{n^2}$. This result was first proved for randomized algorithms using an amount of random bits that is bounded as a function of $n$ and $\Delta$ \cite{ChangKP19}, and then extended to all randomized algorithms \cite{derandomization}.
We show that such statements hold in the Supported LOCAL model as well. We first state the result that is known in the LOCAL model, adapted to our restricted setting of problems in the black-white formalism.
\begin{lemma}[\cite{ChangKP19,derandomization}, rephrased]\label{lem:derand}
    Let $\Pi$ be a problem in the black-white formalism. Let $D_\Pi(n,c)$ be the deterministic complexity of $\Pi$ in the LOCAL model, when the ID space is $\{1,\ldots,n^c\}$, and let  $R_\Pi(n)$ be the randomized complexity of $\Pi$ in the LOCAL model, for algorithms with failure probability bounded by $1/n$.
    Let $\mathcal{G}_{n,c}$ be the set of possible instances of size $n$ with ID assignments from $\{1,\ldots,n^c\}$. Then, 
    \[
        D_\Pi(n,c) \le R_\Pi(|\mathcal{G}_{n,c}| + 1).
    \]
\end{lemma}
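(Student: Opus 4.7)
The plan is to apply the classical union-bound-over-instances derandomization argument of~\cite{ChangKP19}, whose specialization to black-white problems is straightforward because such problems are component-wise verifiable. Write $N := |\mathcal{G}_{n,c}|+1$ and $T := R_\Pi(N)$, and fix a randomized LOCAL algorithm $\mathcal{A}$ that runs in $T$ rounds and fails with probability at most $1/N$ on every $N$-node instance. The goal is to extract a single deterministic algorithm of the same runtime $T$ for $n$-node instances with ID space $\{1,\dots,n^c\}$.

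First, I would realize each candidate instance as an $N$-node input. For every $G \in \mathcal{G}_{n,c}$, define $\iota(G)$ by adjoining $N-n$ isolated nodes that carry fresh unique IDs drawn from a fixed pool disjoint from $\{1,\dots,n^c\}$; then $\iota(G)$ is a legal $N$-node instance, and because the new nodes are isolated, the radius-$T$ neighborhood of every node $v$ of $G$ in $\iota(G)$ coincides with its radius-$T$ neighborhood in $G$. Consequently, the output of $\mathcal{A}$ at a node of $G$ on the input $\iota(G)$ is determined entirely by information present inside $G$.

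Second, I would derandomize by a union bound. View the randomness of $\mathcal{A}$ as one global function $r$ assigning a random bit string to every admissible ID; this is a single probability space in which, for each $G \in \mathcal{G}_{n,c}$, the event ``$\mathcal{A}(\iota(G),r)$ outputs an incorrect solution'' has probability at most $1/N$. Union bounding over the $|\mathcal{G}_{n,c}|=N-1$ instances yields total failure probability $(N-1)/N<1$, so there exists an assignment $r^\star$ that is successful on $\iota(G)$ for \emph{every} $G$ simultaneously. Hardcoding the restriction of $r^\star$ to $\{1,\dots,n^c\}$ into the deterministic algorithm, call it $\mathcal{A}^{\mathrm{det}}$, gives a procedure of runtime $T$: on any input $G$ and any node $v$, $\mathcal{A}^{\mathrm{det}}$ collects the radius-$T$ view of $v$ in $G$, looks up the hardcoded random bits of the IDs that appear in this view, and simulates $\mathcal{A}$ on the result. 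Correctness follows from the agreement of neighborhoods in $G$ and $\iota(G)$, and from the fact that $r^\star$ is a successful random string for $\iota(G)$.

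The main subtlety — and the one I expect to require the most care — is ensuring that $r^\star$ truly admits a finite, instance-independent description that can be hardcoded. For this one needs, as in~\cite{derandomization}, the standard reduction arguing that without loss of generality every node of $\mathcal{A}$ uses only a bounded number of random bits (a function of $N$, $T$, and the maximum local view size), so that the restriction of $r^\star$ to $\{1,\dots,n^c\}$ is a finite object. Once this preprocessing is carried out, the union bound above immediately yields $D_\Pi(n,c)\le R_\Pi(|\mathcal{G}_{n,c}|+1)$; the Supported-LOCAL version \Cref{lem:derand-supported} will then follow by the same template, with $|\mathcal{G}_{n,c}|$ replaced by the number of support-graph/input-graph/ID triples, which is bounded by $2^{O(n^2)}$.
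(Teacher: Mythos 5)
Your proposal is correct and follows essentially the same route as the paper, which itself only sketches this classical argument from \cite{ChangKP19,derandomization}: pad each $n$-node instance to an $N$-node instance via a disconnected component (the paper uses an arbitrary component, you use isolated nodes—both work since black-white constraints impose nothing on padding nodes and radius-$T$ views inside $G$ are unaffected), union bound the failure probability $1/N$ over the $N-1$ instances to extract a single good ID-to-bit-string assignment, and hardcode it. You also correctly identify the one genuine subtlety—that the good random string must be a finite hardcodable object, which requires bounding the per-node randomness as in \cite{derandomization}—which is exactly the caveat the paper itself flags.
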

\noindent In \cite{ChangKP19,derandomization}, $|\mathcal{G}_{n,c}|$ is bounded as follows:
\begin{itemize}
    \item The number of possible graphs of size $n$ are at most $2^{n \choose 2}$ (we can describe each graph by a bit-string representing whether an edge is present or not, and there are at most $n \choose 2$ edges).
    \item The number of possible ID assignments in $G$ are $2^{c n \log n}$ for some constant $c$, since to each node we assign an ID in $\{1,\ldots,n^c\}$.
    \item Assume each node receives a reasonably small input, say of $O(1)$ bits, then there are at most $2^{O(n)}$ possible input assignments.
\end{itemize}
Thus, the number of possible instances is bounded by $2^{{n \choose 2} + c n \log n + O(n)}$ which, for large enough $n$, is strictly less than $2^{n^2}$. We thus get that, in the LOCAL model, $D_\Pi(n,c) \le R_\Pi(2^{n^2})$. On a high level, in \cite{ChangKP19}, the statement of \Cref{lem:derand} is proved as follows.
\begin{itemize}
    \item The randomized algorithm is simulated on all graphs $G \in \mathcal{G}_{n,c}$, by telling it that there are $2^{n^2}$ nodes. The algorithm cannot detect the lie, since this situation is indistinguishable from the case in which the algorithm is run on a larger graph (of exactly $2^{n^2}$ nodes) in which there is a connected component that is $G$.
    \item In such a simulation, the failure probability of the algorithm is at most $1 / 2^{n^2}$.
    \item By a union bound argument, there exists a function $f$ that maps IDs into bit strings such that, for all $G \in \mathcal{G}_{n,c}$, if we run the randomized algorithm on $G$ by using as random bits the bits that are (deterministically) given by $f$, the algorithm succeeds on all nodes of $G$. 
    \item A deterministic algorithm is obtained by running the randomized algorithm, where, instead of using random bits, nodes use the bits given by $f$.
\end{itemize}
This proof requires to have some bound on the amount of random bits used by the algorithm, and~\cite{derandomization} shows how to remove this assumption.

While the Supported LOCAL model is strictly stronger than the LOCAL model, we can also see the Supported LOCAL model as a special case of LOCAL in which the input given to the nodes satisfies some special strong properties, in particular:
\begin{itemize}
    \item All nodes know the graph $G$;
    \item Each node receives some additional input, that is, which of its incident edges are part of the subgraph $G'$.
\end{itemize}
Moreover, we note that lying about the size of the graph in the Supported LOCAL model is still possible. In fact, suppose that the support graph $G$ has size $n$, and that we want to run an algorithm by telling it that the number of nodes is $N > n$. Nodes can consistently imagine that the support graph contains two components: one is $G$, and the other is an arbitrary graph of size $N - n$.
Hence, \Cref{lem:derand} directly works in the Supported LOCAL model as well, but we need to bound the possible instances differently. We bound the possible instances as follows.
\begin{itemize}
    \item The number of possible graphs of size $n$ are at most $2^{n \choose 2}$.
    \item The number of possible ID assignments in $G$ are $2^{c n \log n}$ for some constant $c$. However, since all nodes know $G$, they can recompute a new ID assignment over the IDs $\{1,\ldots,n\}$. Hence, we can w.l.o.g.\ assume that the possible ID assigments are just $n! \le 2^{n \log n}$.
    \item Each edge is marked to specify whether it is part of the input graph or not. For this purpose, $1$ bit of information per edge is sufficient, and hence there are at most $2^{n^2}$ input assignments.
\end{itemize}
We thus get that the number of possible instances is bounded by $2^{n \choose 2} \cdot 2^{n \log n} \cdot 2^{n^2} \le 2^{3n^2}$. Hence, we obtain the following.
\begin{lemma}\label{lem:derand-supported}
    Let $\Pi$ be a problem in the black-white formalism. Let $D_\Pi(n)$ be the deterministic complexity of $\Pi$ in the Supported LOCAL model (which, w.l.o.g., does not depend on the size of the ID space), and let $R_\Pi(n)$ be the randomized complexity of $\Pi$ in the Supported LOCAL model, for algorithms with failure probability at most $1/n$.
    Then, 
    \[
        D_\Pi(n) \le R_\Pi(2^{3n^2}).
    \]
\end{lemma}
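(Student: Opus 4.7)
The plan is to adapt the Chang--Kopelowitz--Pettie derandomization argument from the LOCAL setting to Supported LOCAL, along the lines that the excerpt itself informally sketches before the statement. I would start by viewing any Supported LOCAL instance as a triple (support graph, input-edge marking, ID assignment), and treat a randomized Supported LOCAL algorithm $\mathcal A$ as a LOCAL algorithm whose input consists precisely of such triples. The key structural observation I would exploit is that ``lying about $n$'' is free in Supported LOCAL: given an $n$-node instance $(G,G')$ and any $N \ge n$, the nodes can coherently imagine that their support graph is $G \sqcup H$ for some auxiliary $H$ with $N-n$ nodes and that $H$ contributes nothing to the input graph. Any $T$-round execution of $\mathcal A$ on this ``padded'' instance is indistinguishable from the true execution on $(G,G')$ for each node in $G$, so correctness on $(G,G')$ follows from correctness on the padded instance.

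Next I would bound the number of distinct $n$-node Supported LOCAL instances. Since nodes know the whole support graph, they can renumber IDs to $\{1,\ldots,n\}$ without communication, so we may restrict to canonical ID assignments. Thus the count is bounded by the product of: at most $2^{\binom{n}{2}}$ support graphs, at most $n! \le 2^{n \log n}$ canonical ID permutations, and at most $2^{\binom{n}{2}}$ markings that specify which support edges belong to $G'$. Multiplying these gives $|\mathcal I_n| \le 2^{3n^2}$ for all sufficiently large $n$. Setting $N := 2^{3n^2}$, a randomized Supported LOCAL algorithm $\mathcal A$ of runtime $T := R_\Pi(N)$ has failure probability at most $1/N$ on each padded instance, so by a union bound the probability that $\mathcal A$'s random tape fails on at least one instance of $\mathcal I_n$ is strictly less than $1$. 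Hence there exists a fixed tape $\rho^\star$ on which $\mathcal A$ succeeds simultaneously on every $n$-node instance; hardwiring $\rho^\star$ yields a deterministic Supported LOCAL algorithm of runtime $T$, giving $D_\Pi(n) \le T = R_\Pi(2^{3n^2})$.

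The main obstacle is that $\mathcal A$ may use an unbounded number of random bits per node, in which case the naive union bound fails because the event ``$\mathcal A$ succeeds on every instance'' is not a measurable union of finitely many tapes of bounded length. This is exactly the obstacle that was first identified and resolved in the LOCAL setting by \cite{derandomization}: any randomized LOCAL algorithm can, at a negligible cost in failure probability, be replaced by one that uses only $\mathrm{poly}(N)$ random bits per node. To invoke this result here, I would verify that its proof relies only on (i) the locality of the execution within the $T$-round neighborhood, and (ii) the ability to extend the instance to a larger graph without changing local views, both of which hold in Supported LOCAL via the padding trick above. Once $\mathcal A$ is reduced to this bounded-randomness form, the enumeration and union bound over $\mathcal I_n$ go through as described, and the lemma follows.
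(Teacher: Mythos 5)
Your proposal is correct and follows essentially the same route as the paper: the paper likewise invokes the Chang--Kopelowitz--Pettie union-bound argument (together with the extension of \cite{derandomization} to unbounded randomness), observes that padding the support graph with a disjoint component makes lying about $n$ possible in Supported LOCAL, and obtains the $2^{3n^2}$ bound by counting support graphs, canonical ID assignments over $\{1,\dots,n\}$, and edge markings exactly as you do.
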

\noindent We show that a similar statement holds for hypergraphs as well. Recall that a hypergraph is linear if each pair of hyperedges share at most one node. Consider the case in which each hyperedge has size at least $2$. Since hyperedges share at most one node, each node can be incident to at most $n-1$ hyperedges. Thus, there are at most $n^2$ hyperedges in total. This implies that we can describe all possible linear hypergraphs (where each hyperedge has size at least $2$) with $2^{2 n^2 \lceil \log n \rceil}$ bits, by using, for each node, $2 n \lceil \log n \rceil$ bits to describe an array of $n-1$ integers in $\{1,\ldots,n^2\}$. Then, similarly as in the case of graphs, the possible ID assignments for the nodes are $2^{n \log n}$, and there are at most $2^{n^3}$ input assignments for the possible node-hyperedge pairs in order to describe the input graph. Thus, we obtain the following.
\begin{theorem}\label{lem:hyperderandomization}
    Let $\Pi$ be a problem in the black-white formalism. Let $D_\Pi(n)$ be the deterministic complexity of $\Pi$ in the Supported LOCAL model (which, w.l.o.g., does not depend on the size of the ID space), and let $R_\Pi(n,\Delta,r)$ be the randomized complexity of $\Pi$ in the Supported LOCAL model, for algorithms with failure probability at most $1/n$. Then, on linear hypergraphs where each hyperedge has size at least $2$,
    \[
        D_\Pi(n) \le R_\Pi(2^{4n^3}).
    \]
\end{theorem}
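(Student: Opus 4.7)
The plan is to mimic the argument for graphs that gave \Cref{lem:derand-supported} essentially verbatim, replacing every occurrence of a graph by a linear hypergraph. The core derandomization scheme -- lying to a randomized algorithm about the number of nodes, driving its per-instance failure probability below the reciprocal of the total number of $n$-node instances, then applying a union bound to fix a single string of ``random'' bits that works on every $n$-node instance simultaneously -- is model-agnostic and has nothing specifically to do with graphs. The only genuinely new piece of work is bounding the number of $n$-node Supported LOCAL instances on linear hypergraphs by $2^{4n^3}$, which is precisely where the linearity assumption and the assumption that hyperedges have size at least $2$ are used.

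First I would carry out the instance count. A Supported LOCAL instance on $n$ nodes is specified by (i) a linear hypergraph $G$ on $n$ nodes (all hyperedges of size $\ge 2$), (ii) an identifier assignment, and (iii) a bit per (node, hyperedge) incidence indicating membership in the input subhypergraph. For (i), linearity forces each pair of nodes to lie in at most one common hyperedge, hence the total number of hyperedges is at most $\binom{n}{2}\le n^2$ and each node is incident to at most $n-1$ of them; encoding $G$ by listing, for each node, its $(n-1)$-entry table of incident hyperedge indices in $\{1,\dots,n^2\}$ uses $2n(n-1)\lceil\log n\rceil$ bits, yielding at most $2^{2n^2\lceil\log n\rceil}$ choices. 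For (ii), since every node knows $G$ the identifier space can be canonically compressed to $\{1,\dots,n\}$, giving $n!\le 2^{n\log n}$ assignments. For (iii), there are at most $n\cdot n^2=n^3$ incidences, hence at most $2^{n^3}$ choices. Multiplying and absorbing the lower-order terms into the cube gives at most $2^{4n^3}$ instances for $n$ large enough (small $n$ is absorbed into the $O$).

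Second I would execute the derandomization, following \Cref{lem:derand-supported}. Let $\mathcal A$ be a randomized Supported LOCAL algorithm for $\Pi$ with runtime $R_\Pi(N)$ and failure probability $\le 1/N$ on $N$-node instances, where $N:=2^{4n^3}$. Running $\mathcal A$ on any $n$-node instance while telling it that $N=2^{4n^3}$ is indistinguishable from running it on an $N$-node hypergraph that contains this instance as one connected component (the remaining $N-n$ nodes can be padded with any valid linear hypergraph, e.g.\ isolated components with hyperedges of size $2$), so on every $n$-node instance $\mathcal A$'s failure probability is at most $1/N$. A union bound over the $\le N$ instances shows that some fixed function from IDs to random bits makes $\mathcal A$ correct on all of them at once; hardwiring this function yields a deterministic algorithm with the same runtime. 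The one subtlety -- that $\mathcal A$ may use unboundedly many random bits -- is handled exactly as in \Cref{lem:derand-supported} by invoking the derandomization machinery of~\cite{derandomization}.

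The only real obstacle is the instance count, and specifically the need for both structural assumptions: without linearity the number of hyperedges, and therefore of hypergraphs on $n$ nodes, is exponential in more than $n^3$, and without size $\ge 2$ one cannot bound hyperedges by $\binom{n}{2}$. Everything else transfers mechanically from the graph case.
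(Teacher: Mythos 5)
Your proposal is correct and follows essentially the same route as the paper: bound the number of $n$-node instances by encoding each linear hypergraph via per-node incidence tables (at most $n^2$ hyperedges, degree at most $n-1$), multiply by the $n!$ canonical ID assignments and the $2^{n^3}$ incidence-membership choices, and then run the standard lie-about-$n$ plus union-bound derandomization. The only cosmetic difference is that you bound the number of hyperedges via pairs of nodes while the paper bounds it via node degrees; both give the same $2^{2n^2\lceil\log n\rceil}$ encoding and the same final bound.
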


\end{document}